\newcommand{\openone}{\leavevmode\hbox{\small1\normalsize\kern-.33em1}}
\def\UrlSpecials{\do\~{\kern -.15em\lower .7ex\hbox{~}\kern .04em}} \catcode`~=13 
\newcommand{\nn}{\nonumber}
\newcommand{\calA}{\mathcal{A}}
\newcommand{\calB}{\mathcal{B}}
\newcommand{\calC}{\mathcal{C}}
\newcommand{\calE}{\mathcal{E}}
\newcommand{\calF}{\mathcal{F}}
\newcommand{\calG}{\mathcal{G}}
\newcommand{\calI}{\mathcal{I}}
\newcommand{\calJ}{\mathcal{J}}
\newcommand{\calN}{\mathcal{N}}
\newcommand{\calP}{\mathcal{P}}
\newcommand{\calS}{\mathcal{S}}
\newcommand{\calT}{\mathcal{T}}
\newcommand{\calU}{\mathcal{U}}
\newcommand{\calV}{\mathcal{V}}
\newcommand{\calX}{\mathcal{X}}
\newcommand{\calY}{\mathcal{Y}}
\newcommand{\calZ}{\mathcal{Z}}
\newcommand{\bn}{\mathbf{n}}
\newcommand{\bs}{\mathbf{s}}
\newcommand{\bS}{\mathbf{S}}
\newcommand{\bv}{\mathbf{v}}
\newcommand{\bV}{\mathbf{V}}
\newcommand{\bw}{\mathbf{w}}
\newcommand{\bx}{\mathbf{x}}
\newcommand{\bX}{\mathbf{X}}
\newcommand{\rms}{\mathrm{s}}
\newcommand{\bbC}{\mathbb{C}}
\newcommand{\bbH}{\mathbb{H}}
\newcommand{\bbN}{\mathbb{N}}
\newcommand{\bbR}{\mathbb{R}}
\DeclareMathAlphabet{\mathbsf}{OT1}{cmss}{bx}{n}
\DeclareMathAlphabet{\mathssf}{OT1}{cmss}{m}{sl}
\DeclareSymbolFont{bsfletters}{OT1}{cmss}{bx}{n}  
\DeclareSymbolFont{ssfletters}{OT1}{cmss}{m}{n}
\DeclareMathSymbol{\bsfGamma}{0}{bsfletters}{'000}
\DeclareMathSymbol{\ssfGamma}{0}{ssfletters}{'000}
\DeclareMathSymbol{\bsfDelta}{0}{bsfletters}{'001}
\DeclareMathSymbol{\ssfDelta}{0}{ssfletters}{'001}
\DeclareMathSymbol{\bsfTheta}{0}{bsfletters}{'002}
\DeclareMathSymbol{\ssfTheta}{0}{ssfletters}{'002}
\DeclareMathSymbol{\bsfLambda}{0}{bsfletters}{'003}
\DeclareMathSymbol{\ssfLambda}{0}{ssfletters}{'003}
\DeclareMathSymbol{\bsfXi}{0}{bsfletters}{'004}
\DeclareMathSymbol{\ssfXi}{0}{ssfletters}{'004}
\DeclareMathSymbol{\bsfPi}{0}{bsfletters}{'005}
\DeclareMathSymbol{\ssfPi}{0}{ssfletters}{'005}
\DeclareMathSymbol{\bsfSigma}{0}{bsfletters}{'006}
\DeclareMathSymbol{\ssfSigma}{0}{ssfletters}{'006}
\DeclareMathSymbol{\bsfUpsilon}{0}{bsfletters}{'007}
\DeclareMathSymbol{\ssfUpsilon}{0}{ssfletters}{'007}
\DeclareMathSymbol{\bsfPhi}{0}{bsfletters}{'010}
\DeclareMathSymbol{\ssfPhi}{0}{ssfletters}{'010}
\DeclareMathSymbol{\bsfPsi}{0}{bsfletters}{'011}
\DeclareMathSymbol{\ssfPsi}{0}{ssfletters}{'011}
\DeclareMathSymbol{\bsfOmega}{0}{bsfletters}{'012}
\DeclareMathSymbol{\ssfOmega}{0}{ssfletters}{'012}
\newcommand{\hatbs}{\hat{\bs}}
\newcommand{\hatbS}{\hat{\bS}}
\newcommand{\hatbv}{\hat{\bv}}
\newcommand{\hatbV}{\hat{\bV}}
\newtheorem{theorem}{Theorem}
\newtheorem{definition}{Definition}
\newcommand{\bbo}{\mathbbm{1}}
\def\BibTeX{{\rm B\kern-.05em{\sc i\kern-.025em b}\kern-.08em T\kern-.1667em\lower.7ex\hbox{E}\kern-.125emX}}
\definecolor{Dyellow}{RGB}{254,152,0}
\definecolor{Dgreen}{RGB}{0,176,80}
\newcommand{\blue}[1]{\textcolor{blue}{#1}}
\newcommand{\Rmnum}[1]{\expandafter\@slowromancap\romannumeral #1@}
\def\BibTeX{{\rm B\kern-.05em{\sc i\kern-.025em b}\kern-.08em
T\kern-.1667em\lower.7ex\hbox{E}\kern-.125emX}}
\begin{document}
\title{Resolution Limits of Non-Adaptive 20 Questions Estimation for Tracking Multiple Moving Targets\\
}

\author{Chunsong Sun, Lin Zhou, Jingjing Wang, Weijie Yuan, Chunxiao Jiang and Alfred Hero\\

\thanks{Parts of this paper were submitted to IEEE ITW 2025.}

\thanks{C. Sun, L. Zhou and J. Wang are with the School of Cyber Science and Technology, Beihang University, Beijing, China, 100191, (Emails: \{sunchunsong, lzhou, drwangjj\}@buaa.edu.cn). W. Yuan is with School of Automation and Intelligent Manufacturing, Southern University of Science and Technology, Shenzhen, China, 518055, (Email: yuanwj@sustech.edu.cn). C. Jiang is with Beijing National Research Center for Information Science and Technology (BNRist), Tsinghua University, Beijing, China, 100084 (Email: jchx@tsinghua.edu.cn). Alfred Hero is with the Department of Electrical Engineering and Computer Science, University of Michigan, Ann Arbor, MI, USA, 48109-2122 (Email: hero@eecs.umich.edu.).}

}

\maketitle

\begin{abstract}
Motivated by the practical application of beam tracking of multiple devices in Multiple Input Multiple Output (MIMO) communication, we study the problem of non-adaptive twenty questions estimation for locating and tracking multiple moving targets under a query-dependent noisy channel. Specifically, we derive a non-asymptotic bound and a second-order asymptotic bound on resolution for optimal query procedures and provide numerical examples to illustrate our results. In particular, we demonstrate that the bound is achieved by a state estimator that thresholds the mutual information density over possible target locations. This single threshold decoding rule has reduced the computational complexity compared to the multiple threshold scheme proposed for locating multiple stationary targets (Zhou, Bai and Hero, TIT 2022). We discuss two special cases of our setting: the case with unknown initial location and known velocity, and the case with known initial location and unknown velocity. Both cases share the same theoretical benchmark {that applies to} stationary multiple target search in Zhou, Bai and Hero (TIT 2022) while the known initial location case is close to the theoretical benchmark for stationary target search when the maximal speed is inversely proportional to the number of queries. We also generalize our results to account for a piecewise constant velocity model introduced in Zhou and Hero (TIT 2023), where targets change velocity periodically. Finally, we illustrate our proposed algorithm for the application of beam tracking of multiple mobile transmitters in a 5G wireless network.
\end{abstract}

\begin{IEEEkeywords}
Target tracking, Query-dependent noise, Second-order asymptotics, Finite blocklength analysis, Multiple access channel
\end{IEEEkeywords}

\section{Introduction} \label{sec_intro}

Twenty questions estimation was pioneered by R\'enyi~\cite{renyi1961problem} and Ulam~\cite{ulam1991adventures}, who mathematically modeled a two-player question and answer game. Over the following decades, these results have been generalized to cases where the query procedures are adaptive~\cite{chiu2019noisy} or non-adaptive~\cite{jedynak2012twenty}, and the noise that corrupts the oracle’s responses is query-dependent~\cite{kaspi2017searching,lalitha2018improved} or query-independent~\cite{jedynak2012twenty,chung2017unequal}. However, most studies on twenty questions estimation consider a single stationary target~\cite{chung2017unequal,chiu2016sequential,chiu2019noisy}. Kaspi \emph{et al.}~\cite[Section V]{kaspi2017searching} studied the case of one-dimensional target moving over the unit circle and derived first-order asymptotic bounds on the performance of an optimal non-adaptive search strategy with an infinite number of queries over a query-dependent binary symmetric noisy channel. Recently, Zhou and Hero~\cite{zhou2023resolution} refined~\cite[Section V]{kaspi2017searching} by deriving second-order asymptotic bounds when searching for a moving target over a finite dimensional unit cube under any query-dependent discrete memoryless channel (DMC) and additive white Gaussian noise (AWGN) channel. In~\cite{zhou2022resolution}, the framework of~\cite{kaspi2017searching} was extended to multiple stationary targets.

In this paper, we extend the results of~\cite{zhou2022resolution,zhou2023resolution} to the case of tracking multiple moving targets. This is a non-trivial extension due to the higher dimensional search space and the significant increase in computational complexity. Specifically, we introduce an extension of the bounds and a non-adaptive tracking algorithm with lower complexity that uses a single detection threshold on the mutual information density as contrasted to the multiple threshold scheme proposed in~\cite{zhou2022resolution}.

As in~\cite{kaspi2017searching,zhou2022resolution,zhou2023resolution}, we adopt the non-adaptive twenty questions querying framework. Specifically, extending the trajectory analysis in~\cite{zhou2023resolution} for a single moving target, we obtain a non-adaptive tracking procedure for multiple moving targets and derive non-asymptotic and second-order asymptotic bounds on achievable resolution. Our proposed query and target estimation procedure are inspired by the coding scheme for variable length sparse feedback multiple access protocol introduced in~\cite{yavas2023variable}, where the single mutual information density threshold rule is used to detect target state. Our second-order analysis reveals that searching for multiple moving targets is fundamentally twice as difficult as searching for multiple stationary targets~\cite{zhou2023resolution}, and reveals how the number of targets impacts the theoretical benchmarks as compared to multiple stationary target search~\cite{zhou2022resolution}. Furthermore, we discuss two special cases: the case with known initial location and unknown velocity, and the case with unknown initial location and known velocity. Both cases share the same theoretical benchmark as the case of stationary target search~\cite{zhou2022resolution}, while the second case establishes a theoretical benchmark of stationary target search when the maximal speed satisfies certain conditions. We also generalize our results to account for the piecewise constant velocity model by Zhou and Hero~\cite{zhou2023resolution}, where each target changes velocity periodically. Finally, we demonstrate how our twenty questions solution for multiple target tracking can be used in the practical application of beam tracking for 5G communications, providing a use-case motivating our extension of~\cite{zhou2021resolution,zhou2022resolution,zhou2023resolution}.

\section{Problem Formulation} \label{sec_pf}

\subsection*{Notation} \label{notation}

Random variables and their realizations are denoted by upper case variables (e.g., $X$) and lower case variables (e.g., $x$), respectively. All sets are denoted in calligraphic font (e.g., $\calX$). We use $\bbC, \bbR, \bbR_+$ and $\bbN$ to denote the set of complex numbers, real numbers, positive real numbers, and positive integers, respectively. Given any positive integer $n\in\bbN$, let $X^n:=(X_1,\ldots,X_n)$ be a random vector of length $n$. We use $\Phi^{-1}(\cdot)$ to denote the inverse of the cumulative distribution function (cdf) of the standard Gaussian distribution. Given any two positive integers $(m, n)\in\bbN^2$, we use $[m : n]$ to denote the set $\{m, m + 1,\ldots,n\}$ and use $[m]$ to denote $[1 : m]$. The collection of non-empty strict subset of set $\calJ$ is denoted by $\Lambda(\calJ):=\{\calJ^\prime:\calJ^\prime \subseteq \calJ, 0 < |\calJ^\prime| < |\calJ|\}$. Given any $m\in\bbN$, for any length-$m$ vector $a = (a_1, \ldots , a_m)$, the infinity norm is defined as $\|a\|_\infty:=\max_{i\in[m]}|a_i|$. The set of all probability distributions on a finite set $\calX$ is denoted as $\calP(\calX)$ and the set of all conditional probability distributions from $\calX$ to $\calY$ is denoted as $\calP(\calY|\calX)$. Furthermore, we use $\calF(\calS\times\calV)$ to denote the set of all probability density functions (pdf) defined on the set $\calS\times\calV$. Let Bern$(p)$ denote the Bernoulli distribution with parameter $p\in(0,1)$. All logarithms are base $e$. We use $\bbo(\cdot)$ to denote the indicator function. The symbols used in this paper are summarized in Table \ref{table_sym}.

\begin{table}[tb]
\caption{Symbol Table}
\label{table_sym}
\centering
\begin{tabular}{|m{1cm}<{\centering}|m{6.5cm}|m{1cm}<{\centering}|m{6.5cm}|}
\hline
Symbol & Meaning & Symbol & Meaning\\
\hline
$k$ & Number of targets & $d$ & Dimension of a target\\
\hline
$\bS$ & Initial location of a target & $\bV$ & Velocity of a target\\
\hline
$l$ & Real-time location of a target & $f$ & Probability density function\\
\hline
$\calA$ & Query region & $n$ & Number of queries\\
\hline
$h$ & Lipschitz continuous function & $\delta$ & Tolerable resolution\\
\hline
$\varepsilon$ & Excess-resolution probability & $v_+$ & Maximum speed of moving targets\\
\hline
$M$ & Number of sub-intervals in each dimension & $\Gamma$ & Transform function\\
\hline
$\mathrm{q}$ & Quantization function & $\bw$ & Quantized trajectory of a target\\
\hline
$\calB_{n,M}$ & Set of all possible quantized trajectories & $\calB^{(k)}_{n,M}$ & Set of all possible $k$ distinct quantized trajectories\\
\hline
$\calU_{n,M}$ & Set of given $k$ targets' quantized trajectories & $\imath^{h(p)}$ & Mutual information density\\
\hline
$p$ & Parameter of Bernoulli distribution for query generation & $\sigma$ & Standard deviation of Gaussian noise\\
\hline
$\delta^*$ & Minimal achievable resolution & $\gamma$ & Decoding threshold\\
\hline
$C(p,t)$ & Mean of mutual information density & $V(p,t)$ & Variance of mutual information density\\
\hline
$C(k)$ & Capacity of the query-dependent channel & $V(k,\varepsilon)$ & Dispersion of the query-dependent channel\\
\hline
\end{tabular}
\end{table}

\subsection{System Model} \label{pf System Model}

Fix the domain dimension $d \in \bbN$ and the maximum speed $v_+ \in \bbR_+$ of $k$ targets moving in the domain, which we assume is the $d$-dimensional cube. The goal of non-adaptive search for multiple moving targets is to estimate the trajectory of the moving targets given unknown initial locations $\bs^k=(\bs_1,\ldots,\bs_k)\in([0,1]^d)^k$ and velocities $\bv^k=(\bv_1,\ldots,\bv_k)\in([-v_+,v_+]^d)^k$.

Consistent with~\cite[Section II-A]{zhou2023resolution}, we make the following assumptions. Given any $\mathbf{s}\in[0,1]^d$ and $\mathbf{v}\in[-v_+,v_+]^d$, if a target moves with a constant velocity $\mathbf{v}$ from an initial location $\mathbf{s}$, at any time point $i\in\bbN$ and dimension $j\in[d]$, its location $l(s_j,v_j,i)$ at time $i$ satisfies
\begin{align}
l\big(s_j,v_j,i\big)
&:=\bigg\{
\begin{aligned}
&\mathrm{mod}(s_j+iv_j,2)&\mathrm{if}~\mathrm{mod}(s_j+iv_j,2) \leq 1,\\
&2-\mathrm{mod}(s_j+iv_j,2)&\mathrm{if}~\mathrm{mod}(s_j+iv_j,2) > 1,
\end{aligned}
\label{loc}
\end{align}
where $\mathrm{mod}(a,b)$ refers to the modulo operator. The real-time location of each target is a $d$-dimensional vector $l(\mathbf{s},\mathbf{v},i)=(l(s_1,v_1,i),\ldots,l(s_d,v_d,i))$.

Let $\calS=[0,1]^d$ and $\calV=[-v_+,v_+]^d$. Furthermore, let $(\bS^k,\bV^k)=(\bS_1,\ldots,\bS_k,\bV_1,\ldots,\bV_k)$ be $k$ pairs of location and velocity random variables generated from an arbitrary and unknown pdf $f_{\bS^k\bV^k}\in\calF(\calS^k\times\calV^k)$. Assume that there is an oracle that knows the instantaneous location of all targets at any time. Fix any $i\in\bbN$. At time point $i$, the questioner poses a query asking if there is any target in a Lebesgue measurable query set $\calA_i \subseteq [0,1]^d$. For each $t\in[k]$, the indicator function $X_i(t):=\bbo(l(\mathbf{S}_t,\mathbf{V}_t,i)\in\calA_i)$ denotes the binary yes/no answer. After receiving the query, the oracle generates a binary noiseless response $Z_i$ from the binary OR of $(X_i(1),\ldots,X_i(k))$. Subsequently, the noiseless response $Z_i$ is transmitted over a binary input point-to-point query-dependent channel with transition probability $P_{Y_i|Z_i} \in \calP(\calY_i|\{0,1\})$, yielding a noisy response $Y_i$. At a predefined maximum time point $n\in \bbN$, given noisy responses $Y^n=(Y_1,\ldots,Y_n)$, the questioner forms the final estimate of the targets' trajectory via a decoding function.

\subsection{Query-Dependent Channel} \label{pf Query-Dependent Channel}

We briefly recall the definition of a query-dependent channel in~\cite{kaspi2017searching,chiu2016sequential,zhou2021resolution}, which is also known as a channel with state~\cite[Chapter 7]{el2011network}. Assume that the query-dependent channel depends on the query only through its size. Let $h:~[0,1]\to\bbR_+$ be a bounded Lipschitz continuous function with parameter $L$, i.e., $|h(p_1) - h(p_2)| \leq L|p_1 - p_2|$ for any $(p_1, p_2) \in [0, 1]^2$ and $\max_{p \in [0,1]} h(p) < \infty$. Given a query $\calA$, the channel from the oracle to the questioner is a query-dependent channel, where the channel depends on the query through its size $|\calA| = \int_{m \in \calA} \mathrm{dm}$ via the function $h(\cdot)$. Specifically, the channel can be represented by a state-dependent channel $P_{Y|Z}^{h(|\calA|)}$. When $h(\cdot)$ maps to a constant, the above channel model specializes to a query-independent channel~\cite{chung2017unequal,jedynak2012twenty} that does not depend on $\calA$.

Let $q\in[0, 1]$. For any $\xi\in(0, \min\{q, 1-q\})$, analogous to~\cite[Section II-B]{zhou2021resolution}, we assume that the query-dependent channel is continuous in the sense that there exists a constant $c(q)$ depending on $q$ only such that
\begin{align}
&\max \Bigg\{ {{\Bigg\| \Bigg\{\log \frac{{P_{Y|Z}^q}(z,y)}{P_{Y|Z}^{q + \xi}(z,y)}\Bigg\}_{(z,y)\in\calZ\times\calY} \Bigg\|}_\infty},~{{\Bigg\| \Bigg\{\log \frac{{P_{Y|Z}^q}(z,y)}{P_{Y|Z}^{q - \xi}(z,y)}\Bigg\}_{(z,y)\in\calZ\times\calY} \Bigg\|}_\infty} \Bigg\} \leq c( q )\xi. \label{dmc}
\end{align}

Two common examples of the query-dependent channel are as follows. The first type of channel is discrete with query-dependent Bernoulli noise.

\begin{definition} \label{def1}
Given any $\calA \subseteq [0,1]$, a channel $P_{Y|Z}^{h(|\calA|)}$ is said to be a query-dependent binary symmetric channel (BSC) if $\calZ = \calY = \{0, 1\}$ and $\forall~ (z,y)\in\{0, 1\}^2$,
\begin{align}
P_{Y|Z}^{h(|\calA|)}=\big(h(|\calA|)\big)^{\bbo(y \neq z)}\big(1-h(|\calA|)\big)^{\bbo(y = z)}.\label{bsc}
\end{align} 
\end{definition}

The query size function $h(\cdot)$ determines the crossover probability of the above BSC. Smaller size query sets correspond to lower crossover probability when $h(\cdot)$ is an increasing function. The second channel is continuous with query-dependent AWGN, which will be used in our numerical illustrations below.

\begin{definition} \label{def2}
Given any $\calA \subseteq [0, 1]$, a channel $P_{Y|Z}^{h(|\calA|)}$ is said to be a query-dependent AWGN channel with parameter $\sigma\in\bbR_+$ if $\calZ = \{0, 1\}, \calY = \bbR$ and $\forall~ (z,y)\in\{0, 1\}\times \bbR$,
\begin{align}
P_{Y|Z}^{h(|\calA|)} = \frac{1}{\sqrt{2\pi(h(|\calA|)\sigma)^2}}\exp\Big(-\frac{(y-z)^2}{2(h(|\calA|)\sigma)^2}\Big).
\end{align}
\end{definition}

Here the query size function $h(\cdot)$ modulates the noise variance of the above AWGN channel. Smaller query sets $\calA$ correspond to higher signal to noise ratio when $h(\cdot)$ is an increasing function. The query-dependent AWGN channel violates the constraint in \eqref{dmc} since $(y-z)^2$ can be unbounded. However, the constraint in Eq. \eqref{dmc} can be made to hold with high probability by restricting $(z,y)$ to satisfy $(y-z)^2<D$, where $D\in\bbR_+$ is arbitrarily large but finite similarly to~\cite[Section II-B]{zhou2023resolution}.

\subsection{Definition of Query Procedure} \label{pf Non-Adaptive Query Procedure}

A non-adaptive query procedure for multiple moving target search is defined as follows.

\begin{definition} \label{def3}
Given any three positive integers $(n,k,d)\in\bbN^3$ and two positive real numbers $(\delta,\varepsilon)\in\bbR_+\times(0,1)$, an $(n,k,d,\delta,\varepsilon)$-non-adaptive query procedure consists of
\begin{itemize}
\item[$\bullet$] $n$ query sets $(\calA_1,\dots,\calA_n) \subseteq ([0,1]^d)^n$,
\item[$\bullet$] a decoding function $g:\calY^n \rightarrow (\calS\times \calV)^k$,
\end{itemize}
such that the excess-resolution probability satisfies
\begin{align}
\mathrm{P_e}(n,k,d,\delta):=\sup_{f_{\bS^k\bV^k}\in\calF(\calS^k\times\calV^k)} \mathrm{Pr}\bigg\{ \exists~t\in[k]:\min_{(\hatbs,\hatbv)\in(\hatbs,\hatbv)^k} \max_{i\in[0:n]} \| l(\hatbs,\hatbv,i)-l(\bS_t,\bV_t,i) \|_{\infty} > \delta \bigg\} \leq \varepsilon , \label{pe}
\end{align}
where $(\hatbs,\hatbv)^k = \{(\hatbs_1,\hatbv_1),\ldots,(\hatbs_k,\hatbv_k)\}$ is the set of estimated initial location and velocity pairs of the targets.
\end{definition}

Given any $(n,k,d,\delta,\varepsilon)$ query procedure, with probability of at least $1-\varepsilon$, one can estimate the trajectories of all targets accurately within all $n$ time slots  with resolution $\delta$. Analogous to~\cite{zhou2022resolution,zhou2023resolution}, the theoretical benchmark for non-adaptive querying for multiple moving targets is defined as the following minimal achievable resolution for any $(n,k,d,\varepsilon)\in\bbN^3\times(0,1)$:
\begin{align}
\delta^*(n,k,d,\varepsilon):=\inf\big\{\delta\in\bbR_+:~\exists~\mathrm{an}~(n,k,d,\delta,\varepsilon)\textrm{-non-adaptive~query~procedure} \big\}.
\end{align}
When there is only one target, i.e., $k = 1$, the above definitions specialize to non-adaptive search for a single moving target~\cite{zhou2023resolution} with time slot $B=1$. When the maximum speed $v_+=0$, the above definitions specialize to multiple stationary target search~\cite{zhou2022resolution} with distinct initial locations.

\section{Main Results} \label{sec_mr}

\subsection{Preliminary Definitions} \label{mr Preliminary Definitions}

This subsection presents definitions needed to present our query procedure and its mathematical analysis. The first several definitions are analogous to the definitions of the trajectory part in~\cite[Section III]{zhou2023resolution}. Let $n$ denote the number of queries we can pose and let $M$ denote the number of sub-intervals at each dimension, which we use to quantize the target locations. Given any positive integers $(d,n,M)\in\bbN^3$, define a function $\Gamma:[nM]^d \to [(nM)^d]$ such that for any $(j_1,\ldots,j_d)\in[nM]^d$,
\begin{align}
\Gamma(j_1,\ldots,j_d):=1+\sum_{d^{\prime}\in[d]}(j_{d^{\prime}}-1)(nM)^{d-d^{\prime}}. \label{gamma-func}
\end{align}
Furthermore, given any $s\in[0,1]$ and $n\in\bbN$, define the quantization function
\begin{align}
\mathrm{q}(s,n):=\lceil snM \rceil.
\end{align}
The function $\mathrm{q}(s,n)$ is used to discretize the location into a certain interval so that non-adaptive grid search can be applied. Suppose that there is a target with initial location $\mathbf{s}=(s_1,\ldots,s_d)\in[0,1]^d$ and velocity $\mathbf{v}=(v_1,\ldots,v_d)\in[-v_+,v_+]^d$. For each $(i,j)\in[n]\times[d]$, the quantized location of each target at time $i$ along the $j$-th dimension satisfies
\begin{align}
w(s_j,v_j,i):=\mathrm{q}\big(l(s_j,v_j,i),n\big) , \label{quantized-loc}
\end{align}
and the quantized location of each target is a $d$-dimensional vector $w(\mathbf{s},\mathbf{v},i)=(w(s_1,v_1,i),\ldots,w(s_d,v_d,i))$.

Efficient search for a moving target requires estimation of its trajectory. Fix any $n\in\bbN$. Define the set of all possible quantized trajectories of a moving target within $n$ time points as
\begin{align}
\calB_{n,M}
&:=\big\{\bw^n\in([nM]^d)^n: \bw^n=\bw(\mathbf{s},\mathbf{v},[n])~\mathrm{for~some}~(\mathbf{s},\mathbf{v})\in[0,1]^d \times [-v_+,v_+]^d\big\} , \label{Bnm}
\end{align}
where $\bw(\mathbf{s},\mathbf{v},[n])=(w(\mathbf{s},\mathbf{v},1),\dots,w(\mathbf{s},\mathbf{v},n))\in([nM]^d)^n$ is a trajectory with initial location $\mathbf{s}$ and velocity $\mathbf{v}$, i.e., the collection of quantized locations at discrete time points $[n]$ of a target. Analogous to~\cite[Eq. 11]{zhou2023resolution}, the size of $\calB_{n,M}$ is upper bounded by
\begin{align}
|\calB_{n,M}| \leq \big((2nv_++3)n^4M^2\big)^d. \label{bnm}
\end{align}

In subsequent analyses, we assume that $k$ targets are pairwise distinct, i.e. there are no two targets that have the same quantized trajectory. Our analysis can also be generalized to account for the case where the quantized trajectories of two targets overlap, see Fig. \ref{fig1}. In this case, the excess-resolution event in Section \ref{Achievable Non-Asymptotic Bound DMC} needs to modified to account for the fact that the number of estimated trajectories is smaller than the number of distinct trajectories $k$.

\begin{figure}[tb]
\centering
\includegraphics[width=.5\columnwidth]{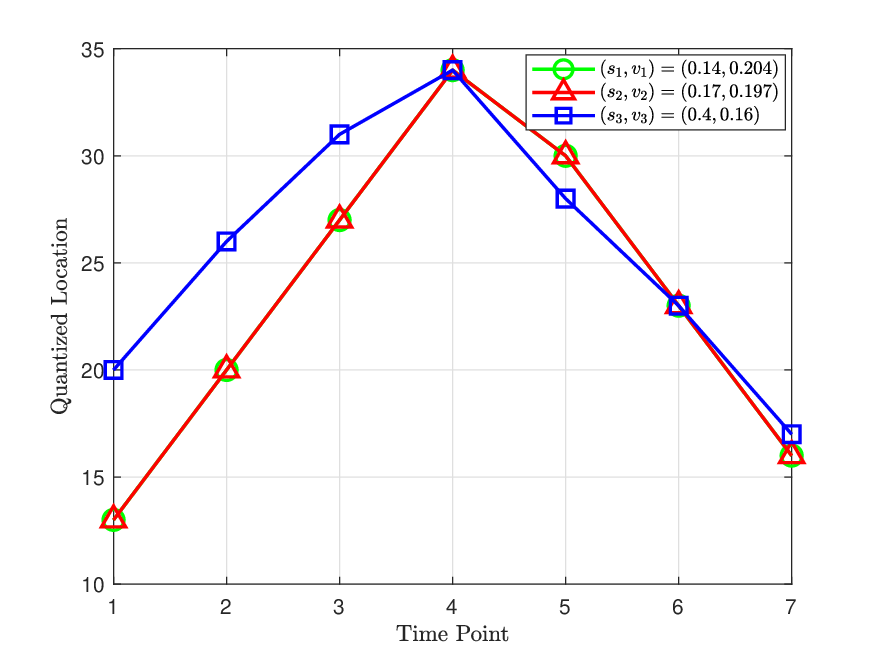}
\caption{Illustration of quantized trajectories for three targets where $2$ of the targets have trajectories (red triangles and green circles) that, after quantization, cannot be distinguished from each other. Here the number of queries is $n=7$ and the number of quantization cells per dimension is $M=5$.}
\label{fig1}
\end{figure}

Next, define the set of $k$ distinct trajectories sets as
\begin{align}
\calB_{n,M}^{(k)}:=\big\{\{\bw_1^n,\ldots,\bw_k^n\}:\bw_t^n\in\calB_{n,M}~\mathrm{for~any}~t\in[k]~\mathrm{and}~\bw_{t_1}^n \neq \bw_{t_2}^n~\mathrm{for~any}~(t_1,t_2)\in[k]^2 \big\}.
\end{align}
For any $k$ moving targets with initial location and velocity vectors $(\mathbf{s},\mathbf{v})^k\in([0,1]^d \times [-v_+,v_+]^d)^k$, define the set of targets' quantized trajectories as
\begin{align}
\calU_{n,M}\left((\mathbf{s},\mathbf{v})^k\right):=\big\{\mathbf{w}(\mathbf{s}_1,\mathbf{v}_1,[n]),\ldots,\mathbf{w}(\mathbf{s}_k,\mathbf{v}_k,[n])\big\}. \label{set-u}
\end{align}

The next definitions are analogous to the mutual information density in~\cite[Section III]{zhou2022resolution}. Given any sequence of random variables $X_{[k]} = (X_1,\ldots,X_k)$, we denote the binary OR of $X_{[k]}$ as the symbol $Z$. For any $(x_{[k]}, y)\in\{0, 1\}^k \times \calY$, define the joint distribution
\begin{align}
P_{X_{[k]} Y}^{h(p)}\big(x_{[k]}, y\big)
&:=\bigg(\prod_{t\in[k]} P_X(x_t)\bigg) P_{Y | Z}^{h(p)}(y | z) , \label{joint_d}
\end{align}
where $p\in(0,1)$, $P_X=\mathrm{Bern}(p)$ and $P_{Y | Z}^{h(p)}$ denotes the binary input query-dependent channel. Fix any $t\in[k]$. Let $P_{Y|X_{[t]}}$ be induced by $P_{X_{[k]} Y}^{h(p)}$. Given any $t\in[k]$ and any $p\in(0,1)$, define the mutual information density
\begin{align}
\imath^{h(p)}(x_{[t]} ; y):=\log \frac{P_{Y | X_{[t]}}^{h(p)}(y | x_{[t]})}{P_Y^{h(p)}(y)}. \label{mu_info_e}
\end{align} 
Given any query region $\calA \in [0,1]^d$, for any $(z,y)\in\{0,1\} \times \calY$, define the query-dependent mutual information density 
\begin{align}
\imath_{\calA}(z;y):=\log\frac{P_{Y|Z}^{h(|\calA|)}(y|z)}{P_{Y}^{h(|\calA|)}(y)} . \label{mu_info_a}
\end{align}

Given $(p,t)\in(0,1)\times[k]$, let $C(p,t)$ and $V(p,t)$ denote the mean and variance of the mutual information density $\imath^{h(p)}$, respectively, i.e.
\begin{align}
C(p,t)&:=\mathbb{E}\big[\imath^{h(p)}\big(X_{[t]};Y\big)\big] , \label{cempty}\\
V(p,t)&:=\mathrm{Var}\big[\imath^{h(p)}\big(X_{[t]};Y\big)\big] . \label{vempty}
\end{align}
Furthermore, the capacity of the binary-input multiple access channel (MAC) induced by \eqref{joint_d} is defined as
\begin{align}
C(k):=\max_{p\in(0,1)} C(p,k). \label{ck}
\end{align}
Let $\calP_k$ denote the set of capacity achieving optimizer $p^*$ in \eqref{ck}. Given any $\varepsilon \in (0,1)$, the dispersion for the channel is then defined as
\begin{align}
V(k,\varepsilon):=\Bigg\{
\begin{aligned}
&\min_{p^*\in\calP_k} V(p,k)&\mathrm{if}~\varepsilon \leq 0.5, \\
&\max_{p^*\in\calP_k} V(p,k)&\mathrm{if}~\varepsilon > 0.5.
\end{aligned}
\label{vk}
\end{align}

\subsection{Non-Asymptotic Bounds} \label{mr Non-Asymptotic Bounds}

\begin{algorithm}[tb]
\caption{Non-Adaptive Query Procedure for Multiple Moving Target Search}
\label{alg:1} 
\begin{algorithmic}[1]  
\REQUIRE
The number of queries $n\in\bbN$, dimension $d\in\bbN$ and three design parameters $(M,p,\gamma)\in\bbN \times (0,1) \times \bbR_+$
\ENSURE
A set of estimated trajectories $\{\hat{\bw}_1^n,\ldots,\hat{\bw}_k^n\}$ of $k$ targets with unknown initial locations and velocities
~\\
\hrule 
~\\
\STATE Divide the $d$-dimensional unit cube $\calS=[0,1]^d$ into $(nM)^d$ equally sized non-overlapping sub-cubes $(\calC_1,\ldots,\calC_{(nM)^d})$.
\STATE \textbf{\emph{Query generation and response collection:}}
\STATE Generate $n$ binary vectors $(x^n(1),\ldots,x^n((nM)^d))$ with length-$(nM)^d$, where each vector is generated i.i.d. from Bern$(p)$.
\FOR{$i\in[n]$}
\STATE Form a query region $\calA_i$ as
\begin{align}
\calA_i:=\bigcup_{j\in[(nM)^d]:~x_{i,j}=1} \calC_j \nn.
\end{align}
\STATE Pose the $i$-th query asking whether there are targets in the region $\calA_i$ and obtain a noisy response $y_i$ from the oracle.
\ENDFOR
\STATE Collect noisy responses $y^n=(y_1,\ldots,y_n)$.
\STATE \textbf{\emph{Decoding:}}
\IF{there exists a set of trajectories $\{\hat{\bw}_1^n,\ldots,\hat{\bw}_k^n\}\in\calB_{n,M}^{(k)}$ such that
\begin{align}
\imath^{h(p)}\big(x^n(\Gamma(\hat{\bw}_1^n)),\ldots,x^n(\Gamma(\hat{\bw}_k^n)); y^n\big) \geq \gamma, \label{single_thres}
\end{align}
}
\RETURN estimates $\{ (\hatbs_1,\hatbv_1),\ldots,(\hatbs_k,\hatbv_k) \}$ such that for any $t\in[k]$, $$w(\hatbs_t,\hatbv_t,[n])=\hat{\bw}_t^n.$$
\ELSE
\RETURN estimates $\{ (\hatbs_1,\hatbv_1),\ldots,(\hatbs_k,\hatbv_k) \}$ such that $(\hatbs_t,\hatbv_t)= (\mathbf{0.5}_{d \times 1},\mathbf{0}_{d \times 1})$ for each $t\in[k]$.
\ENDIF
\end{algorithmic}
\end{algorithm}

Recall that $h(\cdot)$ is a bounded Lipschitz continuous function with parameter $L$. Recall the definitions of the function $\Gamma(\cdot)$ in \eqref{gamma-func}, the trajectory set $\calB_{n,M}$ in \eqref{bnm} and the power set $\Lambda([k])$ in the Notation Section, respectively. Fix $p\in(0,1)$. Fix positive real numbers $(\gamma,\eta)$ and $\{\lambda_\calJ\}_{\calJ\in\Lambda([k])}$. Let $(x^n(1),\ldots,x^n((nM)^d))$ be a sequence of $n$ binary codewords with length-$(nM)^d$, where each codeword is generated i.i.d. from $P_X=\mathrm{Bern}(p)$. For ease of notation, given any trajectory $\bw^n\in\calB_{n,M}$, we use $x^n(\Gamma(\bw^n))$ to denote $(x_{1,\Gamma(w_1)},\ldots,x_{n,\Gamma(w_n)})$. For the convenience of showing our non-asymptotic achievability results, given any $(n,\gamma,\lambda_{\calJ})\in\bbN \times \bbR_+^2$ and $\calJ\in\Lambda([k])$ and random vectors $(X^n(1),\ldots,X^n(k),Y^n)$ that are generated i.i.d. from $P_{X_{[k]}Y}^{h(p)}$, define the following two events:
\begin{align}
\calG_1(n,\gamma)
&:=\big\{ \imath^{h(p)}\big(X^n(1),\ldots,X^n(k);Y^n\big) < \gamma \big\} , \label{calG1}\\
\calG_2(n,\calJ,\lambda_{\calJ})&:=\big\{ \imath^{h(p)}\big(X_{\calJ}^n;Y^n\big) > nC(p,|\calJ|)+n\lambda_\calJ \big\}\label{calG2},
\end{align}
where $X_\calJ^n=\{X_i^n\}_{i\in\calJ}$.

Furthermore, given any $(a,b,\gamma)\in\bbN^2 \times \bbR_+$ such that $a\geq b$, define the following two functions:
\begin{align}
\bbH_1(a,b,\gamma)&:=\binom{a}{b} \exp(-\gamma) , \label{calH1}\\
\bbH_2(a,b,\gamma)&:=a^b \exp(-\gamma),\label{calH2}
\end{align}
where $\binom{a}{b}:=\frac{a!}{b!(a-b)!}$ is the binomial coefficient.

Fix positive integers $(n,k,d,M)\in\bbN^4$. Recall the definitions of the function $c(\cdot)$ in \eqref{dmc} and the joint distribution $P_{X_{[k]} Y}^{h(p)}$ in \eqref{joint_d}. Recall that $L\in\bbR_+$ is the parameter for the Lipschitz continuous function. In our achievability analyses, we assume that the parameter $M$, which is used to quantize the target locations, is much greater than the number of targets $k$.

\begin{theorem} \label{th1}
For any discrete query-dependent channel satisfying the conditions in \eqref{dmc}, there exists an $(n,k,d,\frac{2}{M},\varepsilon)$-non-adaptive query procedure such that
\begin{align}
\varepsilon &\leq 4n\exp\big(-2n^dM^d\eta^2\big) + \exp\big(n \eta Lc(h(p))\big) \times \bigg( \underset{(P_{X_{[k]} Y}^{h(p)})^n}{\mathrm{Pr}}\big\{ \calG_1(n,\gamma) \big\} + \bbH_1\Big(\big((2nv_++3)n^4M^2\big)^d-k,k,\gamma\Big) \bigg. \nn\\*
&\qquad\bigg.+ \sum_{\calJ\in\Lambda([k])} \bigg( \underset{(P_{X_{[k]} Y}^{h(p)})^n}{\mathrm{Pr}}\big\{ \calG_2(n,\calJ,\lambda_{\calJ}) \big\} + \bbH_1\Big(\big((2nv_++3)n^4M^2\big)^d-k,k-|\calJ|,\gamma-nC(p,|\calJ|)-n\lambda_\calJ\Big) \bigg) \bigg). \label{th1-rlt} 
\end{align}
\end{theorem}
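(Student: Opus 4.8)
\emph{Proof idea.}
The plan is to analyze the randomized scheme of Algorithm~\ref{alg:1} with codeword distribution $P_X=\mathrm{Bern}(p)$ and to establish \eqref{th1-rlt} in three stages: (i) a concentration step that controls the realized query sizes; (ii) a change-of-measure step that replaces the realized query-dependent channels by the nominal channel $P_{Y|Z}^{h(p)}$; and (iii) a union-bound step over decoding errors organized by how much a candidate trajectory set overlaps the true one, in the spirit of the single-threshold multiple access analysis of~\cite{yavas2023variable} and the multiple-target analysis of~\cite{zhou2022resolution}.

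\emph{Regularity and change of measure.}
In Algorithm~\ref{alg:1} the $i$-th query size is $|\calA_i|=(nM)^{-d}\sum_{j\in[(nM)^d]}x_{i,j}$ with $x_{i,j}$ i.i.d.\ $\mathrm{Bern}(p)$, so Hoeffding's inequality together with a union bound over $i\in[n]$ shows that the regularity event $\calR:=\{\,\bigl|\,|\calA_i|-p\,\bigr|\le\eta~\text{for all }i\in[n]\,\}$ has complement of probability at most $4n\exp(-2n^dM^d\eta^2)$; on $\calR^{\mathrm{c}}$ we crudely bound the conditional excess-resolution probability by $1$, which produces the first summand of \eqref{th1-rlt}. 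Working on $\calR$, the $L$-Lipschitz property gives $|h(|\calA_i|)-h(p)|\le L\eta$, and the channel-continuity assumption \eqref{dmc} then bounds the per-coordinate log-likelihood ratio between $P_{Y|Z}^{h(|\calA_i|)}$ and $P_{Y|Z}^{h(p)}$ by $c(h(p))L\eta$ uniformly in $(z,y)$; multiplying over the $n$ channel uses, any error event probability under the realized product law restricted to $\calR$ is at most $\exp(n\eta Lc(h(p)))$ times its probability under $(P_{X_{[k]}Y}^{h(p)})^n$. It therefore remains to bound the decoding-error probability under the nominal i.i.d.\ law, with the whole bracket in \eqref{th1-rlt} carrying this multiplicative factor.

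\emph{Error decomposition under the nominal law.}
Fix the true quantized-trajectory set $\calW=\calU_{n,M}((\bs,\bv)^k)=\{\bw_1^n,\dots,\bw_k^n\}$ and let $(X^n(1),\dots,X^n(k),Y^n)\sim(P_{X_{[k]}Y}^{h(p)})^n$ be the associated true codewords and noisy response. The estimates returned by Algorithm~\ref{alg:1} violate the resolution requirement in \eqref{pe} only if either the true set fails the test \eqref{single_thres}, i.e.\ the event $\calG_1(n,\gamma)$ of \eqref{calG1} occurs, or some candidate set $\widehat{\calW}\in\calB_{n,M}^{(k)}$ with $\widehat{\calW}\neq\calW$ passes the test; the latter we pessimistically count as an error even when the resulting estimates happen to be within resolution $\tfrac{2}{M}$. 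Partition the offending candidates by $\calJ:=\{t\in[k]:\bw_t^n\in\widehat{\calW}\}$, the indices of the correctly guessed trajectories; the remaining $k-|\calJ|$ trajectories of $\widehat{\calW}$ are then distinct from all members of $\calW$ and from one another, so by \eqref{bnm} there are at most $\binom{((2nv_++3)n^4M^2)^d-k}{k-|\calJ|}$ choices for them. The case $\calJ=[k]$ is not an error. For $\calJ=\emptyset$, every candidate's $k$ codewords are i.i.d.\ $P_X$ and independent of $Y^n$, so a candidate passes with probability at most $e^{-\gamma}$ by Markov's inequality applied to the likelihood $P_{Y^n\mid X^n_{[k]}}$, whose average over the candidate codewords equals $P_{Y^n}$; the union bound then yields $\bbH_1\!\big(((2nv_++3)n^4M^2)^d-k,\,k,\,\gamma\big)$. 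For $0<|\calJ|<k$, split $\imath^{h(p)}(X^n_{[k]};Y^n)=\imath^{h(p)}(X^n_\calJ;Y^n)+\log\!\big(P_{Y^n\mid X^n_{[k]}}/P_{Y^n\mid X^n_\calJ}\big)$: outside $\calG_2(n,\calJ,\lambda_\calJ)$ of \eqref{calG2} the first term is at most $nC(p,|\calJ|)+n\lambda_\calJ$, so meeting the threshold $\gamma$ forces the second term---which depends on a candidate only through its $k-|\calJ|$ fresh independent codewords---to exceed $\gamma-nC(p,|\calJ|)-n\lambda_\calJ$, an event of probability at most $e^{-(\gamma-nC(p,|\calJ|)-n\lambda_\calJ)}$ per candidate by the same Markov argument against $P_{Y^n\mid X^n_\calJ}$; the union bound over the at most $\binom{((2nv_++3)n^4M^2)^d-k}{k-|\calJ|}$ candidates gives $\bbH_1\!\big(((2nv_++3)n^4M^2)^d-k,\,k-|\calJ|,\,\gamma-nC(p,|\calJ|)-n\lambda_\calJ\big)$, while on $\calG_2(n,\calJ,\lambda_\calJ)$ we bound by $1$, contributing $\mathrm{Pr}\{\calG_2(n,\calJ,\lambda_\calJ)\}$. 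Summing over $\calJ\in\Lambda([k])$, adjoining the $\calG_1$ and $\calJ=\emptyset$ terms, applying the change-of-measure factor and adding the regularity term yields \eqref{th1-rlt}. Finally, whenever $\widehat{\calW}=\calW$ the returned pairs $(\hatbs_t,\hatbv_t)$ lie in the same quantization cells as the true targets at all queried times (using $M\gg k$), which by the trajectory-quantization analysis of~\cite{zhou2023resolution} guarantees resolution $\tfrac{2}{M}$; hence the construction is an $(n,k,d,\tfrac{2}{M},\varepsilon)$-non-adaptive query procedure.

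\emph{Main obstacle.}
The delicate part is the last step: enumerating candidate sets under the distinctness constraint defining $\calB_{n,M}^{(k)}$, partitioning them by their overlap $\calJ$ with the truth so that each partial-overlap error is charged exactly once, and verifying that the Markov/change-of-measure bound on the incremental density $\log\!\big(P_{Y^n\mid X^n_{[k]}}/P_{Y^n\mid X^n_\calJ}\big)$ remains valid even though $Y$ is the noisy OR of all $k$ codewords, so that the spurious codewords still enter the nominal channel through $P_{Y\mid X_{[k]}}$. Keeping the factor $\exp(n\eta Lc(h(p)))$ and the counting bound \eqref{bnm} uniform across all union-bound terms---so that $\eta$, $\gamma$ and $\{\lambda_\calJ\}_{\calJ\in\Lambda([k])}$ can be optimized jointly afterwards to obtain the second-order asymptotics---is the remaining bookkeeping burden.
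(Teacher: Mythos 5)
Your proposal is correct and follows essentially the same route as the paper: Hoeffding-type concentration of the realized query sizes, change of measure to the nominal law $P_{Y|Z}^{h(p)}$ via the Lipschitz and channel-continuity assumptions, and a union bound over candidate trajectory sets partitioned by their overlap $\calJ$ with the truth, where the $\calJ=\emptyset$ and $0<|\calJ|<k$ cases are handled by the single-threshold information-spectrum (Markov) bound on the mutual information density and on the incremental density $\imath^{h(p)}_\calJ$ respectively. The only cosmetic difference is that the paper names this the ``information spectrum method'' and packages the partial-overlap cases as a separate error event $\calE_3$ with auxiliary events $\widetilde\calE_1(\calJ),\widetilde\calE_2(\calJ)$, whereas you fold everything into a single $\calJ$-indexed sum; the underlying inequalities and counting bounds are identical.
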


The proof of Theorem \ref{th1} is provided in Section \ref{Achievable Non-Asymptotic Bound DMC}. The non-asymptotic bound is achieved by the non-adaptive query procedure in Algorithm \ref{alg:1}. Analogously to~\cite{zhou2023resolution}, we first partition the $d$-dimensional unit cube $[0,1]^d$ into $(nM)^d$ equal-sized non-overlapping sub-cubes $(\mathcal{C}_1,\ldots,\mathcal{C}_{(nM)^d})$ to quantize the trajectory of $k$ targets. The reason for choosing $nM$ instead of $M$ is to ensure that the trajectory estimation resolution is within $\frac{1}{M}$ within $n$ time slots. The query procedure is divided into two phases: i) query generation and response collection, and ii) decoding. In the first phase, we choose random coding in~\cite{polyanskiy2011feedback} as the encoding method. All $n$ queries $\calA^n$ are generated based using $n$ randomly generated binary vectors $(x^n(1),\ldots,x^n((nM)^d))$ with length-$(nM)^d$. All $n$ noiseless responses from the oracle are corrupted by the query-dependent noisy channel, which yields noisy responses $y^n$. In the decoding phase, the decoder uses a single threshold rule inspired by~\cite[Section IV]{yavas2023variable}, where it is chosen as a decoding rule for variable-length sparse feedback code under MAC. Compared to the query generation phase of the algorithm in~\cite[Section III-B]{zhou2022resolution}, the unit cube is divided into $nM$ sub-cubes instead of $M$ sub-cubes in line $3$ of Algorithm $1$, meaning that searching region consists of all possible motion trajectories instead of all possible locations. Compared to the decoding phase of the algorithm in~\cite[Section III-B]{zhou2022resolution} that has $2^k-1$ thresholds for $k$ targets, our single threshold rule in \eqref{single_thres} of Algorithm \ref{alg:1} significantly reduces the computational complexity of the query procedure. Compared to the decoding phase of the algorithm in~\cite[Section III-B]{zhou2023resolution}, we use the information spectrum method~\cite[Lemma 7.10.1]{koga2002information} with a single threshold instead of the method finding the maximum mutual information density.

We next briefly explain what each term in Theorem \ref{th1} means. In \eqref{th1-rlt}, the first term $4n\exp(-2n^dM^d\eta^2)$ and the multiplicative term $\exp(n \eta Lc(h(p)))$ result from the atypicality of the query-dependent channel and the change-of-measure technique, respectively, analogously to~\cite{zhou2023resolution}. The other three terms inside the parenthesis result from the three error events leading to excess-resolution. Note that the true trajectory set consists of $k$ distinct trajectories. The first term upper bounds the probability that the mutual information density of the true trajectory set is smaller than the threshold $\lambda$. The second term upper bounds the probability that there exists another trajectory set that passes the threshold decoding rule and has no intersection with the true trajectory set. The last term upper bounds the probability that there exists another trajectory set that passes the threshold decoding rule and has intersection of less than $k$ trajectories compared to the true trajectory set. Theorem \ref{th1} is similar to~\cite[Theorem 1]{zhou2022resolution} when maximum speed $v_+=0$, with the differences coming from the improvement of single threshold decoding rule and no need to account for overlapping of two trajectories.

Since the query-dependent AWGN channel violates the continuous assumption in \eqref{dmc}, Theorem \ref{th1} does not apply. The problem can be solved by controlling the tail probability of Gaussian noise as in~\cite{zhou2023resolution}. The following definition is needed to present the result for Gaussian noise. Given any $(n,p,\sigma,\eta)\in\bbN \times (0,1) \times \bbR_+^2$, define
\begin{align}
\Xi(n,p,\sigma,\eta):=\frac{nL\eta}{h(p)-L\eta} + \frac{nL\eta\big(h(p)^2+L\eta(2h(p)+L\eta)\big)(2h(p)+L\eta)}{h(p)^2\big(h(p)^2-L\eta(2h(p)+L\eta)\big)}. \label{aleph}
\end{align}

\begin{theorem} \label{th2}
Given any positive real number $\sigma\in\bbR_+$, there exists an $\big(n,k,d,\frac{2}{M},\varepsilon\big)$-non-adaptive query procedure such that
\begin{align}
\varepsilon 
\nn&\leq 4n\exp\big(-2n^dM^d\eta^2\big) + \exp\Big(-\frac{n(1-\log 2)}{2}\Big)\\
&\qquad+ \exp\big(\Xi(n,p,\sigma,\eta)\big)\bigg( \underset{(P_{X_{[k]} Y}^{h(p)})^n}{\mathrm{Pr}}\Big\{ \calG_1(n,\gamma) \Big\} + \bbH_1\Big(\big((2nv_++3)n^4M^2\big)^d-k,k,\gamma\Big) \bigg. \nn\\
&\qquad\bigg. +\sum_{\calJ\in\Lambda([k])} \bigg( \underset{(P_{X_{[k]} Y}^{h(p)})^n}{\mathrm{Pr}}\Big\{\calG_2(n,\calJ,\lambda_{\calJ})\Big\} + \bbH_1\Big(\big((2nv_++3)n^4M^2\big)^d-k,k-|\calJ|,\gamma-nC(p,|\calJ|)-n\lambda_\calJ\Big) \bigg) \bigg). \label{th2-rlt} 
\end{align}
\end{theorem}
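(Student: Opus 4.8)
The plan is to reduce Theorem~\ref{th2} to Theorem~\ref{th1} by handling the Gaussian tail in the same spirit as the treatment in~\cite{zhou2023resolution}, since the only obstruction to directly invoking the discrete analysis is that the query-dependent AWGN channel violates the boundedness assumption \eqref{dmc}: the log-likelihood ratios involve $(y-z)^2$, which is unbounded. First I would fix a truncation radius: restrict attention to the ``typical'' output set where $(y_i - z_i)^2 < D$ for all $i\in[n]$, for a suitable (large but finite) $D$. On this typical set the relevant log-likelihood ratios \emph{are} bounded, so the continuity-type estimate \eqref{dmc} can be made to hold, but now with an effective constant depending on $D$ and on the Lipschitz parameter $L$; tracking this dependence carefully is what produces the function $\Xi(n,p,\sigma,\eta)$ in \eqref{aleph} in place of the factor $\exp(n\eta L c(h(p)))$ appearing in Theorem~\ref{th1}. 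The contribution of the atypical set — the event that some $(y_i - z_i)^2 \geq D$ — is a Gaussian tail, and choosing $D$ so that this tail is $\exp(-n(1-\log 2)/2)$ accounts for the second additive term in \eqref{th2-rlt}.

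Concretely, the steps in order are: (i) Run the same non-adaptive query procedure of Algorithm~\ref{alg:1} with the AWGN channel in place of the discrete channel, and decompose the excess-resolution probability exactly as in the proof of Theorem~\ref{th1}: an atypicality term for the query-dependent channel sizes (giving $4n\exp(-2n^dM^d\eta^2)$, identical to the discrete case since it concerns only the random query-set sizes, not the channel outputs), plus the three error-event terms involving $\calG_1$, the $\bbH_1$ combinatorial bound, and the sum over $\calJ\in\Lambda([k])$ of $\calG_2$ and $\bbH_1$ terms. (ii) Insert the Gaussian output truncation: bound the total error by the probability that the output falls in the atypical set $\{\exists i: (Y_i - Z_i)^2 \geq D\}$ plus the error conditioned on the typical output set. (iii) On the typical set, re-derive the change-of-measure / mismatch step that in Theorem~\ref{th1} produced the multiplicative $\exp(n\eta L c(h(p)))$; here the same argument with bounded-on-truncation log-likelihood ratios gives the multiplicative factor $\exp(\Xi(n,p,\sigma,\eta))$. (iv) Choose $D$ (equivalently, choose the truncation so the union bound over $n$ coordinates of the Gaussian tail is controlled) to make the atypical-output term equal to $\exp(-n(1-\log 2)/2)$, and collect terms to obtain \eqref{th2-rlt}.

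The main obstacle I expect is step (iii): verifying that, after restricting to $(y-z)^2 < D$, the perturbation of the log-likelihood ratios $\log \frac{P_{Y|Z}^{h(p)}(y|z)}{P_{Y|Z}^{h(p')}(y|z)}$ under a change of query size $|p - p'| \leq \eta$ is bounded by exactly the two-term expression in \eqref{aleph}. For the AWGN channel, this ratio equals $\log\frac{h(p')}{h(p)} + \frac{(y-z)^2}{2}\big(\frac{1}{(h(p')\sigma)^2} - \frac{1}{(h(p)\sigma)^2}\big)$, and one must use the Lipschitz bound $|h(p) - h(p')| \leq L\eta$ together with $(y-z)^2 < D$ and bookkeeping on $h(p)^2 - L\eta(2h(p)+L\eta)$ staying positive (which requires $\eta$ small relative to $h(p)/L$); matching the algebra to the precise form of $\Xi$ is delicate but routine. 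Everything else — the atypicality term, the combinatorial $\bbH_1$ bounds on the number of candidate trajectory sets in $\calB_{n,M}^{(k)}$, and the information-spectrum union bounds over the $\calG_1,\calG_2$ events — carries over verbatim from the proof of Theorem~\ref{th1}, since those steps never used the specific form of the channel beyond the i.i.d.\ generation from $P_{X_{[k]}Y}^{h(p)}$ and the definitions of $C(p,t)$ and the mutual information density.
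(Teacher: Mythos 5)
Your high-level plan is the right one and matches the paper's: run Algorithm~\ref{alg:1} with the AWGN channel, keep the same decomposition into the query-size atypicality term plus the three error-event terms, truncate the Gaussian output so that the change-of-measure step is valid, and pay for the truncation with an additional additive tail term while replacing $\exp(n\eta L c(h(p)))$ by $\exp(\Xi(n,p,\sigma,\eta))$. Steps (i) and (iii) are correct, and your algebra for the per-symbol log-likelihood ratio is the same as the paper's identity
\begin{align*}
\log \frac{P_{\mathbf{X}Y^n}^{\mathrm{md}}}{P_{\mathbf{X}Y^n}^{\mathrm{alt}}}
= \sum_{i\in[n]} \log \frac{h(p)}{h(|\calA_i|)} + \sum_{i\in[n]} (y_i-z_i)^2\,\frac{h(|\calA_i|)^2-h(p)^2}{2\sigma^2 h(p)^2 h(|\calA_i|)^2}.
\end{align*}

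However, there is a genuine gap in steps (ii) and (iv): the truncation set you propose is a per-coordinate box $\{(y_i-z_i)^2 < D \text{ for all } i\in[n]\}$ with a fixed $D$ and a union bound, while the paper truncates on the \emph{aggregate} energy $\{\|Y^n-z^n\|^2 \leq n\zeta\}$ with $\zeta := 2\sigma^2\big(h(p)^2 + L\eta(2h(p)+L\eta)\big) = 2\sigma^2(h(p)+L\eta)^2$, which is twice an upper bound on the per-symbol noise variance. These two choices are not interchangeable here. To bound the change-of-measure term by exactly the second summand in $\Xi(n,p,\sigma,\eta)$ in \eqref{aleph}, you need $\sum_i (y_i-z_i)^2 \leq n\zeta$ with $\zeta=\Theta(1)$; a per-coordinate threshold $D=\Theta(1)$ delivers this only after multiplying by $n$, and worse, the per-coordinate atypical probability is then $\Theta(1)$, so the union bound costs a factor $n$ and does not vanish. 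If you instead take $D$ large enough (growing like $n$) so that the union-bounded tail is $\exp(-n(1-\log 2)/2)$, the resulting bound on $\sum_i (y_i-z_i)^2 \leq nD$ is $\Theta(n^2)$ and you get an extra factor of $n$ in $\Xi$ that does not appear in \eqref{aleph}. The correct move is the sum constraint plus a chi-squared-type concentration bound: under the typical-$\mathbf{x}$ event the effective noise variance at each time is at most $(h(p)+L\eta)^2\sigma^2$, so $\|Y^n-z^n\|^2/((h(p)+L\eta)^2\sigma^2)$ is stochastically dominated by a $\chi^2_n$ variable, and $\Pr\{\chi^2_n > 2n\} \leq \exp(-n(1-\log2)/2)$, which gives the second additive term in \eqref{th2-rlt} with no $n$ prefactor and simultaneously yields the precise $\Xi$ of \eqref{aleph}. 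With that correction your outline becomes the paper's proof.
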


The proof of Theorem \ref{th2} is analogous to that of Theorem \ref{th1} and the important differences are emphasized in Section \ref{Achievable Non-Asymptotic Bound AWGN}. Relative to Theorem \ref{th1}, there is an additional term $\exp\big(-\frac{n(1-\log 2)}{2}\big)$ and coefficient term $\Xi(n,p,\delta,\eta)$, which quantify the effect of truncating the channel output.

The next result provides a non-asymptotic converse bound on the performance of any optimal non-adaptive query procedure. Fix any $(n,k,d,\varepsilon)\in\bbN^3\times(0,1)$ and recall the definition of $\imath_{\calA}(\cdot)$ in \eqref{mu_info_a}.

\begin{theorem} \label{th3}
Given any $\beta\in\big(0,\frac{1-\varepsilon}{2}\big)$ and any $\kappa\in(0,1-\varepsilon-2(1+4v_+)k^2d\beta)$, any $\delta(n,k,d,\varepsilon)$-non-adaptive query procedure satisfies
\begin{align}
-\log\delta
\nn&\leq \sup_{\calA^n\in([0,1]^d)^n}\frac{1}{2dk}\Bigg(\sup\Bigg\{r\in\bbR_+:~\mathrm{Pr}\Bigg\{\sum_{i\in[n]}\imath_{\calA_i}(Z_i;Y_i) \leq r\Bigg\} \leq \varepsilon+2(1+4v_+)k^2d\beta+\kappa\Bigg\}\\*
&\qquad\qquad\qquad\qquad\qquad\qquad\qquad\qquad\qquad\qquad\qquad\qquad\qquad\qquad\qquad-dk\log(2nv_+\beta^2)-\log\kappa\Bigg), \label{th3-rlt} 
\end{align}
where $Z_i\sim\mathrm{Bern}(1-(1-|\calA_i|)^k)$.
\end{theorem}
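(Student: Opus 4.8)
The plan is to prove this converse by (i) passing to a worst-case prior under which the oracle's noiseless response has the law stated in the theorem, (ii) reducing trajectory estimation at resolution $\delta$ to a list-type multi-hypothesis test over the effective ``OR channel'' $Z^n\to Y^n$, and (iii) invoking an information-spectrum (Verd\'u--Han) converse and solving for $\delta$. For step (i): fix any $(n,k,d,\delta,\varepsilon)$-non-adaptive query procedure with query sets $\calA^n$ and decoder $g$. Since $\mathrm{P_e}(n,k,d,\delta)$ is a supremum over priors $f_{\bS^k\bV^k}$, it suffices to lower bound the error probability for one convenient choice, and I would take $(\bS_t,\bV_t)$, $t\in[k]$, i.i.d.\ uniform on $\calS\times\calV$. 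The reflected (billiard) dynamics in \eqref{loc} preserve Lebesgue measure on $[0,1]^d$, so for every $i$ the answers $X_i(t)=\bbo(l(\bS_t,\bV_t,i)\in\calA_i)$ are i.i.d.\ $\mathrm{Bern}(|\calA_i|)$ across $t$, whence $Z_i$ is exactly $\mathrm{Bern}(1-(1-|\calA_i|)^k)$, as in the statement; I would note that $Z^n$ is \emph{correlated} across time because a target's velocity is shared along its whole trajectory, a point that costs an $O(\kappa)$ correction later.

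For step (ii): write $\Theta=(\bS^k,\bV^k)$ and, for each $(t,j)\in[k]\times[d]$, quantise the initial-location interval $[0,1]$ and the velocity interval $[-v_+,v_+]$ on scales governed by $\beta$; because a velocity perturbation $\Delta v$ displaces a coordinate of the trajectory by $i\,\Delta v$ over $i\le n$ steps, the velocity mesh must be a factor $\asymp n$ finer than the location mesh — this is the origin of the factor $nv_+$ and of the exponent $2dk$ ($d$ dimensions, two kinematic degrees of freedom per target). Let $W$ denote the resulting (uniform) cell index of $\Theta$, ranging over $M$ values. The kinematics of \eqref{loc}, applied uniformly over the $n+1$ time points and through the boundary reflections, show that an estimate within resolution $\delta$ of the true trajectory of \emph{every} target determines $W$ up to a list of size $L$, with $\log(M/L)=dk\log\frac{2nv_+\beta^2}{\delta^2}$ after the kinematic bookkeeping; hence $g$, post-processed into ``all cells whose representative trajectory lies within $\delta$ of an estimated trajectory'', is a list-size-$L$ decoder for $W$. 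Two events break this reduction and have to be excised: cells straddling the boundary of some $\calA_i$ (on which $Z_i$ is not constant) and the event that two of the $k$ targets fall in the same or adjacent cells (which is fatal because $g$ returns an unordered set); a union bound over the $k$ targets, the $\binom{k}{2}$ pairs and the $d$ dimensions, separating the location contribution ``$1$'' from the velocity contribution ``$4v_+$'', produces the additive slack $2(1+4v_+)k^2d\beta$. Thus $\mathrm{P_e}\ge\mathrm{Pr}\{W\notin\mathrm{list}(Y^n)\}-2(1+4v_+)k^2d\beta$.

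For step (iii): by a list-decoding information-spectrum converse in the spirit of \cite[Lemma~7.10.1]{koga2002information}, $\mathrm{Pr}\{W\notin\mathrm{list}(Y^n)\}\ge \mathrm{Pr}\{\imath(W;Y^n)\le \log(M/L)+\log\kappa\}-\kappa$. Since $Y^n$ depends on $W$ only through $Z^n$, which is a deterministic function of $\Theta$, the data-processing structure $W\to\Theta\to Z^n\to Y^n$ lets me bound $\imath(W;Y^n)$ by $\imath(Z^n;Y^n)$, and then replace the true marginal of $Y^n$ by the product $\prod_{i\in[n]}P_{Y_i}^{h(|\calA_i|)}$ — at the price of one more Markov bound on the likelihood ratio $\prod_i P_{Y_i}^{h(|\calA_i|)}/P_{Y^n}$, whose expectation under $P_{Y^n}$ equals $1$ — turning $\imath(Z^n;Y^n)$ into $\sum_{i\in[n]}\imath_{\calA_i}(Z_i;Y_i)$ up to an $O(\kappa)$ term. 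Chaining these bounds with $\mathrm{P_e}\le\varepsilon$ gives $\mathrm{Pr}\{\sum_{i\in[n]}\imath_{\calA_i}(Z_i;Y_i)\le \log(M/L)+\log\kappa\}\le \varepsilon+2(1+4v_+)k^2d\beta+\kappa$, so by the definition of the inner supremum $r$ we obtain $\log(M/L)+\log\kappa\le r$; substituting $\log(M/L)=dk\log\frac{2nv_+\beta^2}{\delta^2}$, rearranging for $-\log\delta$, and finally replacing the procedure's (arbitrary) queries by the supremum over all $\calA^n\in([0,1]^d)^n$ yields \eqref{th3-rlt}.

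The hard part will be the kinematic packing/covering bookkeeping of step (ii): one must simultaneously control the count $M$ of well-separated cell representatives and the list size $L$ cut out by a $\delta$-ball in the $\ell_\infty$ trajectory metric, uniformly over $i\in[0:n]$ and through the reflections of \eqref{loc}, and do so while pinning the boundary-plus-collision slack exactly at $2(1+4v_+)k^2d\beta$ — this is where the constants $2$, $(1+4v_+)$ and the scaling $2nv_+\beta^2$ are determined. A secondary technical point is the careful passage from $\imath(Z^n;Y^n)$, with its time-correlated $Z^n$, to the single-letter sum $\sum_i\imath_{\calA_i}(Z_i;Y_i)$ in the statement, and the bookkeeping of the small-probability corrections into the single parameter $\kappa$.
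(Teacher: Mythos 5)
Your overall architecture (pick a worst-case prior, quantize $(\bS,\bV)$ to a discrete message set, reinterpret the query responses as a channel code over the query-dependent channel with $Z_i\sim\mathrm{Bern}(1-(1-|\calA_i|)^k)$, and invoke an information-spectrum converse with a free choice of output distribution $Q_{Y^n}=\prod_i P_Y^{h(|\calA_i|)}$) matches the paper's strategy, and your arithmetic $\log(M/L)\approx dk\log\frac{2nv_+\beta^2}{\delta^2}$ agrees with the paper's message count $\log(\tilde M\tilde M_*)^{dk}$ with $\tilde M=\lfloor\beta/\delta\rfloor$ and $\tilde M_*=\lfloor 2nv_+\tilde M\rfloor$. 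But the step you flag as ``the hard part'' --- the packing/covering bookkeeping ``uniformly over $i\in[0:n]$ and through the reflections'' --- is precisely what the paper does \emph{not} do, and the missing idea is a substantive simplification. The paper never wrestles with the full $\ell_\infty$ trajectory metric over all $n+1$ time steps. Instead it evaluates the excess-resolution event only at $i=0$ (giving initial-location error $\|\hatbs-\bS_t\|_\infty>\delta$) and at $i=n$ (where, if $\|\hatbs-\bS_t\|_\infty\leq\delta$, the triangle inequality yields $\|l(\hatbs,\hatbv,n)-l(\bS_t,\bV_t,n)\|_\infty\geq n\|\hatbv-\bV_t\|_\infty-\delta$, so $n\|\hatbv-\bV_t\|_\infty>2\delta$ forces an excess-resolution event). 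This converts ``trajectory estimation within $\delta$ for all $i$'' into ``estimate $\bS_t$ within $\delta$ and $\bV_t$ within $2\delta/n$'', and makes the remaining quantization argument a completely standard one over a product space $\calS^k\times\calV^k$, with no kinematics and no reflections at all.

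Two of your stated mechanisms for the $2(1+4v_+)k^2d\beta$ slack are also not what actually produces it. It does not come from cells straddling $\partial\calA_i$, nor from two targets colliding in adjacent cells; in the paper it is purely the probability, under a uniform prior, that a coordinate of $\bS_t$ (respectively $\bV_t$) lies within $\delta$ (respectively $2\delta/n$) of a quantization-cell boundary at pitch $1/\tilde M$ (respectively $2v_+/\tilde M_*$), so that an estimate within resolution can nonetheless land in a neighboring cell. A union bound over $d$ coordinates, the $k$ true indices, and the $k$-element min inside the excess-resolution event gives exactly $k^2\cdot d\cdot(2\delta\tilde M+\tfrac{4\delta}{n}\tilde M_*)\leq 2(1+4v_+)k^2d\beta$. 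Finally, your $O(\kappa)$ Markov step to pass from the true $P_{Y^n}$ to $\prod_i P_Y^{h(|\calA_i|)}$ is unnecessary: the paper uses a channel-coding converse (cited as \cite[Proposition 4.4]{tan2014asymptotic}) in which the output distribution $Q_{Y^n}$ is a free parameter and the $-\log\kappa$ in \eqref{th3-rlt} is already built into that bound; choosing $Q_{Y^n}=\prod_i P_Y^{h(|\calA_i|)}$ directly produces the single-letter sum $\sum_i\imath_{\calA_i}(Z_i;Y_i)$ with no extra correction. With these repairs your proof would close; as written, step (ii) as you propose it is the genuine gap.
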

The proof of Theorem \ref{th3} is similar to~\cite{zhou2021resolution,zhou2022resolution,zhou2023resolution} and is provided in Section \ref{Non-Asymptotic Converse Bound} for completeness. Theorem \ref{th3} resembles~\cite[Theorem 2]{zhou2022resolution} and~\cite[Theorem 3]{zhou2023resolution}, reflecting the similarity of analysis approach, specifically, converting the performance analysis of an non-adaptive query procedure to the performance analysis for a channel coding problem, followed by the application of the non-asymptotic converse bound~\cite[Proposition 4.4]{tan2014asymptotic} in channel coding.

\subsection{Second-Order Asymptotic Bounds} \label{mr Second-Order Asymptotic Bounds}

Recall the definitions of $C(k)$ in \eqref{ck} and $V(k,\varepsilon)$ in \eqref{vk}, respectively. Using the non-asymptotic bounds in Theorems \ref{th1} to \ref{th3}, we obtain the following second-order asymptotic approximation for any query-dependent DMC satisfying \eqref{dmc} and any query-dependent AWGN channel.

\begin{theorem} \label{th4}
For any excess-resolution probability $\varepsilon\in(0,1)$, the minimal achievable resolution $\delta^*(n,k,d,\varepsilon)$ of an optimal non-adaptive query procedure satisfies
\begin{align}
-\log \delta^*(n,k,d,\varepsilon)&\geq \frac{nC(k)+\sqrt{nV(k,\varepsilon)}\Phi^{-1}(\varepsilon)-4dk\log n-nv_+ + O(1)}{2dk},\label{th4-rlt}\\
-\log \delta^*(n,k,d,\varepsilon)&\leq \frac{nC(k)+\sqrt{nV(k,\varepsilon)}\Phi^{-1}(\varepsilon)+O(\log n)}{2dk}. \label{th4-rlt-2}
\end{align}
\end{theorem}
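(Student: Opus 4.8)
The plan is to derive the achievability bound \eqref{th4-rlt} from the non-asymptotic bounds of Theorems~\ref{th1}--\ref{th2} (the single-threshold procedure of Algorithm~\ref{alg:1}) by a judicious choice of the free parameters $(M,p,\gamma,\eta,\{\lambda_\calJ\})$, and the converse \eqref{th4-rlt-2} from Theorem~\ref{th3} by a Berry--Esseen analysis of the single-letter information density $\imath_{\calA}(\cdot)$ of \eqref{mu_info_a}. In both directions the moving-target geometry enters only through the trajectory count $|\calB_{n,M}|\le\big((2nv_++3)n^4M^2\big)^d$: the polynomial factor $(2nv_++3)n^4$ beyond the stationary count $M^2$ per coordinate produces the $4dk\log n+nv_+$ term, while the doubling of the normalization from $dk$ (stationary) to $2dk$ reflects that a trajectory carries about twice the information of a static location per coordinate (initial point and slope).

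\emph{Achievability.} I would fix $p=p^*\in\calP_k$, choosing the capacity-achieving input that minimizes $V(p^*,k)$ if $\varepsilon\le1/2$ and maximizes it if $\varepsilon>1/2$, so that $V(p^*,k)=V(k,\varepsilon)$ by \eqref{vk}. Since $M$ will be exponentially large in $n$, the product $n^dM^d$ dominates every polynomial in $n$, so a polynomially small $\eta$ (say $\eta=n^{-2}$) makes $4n\exp(-2n^dM^d\eta^2)\to0$ and $n\eta Lc(h(p^*))\to0$, hence $\exp(n\eta Lc(h(p^*)))\to1$; in the AWGN case the same $\eta$ drives $\Xi(n,p^*,\sigma,\eta)\to0$ in \eqref{aleph} and makes $\exp(-n(1-\log2)/2)$ negligible, so Theorem~\ref{th2} is asymptotically no weaker than Theorem~\ref{th1}. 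Next I would let each $\lambda_\calJ\to0$ but strictly slower than $n^{-1/2}$ (e.g.\ $\lambda_\calJ=n^{-1/2}\log n$): since $\imath^{h(p^*)}(X_\calJ^n;Y^n)$ is a bounded i.i.d.\ sum of mean $nC(p^*,|\calJ|)$, Chebyshev gives $\Pr\{\calG_2(n,\calJ,\lambda_\calJ)\}=O((\log n)^{-2})\to0$. Finally I would set $\gamma=nC(k)+\sqrt{nV(k,\varepsilon)}\,\Phi^{-1}(\varepsilon-\rho_n)$ with $\rho_n=\Theta(n^{-1/2})$ absorbing the Berry--Esseen remainder of $\Pr\{\calG_1(n,\gamma)\}$ together with the other $o(1)$ terms, so that the right-hand side of \eqref{th1-rlt} (resp.\ \eqref{th2-rlt}) is $\le\varepsilon$; this yields $\gamma=nC(k)+\sqrt{nV(k,\varepsilon)}\Phi^{-1}(\varepsilon)+O(1)$. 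The $\calJ=\emptyset$ binomial term $\bbH_1\big(\big((2nv_++3)n^4M^2\big)^d-k,\,k,\,\gamma\big)\to0$ then forces $\log M=\tfrac{1}{2dk}\gamma-\tfrac12\log\big((2nv_++3)n^4\big)-\omega(1)$, and substituting this into the remaining terms $\bbH_1\big(\cdot,\,k-|\calJ|,\,\gamma-nC(p^*,|\calJ|)-n\lambda_\calJ\big)$ collapses each of them to the feasibility inequality $\tfrac{|\calJ|}{k}\gamma\ge nC(p^*,|\calJ|)+n\lambda_\calJ+\omega(1)$ for every $\calJ\in\Lambda([k])$; with $\gamma\approx nC(k)$ this amounts to $C(p^*,|\calJ|)/|\calJ|<C(k)/k$ for $|\calJ|<k$, the multiple-target analogue of the feasibility condition in~\cite{zhou2022resolution} for the Boolean multiple-access channel \eqref{joint_d}. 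Using $\delta=2/M$, hence $-\log\delta=\log M-\log2$, plugging in $\gamma$, and bounding $dk\log(2nv_++3)\le nv_++O(1)$, yields \eqref{th4-rlt}.

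\emph{Converse.} From Theorem~\ref{th3}, writing $z=x_1\vee\cdots\vee x_k$ one has $\imath_{\calA_i}(z;y)=\imath^{h(|\calA_i|)}(x_{[k]};y)$, so $\mathbb{E}[\imath_{\calA_i}(Z_i;Y_i)]=C(|\calA_i|,k)\le C(k)$ and $\var[\imath_{\calA_i}(Z_i;Y_i)]=V(|\calA_i|,k)$, with bounded third moment under \eqref{dmc}. A non-i.i.d.\ Berry--Esseen estimate bounds the supremal $r$ in \eqref{th3-rlt} by $\sum_iC(|\calA_i|,k)+\sqrt{\sum_iV(|\calA_i|,k)}\,\Phi^{-1}\big(\varepsilon+2(1+4v_+)k^2d\beta+\kappa\big)+O(1)$; taking $\beta,\kappa\to0$ like $(\log n)/\sqrt n$ (up to $v_+$-dependent factors, compatibly with the constraints $\beta<\tfrac{1-\varepsilon}{2}$ and $\kappa<1-\varepsilon-2(1+4v_+)k^2d\beta$) turns $\Phi^{-1}(\varepsilon+\cdots)$ into $\Phi^{-1}(\varepsilon)+o(1)$ and makes $-dk\log(2nv_+\beta^2)-\log\kappa=O(\log n)$. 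It then remains to argue, via a dichotomy on whether $\sum_i(C(k)-C(|\calA_i|,k))$ exceeds a slowly growing threshold, that either $r$ is already below $nC(k)+\sqrt{nV(k,\varepsilon)}\Phi^{-1}(\varepsilon)$, or else almost all $|\calA_i|$ lie in a shrinking neighborhood of $\calP_k$, on which a Taylor expansion of $p\mapsto C(p,k)$ and $p\mapsto V(p,k)$ gives $\sum_iC(|\calA_i|,k)\le nC(k)+O(\log n)$ and $\sqrt{\sum_iV(|\calA_i|,k)}\,\Phi^{-1}(\varepsilon)\le\sqrt{nV(k,\varepsilon)}\Phi^{-1}(\varepsilon)+O(\log n)$, the $\min/\max$ over $\calP_k$ being dictated by the sign of $\varepsilon-1/2$; dividing by $2dk$ and taking the supremum over $\calA^n$ gives \eqref{th4-rlt-2}.

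\emph{Main obstacle.} I expect the crux to be the variance bookkeeping in the converse dichotomy --- showing that the near-capacity query sizes contribute at most $nV(k,\varepsilon)$ to $\sum_iV(|\calA_i|,k)$, with the correct $\min$ or $\max$ over $\calP_k$, while the far-from-optimal sizes are too few to perturb the $\Phi^{-1}(\varepsilon)$ term by more than $O(\log n)$ --- together with, on the achievability side, verifying that the single threshold $\gamma\approx nC(k)$ simultaneously suppresses all $|\Lambda([k])|=2^k-2$ partial-overlap error events through $C(p^*,|\calJ|)/|\calJ|<C(k)/k$. Matching the remaining $v_+$- and $\log n$-dependent lower-order terms to the clean forms in \eqref{th4-rlt}--\eqref{th4-rlt-2} is then routine accounting.
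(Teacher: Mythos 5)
Your proposal is correct and follows essentially the same route as the paper's proof of Theorem~\ref{th4}: for achievability, instantiate Theorems~\ref{th1}--\ref{th2} with $p=p^*\in\calP_k$, a Berry--Esseen estimate on $\Pr\{\calG_1(n,\gamma)\}$, a concentration estimate on $\Pr\{\calG_2\}$, and choices of $(M,\gamma,\eta,\lambda_\calJ)$ that drive the combinatorial $\bbH_1$ terms and atypicality/change-of-measure factors to zero; for the converse, apply Theorem~\ref{th3} with $\beta=\kappa=\Theta(n^{-1/2})$, Berry--Esseen on $\sum_i\imath_{\calA_i}(Z_i;Y_i)$, and optimize over query sizes. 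The only substantive difference is that where you propose an explicit dichotomy/Taylor-expansion argument to show that near-capacity query sizes dominate the $\sup_{\calA^n}$ in the converse, the paper simply invokes the standard result \cite[Lemma 49]{polyanskiy2010channel}, which packages precisely that trade-off between the capacity shortfall and the variance perturbation; your minor parameter variants (Chebyshev in place of LLN, $\eta=n^{-2}$ in place of $\eta=\sqrt{d\log M/2M^d}$) are equally valid.
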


The proof of Theorem \ref{th4} requires an extension of the proofs in~\cite[Theorem 4]{zhou2023resolution} and~\cite[Theorem 3]{zhou2022resolution}. The lower bound \eqref{th4-rlt} and the upper bound \eqref{th4-rlt-2} on the negative log resolution become identical in the asymptotic limit as $n \rightarrow \infty$. The proposed single threshold procedure in Algorithm \ref{alg:1} asymptotically achieves the bound \eqref{th4-rlt}, and is therefore resolution optimal in the limit of large $n$. Theorem \ref{th4} implies that the minimal achievable resolution $\delta^*(n,k,d,\varepsilon)$ is closely related to the maximum speed $v_+$. Specifically,
\begin{align}
-2dk\log\delta^*(n,k,d,\varepsilon) =\left\{
\begin{aligned}
&nC(k)+O(nv_+)&\mathrm{if}~&nv_+=O(n^{\nu}),\nu\in[0.5,1], \\
&nC(k)+\sqrt{nV(k,\varepsilon)}\Phi^{-1}(\varepsilon)+O(nv_+)&\mathrm{if}~&nv_+=O(n^{\nu}),\nu\in(0,0.5), \\
&nC(k)+\sqrt{nV(k,\varepsilon)}\Phi^{-1}(\varepsilon)+O(\log n)&\mathrm{if}~&nv_+=O(1),
\end{aligned}
\right. \label{vclass}
\end{align}
where \eqref{vclass} identifies different regimes of $v_+$ as a function of the number of queries $n$. The above result generalizes~\cite[Corollary 2]{zhou2023resolution} that applies only to a single moving target.

Compared to~\cite[Theorem 3]{zhou2022resolution} for multiple stationary target search, Theorem \ref{th4} differs in two aspects: i) the denominator has an extra coefficient of $2$, ii) the numerator has an extra penalty term $nv_+$. The first term reflects that searching for moving targets twice as hard as searching for stationary targets. Put another way, searching for moving targets is equivalent to searching for double dimensional stationary targets, which specializes to~\cite[Theorem 4]{zhou2023resolution} when $k=1$. The second term appears since moving target search has maximal speed $v_+$. 

When specialized to stationary target search with $v_+=0$, the query procedure in Algorithm \ref{alg:1} with single threshold decoding achieves the same performance as the optimal query procedure~\cite{zhou2022resolution} that has $2^k-1$ thresholds for $k$ targets. Since only a single threshold is used, the computational complexity of Algorithm \ref{alg:1} is much lower.

Finally, in Fig. \ref{fig2}, we plot the second-order approximations to achievable resolutions as a function of the number of targets for various cases of maximal speed $v_+$. As observed, when the number of targets $k$ or the maximal speed $v_+$ increases, the achievable resolution decreases. The above numerical result is consistent with our intuition since it is more challenging to locate more number of targets or faster moving targets.

\begin{figure}[tb]
\centering
\includegraphics[width=0.5\columnwidth]{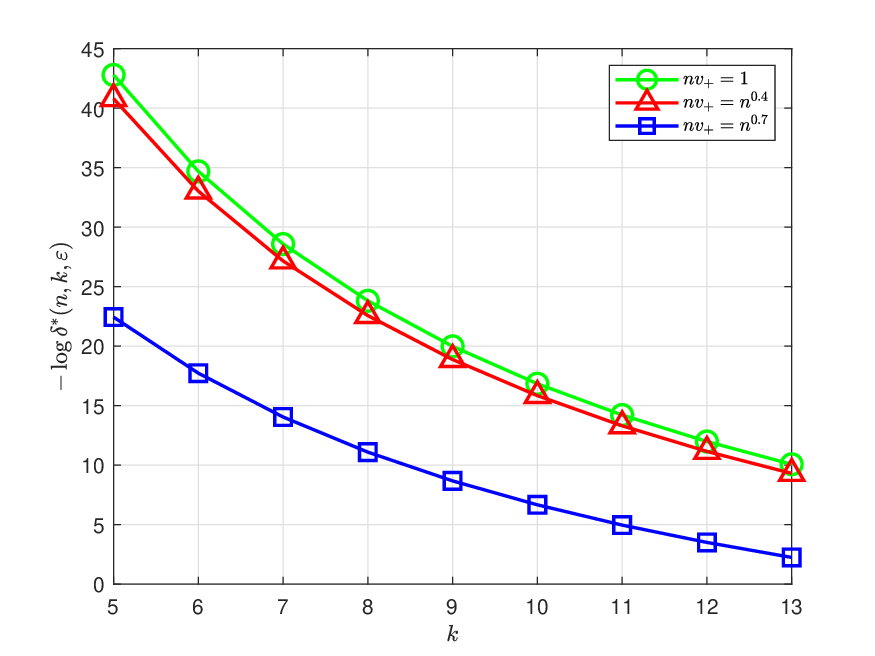}
\caption{Plot of achievable resolution of optimal non-adaptive query procedures for multiple moving target search with excess-resolution probability $\varepsilon = 0.3$ under a query-dependent BSC with a Lipschitz continuous function $h(p)=0.1+0.3p$ when the number of queries $n=1000$ and the dimension $d=1$.}
\label{fig2}
\end{figure}

\section{Effect of prior knowledge and an extension} \label{sec_sg} 

In this section, we discuss two specializations and one generalization of $k$ moving target search. Two specializations are the case with unknown initial locations but known velocities and the case known initial locations and unknown velocities, respectively. For the former case, we show that the theoretical benchmark is exactly the same as stationary target search in~\cite{zhou2022resolution}. For the latter case, the result is analogous to Theorem \ref{th4} with the exception that the factor $2$ in the denominator disappears. This is because the latter case corresponds to the search over the velocity only, which generalizes the analysis of a single moving target~\cite{zhou2023resolution} with $B\geq 2$. Combining the latter specialization and Theorem \ref{th4}, we study the extension to a piecewise constant velocity model with $B$ time slots as in~\cite{zhou2023resolution}. For simplicity, in all settings, we only emphasize the difference of query-procedure and present the theoretical benchmarks.

\subsection{Prior Information on Target Velocities} \label{sg Unknown Initial Locations and Known Velocities Case}

Let $\bv^k=(\bv_1,\ldots,\bv_k)\in\calV^k$ be the known velocities of $k$ targets. A non-adaptive query procedure is defined as follows.

\begin{definition} \label{def4}
Given any three positive integers $(n,k,d)\in\bbN^3$ and two positive real numbers $(\delta,\varepsilon)\in\bbR_+\times(0,1)$, an $(n,k,d,\delta,\varepsilon)$-non-adaptive query procedure consists of
\begin{itemize}
\item[$\bullet$] $n$ query sets $(\calA_1,\dots,\calA_n) \subseteq ([0,1]^d)^n$,
\item[$\bullet$] a decoding function $g:\calY^n \rightarrow \calS^k$,
\end{itemize}
such that the excess-resolution probability satisfies
\begin{align}
\mathrm{P_e}(n,k,d,\delta):=\sup_{f_{\bS^k}\in\calF(\calS^k)} \mathrm{Pr}\bigg\{ \exists~t\in[k]:\min_{\hatbs\in\hatbs^k} \max_{i\in[0:n]} \big\| l\big(\hatbs,\bv_t,i\big)-l\big(\bS_t,\bv_t,i\big) \big\|_{\infty} > \delta \bigg\} \leq \varepsilon ,
\end{align}
where $\hatbs^k = \{\hatbs_1,\ldots,\hatbs_k\}$ is the set of estimated initial locations.
\end{definition}

Given any $s\in[0,1]$, define a quantization function
\begin{align}
\bar{\mathrm{q}}(s):=\lceil sM \rceil.
\end{align}
Analogously to \eqref{quantized-loc}, the quantized location is defined as
\begin{align}
\bar{w}(s_j,v_j,i):=\mathrm{q}\big(l(s_j,v_j,i)\big) ,
\end{align}
and the quantized location of each target is a $d$-dimensional vector $\bar{w}(\mathbf{s},\mathbf{v},i)=(\bar{w}(s_1,v_1,i),\ldots,\bar{w}(s_d,v_d,i))$. Let $\bar{\bw}(\mathbf{s},\mathbf{v},[n])=(\bar{w}(\mathbf{s},\mathbf{v},1),\dots,\bar{w}(\mathbf{s},\mathbf{v},n))\in([M]^d)^n$ be the collection of quantized locations at discrete time points $[n]$ of a target. Define the set of trajectories with known velocities $\bv\in\calV$ as
\begin{align}
\calB_{n,M}(\bv):=\big\{\bar{\bw}^n\in([M]^d)^n: \bar{\bw}^n=\bar{\bw}(\bs,\bv,[n]) \mathrm{~for~some}~\bs\in[0,1]^d\big\}.
\end{align}
Given a fixed initial location $\bs\in\calS$ and a velocity $\bv\in\calV$, the quantized trajectory is fixed. Thus, 
\begin{align}
\big|\calB_{n,M}(\bv)\big|&\leq M^d\label{bnmv}.
\end{align}
In other words, when the moving velocity is known, searching for a moving target is equivalent to searching for a stationary target, with the only exception that the search should be done with quantized trajectories instead of initial locations. Compared with the original problem where both the initial locations and velocities are known, the number of variables to be estimated reduces from $2d$ to $d$ and the number of possible trajectories reduces from $((2nv_++3)n^4M^2)^d$ to $M^d$. Correspondingly, the non-adaptive query procedure in Algorithm \ref{alg:1} is simplified to Algorithm \ref{alg:2} provided in Appendix \ref{Specialization Bound 1}.

The theoretical benchmark is as follows.

\begin{theorem} \label{caseA-2order}
For any $(n, k, d, \varepsilon)\in\bbN^3 \times (0,1)$, the minimal achievable resolution $\delta^*(n,k,d,\varepsilon)$ of an optimal non-adaptive query procedure satisfies
\begin{align}
-\log \delta^*(n,k,d,\varepsilon) = \frac{nC(k)+\sqrt{nV(k,\varepsilon)}\Phi^{-1}(\varepsilon)+\Theta(\log n)}{dk}. \label{caseA-2order-rlt}
\end{align}
\end{theorem}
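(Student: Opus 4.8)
The plan is to follow the proof of Theorem~\ref{th4}, simplified by the fact that, once the velocities $\bv^k$ are known, the search reduces to estimating the $k$ quantized initial locations, so the relevant trajectory set $\calB_{n,M}(\bv)$ has size at most $M^d$ (see~\eqref{bnmv}) rather than $\big((2nv_++3)n^4M^2\big)^d$. First I would prove a non-asymptotic achievability bound for the simplified single-threshold procedure (Algorithm~\ref{alg:2}): repeating the derivation of Theorem~\ref{th1} (and of Theorem~\ref{th2} for the AWGN channel) essentially verbatim, but replacing every occurrence of $\big((2nv_++3)n^4M^2\big)^d$ by $M^d$, yields for any $p\in(0,1)$, $(\gamma,\eta)\in\bbR_+^2$ and positive reals $\{\lambda_\calJ\}_{\calJ\in\Lambda([k])}$ an $\big(n,k,d,\Theta(1/M),\varepsilon\big)$-procedure with
\begin{align}
\varepsilon &\leq 4n\exp\big(-2n^dM^d\eta^2\big)+\exp\big(n\eta Lc(h(p))\big)\Bigg(\Pr\{\calG_1(n,\gamma)\}+\bbH_1\big(M^d-k,k,\gamma\big) \nn\\
&\qquad +\sum_{\calJ\in\Lambda([k])}\Big(\Pr\{\calG_2(n,\calJ,\lambda_\calJ)\}+\bbH_1\big(M^d-k,k-|\calJ|,\gamma-nC(p,|\calJ|)-n\lambda_\calJ\big)\Big)\Bigg), \nn
\end{align}
and an analogous bound (carrying the Gaussian-truncation corrections of Theorem~\ref{th2}) covers the AWGN channel. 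This step is bookkeeping; the reduction $\calB_{n,M}\to\calB_{n,M}(\bv)$ is legitimate because $\bs\mapsto l(\bs,\bv,i)$ is $1$-Lipschitz, so resolving the initial location to $\Theta(1/M)$ resolves the whole trajectory to $\Theta(1/M)$.

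The heart of the achievability is the second-order expansion of this bound. I would fix $p=p^*\in\calP_k$, let $\eta\to0$ slowly enough that $n^dM^d\eta^2\to\infty$ yet $n\eta\to0$ (possible because $\log(M^d)=\Theta(n)$), take every $\lambda_\calJ$ to be a small positive constant, and set $\gamma=nC(k)+\sqrt{nV(k,\varepsilon)}\,\Phi^{-1}(\varepsilon)+O(1)$ so that the Berry--Esseen theorem applied to the sum of i.i.d.\ terms defining $\calG_1(n,\gamma)$ gives $\Pr\{\calG_1(n,\gamma)\}\le\varepsilon-o(1)$; the law of large numbers gives $\Pr\{\calG_2(n,\calJ,\lambda_\calJ)\}\to0$. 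The term $\bbH_1(M^d-k,k,\gamma)\le e^{dk\log M-\gamma}/k!$ vanishes once $dk\log M\le\gamma-\log n$, and taking $dk\log M=\gamma-\log n$ gives $-\log\delta^*\ge\big(nC(k)+\sqrt{nV(k,\varepsilon)}\Phi^{-1}(\varepsilon)-\log n+O(1)\big)/(dk)$. The remaining terms $\bbH_1(M^d-k,k-|\calJ|,\gamma-nC(p^*,|\calJ|)-n\lambda_\calJ)$ are negligible provided $\frac{|\calJ|}{k}\gamma\ge nC(p^*,|\calJ|)+n\lambda_\calJ+\Theta(\log n)$, which, since $\gamma=nC(p^*,k)+O(\sqrt n)$ and the $\lambda_\calJ$ are arbitrarily small, reduces at leading order to $C(p^*,t)\le\frac{t}{k}C(p^*,k)$ for every $t\in[k-1]$. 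This is where the multiple-access structure enters: writing $C(p,t)=t\,\Hb(p)-H(X_{[t]}\mid Y)$ for $X_1,\dots,X_k$ i.i.d.\ $\mathrm{Bern}(p)$, the permutation symmetry of $P_{X_{[k]}Y}^{h(p)}$ together with Han's inequality for conditional entropies gives $H(X_{[t]}\mid Y)\ge\frac{t}{k}H(X_{[k]}\mid Y)$, hence $C(p,t)\le\frac{t}{k}C(p,k)$. Thus only the sum-rate constraint binds, the single-threshold rule of Algorithm~\ref{alg:2} suffices, and the lower bound in~\eqref{caseA-2order-rlt} follows with an $O(\log n)$ deficit.

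For the converse I would adapt the argument of Theorem~\ref{th3}. Given a $\delta(n,k,d,\varepsilon)$-procedure, place on $\bS^k$ a prior uniform over a maximal $\Theta(\delta)$-separated subset of $([0,1]^d)^k$; there are at least $(c/\delta)^{dk}$ such tuples for a constant $c$, with no dependence on $v_+$ because $l(\bs,\bv,0)=\bs$ already forces initial-location resolution. This turns the procedure into a channel code with $\approx(c/\delta)^{dk}$ messages and error at most $\varepsilon$ plus vanishing slack over the non-adaptive query channel with $Z_i\sim\mathrm{Bern}\big(1-(1-|\calA_i|)^k\big)$. Applying the non-asymptotic converse~\cite[Proposition~4.4]{tan2014asymptotic}, optimizing over $\calA^n$ (where $|\calA_i|\equiv p^*$ is optimal and $\sup_{|\calA|}\mathbb{E}[\imath_\calA(Z;Y)]=C(k)$ because $I(X_{[k]};Y)=I(Z;Y)$ when $Z$ is the OR of $X_{[k]}$), and a Berry--Esseen expansion of $\sum_i\imath_{\calA_i}(Z_i;Y_i)$ exactly as in~\cite{zhou2022resolution,zhou2023resolution}, gives $dk\log(1/\delta)\le nC(k)+\sqrt{nV(k,\varepsilon)}\Phi^{-1}(\varepsilon)+O(\log n)$, i.e.\ the upper bound in~\eqref{caseA-2order-rlt}. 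Together with the achievability half, this proves~\eqref{caseA-2order-rlt} with a remainder of order $\log n$.

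The step I expect to be hardest is verifying that the single threshold of Algorithm~\ref{alg:2} operates at the sum-rate point, i.e.\ controlling the $\bbH_1(M^d-k,k-|\calJ|,\cdot)$ terms via $C(p,|\calJ|)\le\frac{|\calJ|}{k}C(p,k)$, while simultaneously synchronizing $\gamma$, $M$, $\eta$ and $\{\lambda_\calJ\}$ tightly enough to pin down the $\Theta(\log n)$ remainder; the Berry--Esseen and meta-converse computations, and the velocity-to-location quantization reduction, are routine adaptations of~\cite{zhou2022resolution,zhou2023resolution}.
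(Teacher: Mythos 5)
Your proposal follows the paper's two-step strategy: specialize Theorem~\ref{th1}'s non-asymptotic bound to the reduced trajectory set of size $\Theta(M^d)$ from~\eqref{bnmv}, then run a Berry--Esseen second-order expansion on the achievability side and a Theorem~\ref{th3}-style converse in which only one quantization scale $\tilde{M}$ (rather than $\tilde{M}\times\tilde{M}_*$) appears, which is exactly what turns the $2dk$ denominator into $dk$. The cosmetic differences from the paper's Appendix~\ref{Specialization Bound 1} -- you use $\bbH_1(M^d-k,\cdot,\cdot)$ while the paper uses $\bbH_2(M^d-1,\cdot,\cdot)$, and the precise $\Theta(\log n)$ offset baked into $\gamma$ differs by a constant factor -- are all absorbed into the $\Theta(\log n)$ remainder and do not matter.

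The substantive contribution of your write-up is that you make explicit \emph{why} the single-threshold rule controls the $|\calJ|<k$ terms. After substituting the choice of $\gamma$ and $M$, the exponent of $\bbH_1(M^d-k,k-|\calJ|,\gamma-nC(p,|\calJ|)-n\lambda_\calJ)$ at leading order is $n\big(C(p,|\calJ|)-\tfrac{|\calJ|}{k}C(p,k)+\lambda_\calJ\big)$ (the binomial prefactor contributes $\approx\tfrac{k-|\calJ|}{k}\,nC(p,k)$). The paper's Theorem~\ref{th4} proof only records the weaker condition $0<\lambda_\calJ\le C(p,k)-C(p,|\calJ|)$, which by itself does not kill this prefactor. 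What is actually needed is exactly the inequality $C(p,t)\le\tfrac{t}{k}C(p,k)$, and your derivation of it -- permutation symmetry of $P_{X_{[k]}Y}^{h(p)}$ plus Han's conditional entropy inequality $H(X_{[t]}\mid Y)\ge\tfrac{t}{k}H(X_{[k]}\mid Y)$, combined with $H(X_{[t]})=\tfrac{t}{k}H(X_{[k]})$ for i.i.d.\ inputs -- is the right and rigorous way to close that step. This is a genuine strengthening of the argument the paper leaves implicit. The rest (the identification $I(X_{[k]};Y)=I(Z;Y)$ in the converse, the Lipschitz reduction from trajectory to initial-location resolution, and the parameter synchronization of $\gamma,M,\eta,\lambda_\calJ$) matches the paper's sketch.
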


A proof sketch of Theorem \ref{caseA-2order} is provided in Appendix \ref{Specialization Bound 1}. The theoretical benchmark in Theorem \ref{caseA-2order} is the same as multiple stationary target search~\cite[Theorem 3]{zhou2022resolution}, indicating that the multiple moving target search is equivalent to multiple stationary targets case when targets’ velocities are known. The single threshold decoding rule in Algorithm \ref{alg:2} can be used in the stationary target search in~\cite{zhou2022resolution} to reduce the computational complexity. In Section \ref{app ft}, we have taken this case into consideration in the beam tracking procedure and numerically simulated the performance in Fig. \ref{fig56}, which shows that our theoretical result in Theorem \ref{caseA-2order} provides a good approximation to the non-asymptotic performance.

\subsection{Prior Information on Target Initial Locations} \label{sg Known Initial Locations and Unknown Velocities Case}

Let $\bs^k=(\bs_1,\ldots,\bs_k)\in\calS^k$ be the known initial locations of $k$ targets. A non-adaptive query procedure is defined as follows.

\begin{definition} \label{def5}
Given any positive integers $(n,k,d)\in\bbN^3$, any tolerable resolution $\delta\in\bbR_+$ and constant $\varepsilon\in(0, 1)$, the second special $(n,k,d,\delta,\varepsilon)$-non-adaptive query procedure consists of
\begin{itemize}
\item[$\bullet$] $n$ query sets $(\calA_1,\dots,\calA_n) \subseteq ([0,1]^d)^n$,
\item[$\bullet$] a decoding function $g:\calY^n \rightarrow \calV^k$,
\end{itemize}
such that the excess-resolution probability satisfies
\begin{align}
\mathrm{P_e}(n,k,d,\delta):=\sup_{f_{\bV^k}\in\calF(\calV^k)} \mathrm{Pr}\bigg\{ \exists~t\in[k]:\min_{\hatbv\in\hatbv^k} \max_{i\in[0:n]} \big\| l\big(\bs_t,\hatbv,i\big)-l\big(\bs_t,\bV_t,i\big) \big\|_{\infty} > \delta \bigg\} \leq \varepsilon ,
\end{align}
where $\hatbv^k = \{\hatbv_1,\ldots,\hatbv_k\}$ is the set of estimated velocities of the targets.
\end{definition}

Recall the definition of the quantized location of a moving target in \eqref{quantized-loc}. Define the set of trajectories with known initial locations $\bs\in\calS$ as
\begin{align}
\calB_{n,M}(\bs):=\big\{\bw^n\in([M]^d)^n: \bw^n=\bw(\bs,\bv,[n]) \mathrm{~for~some}~\bv\in[-v_+,v_+]^d\big\}.
\end{align}
Analogously to \eqref{bnm}, the size of $\calB_{n,M}(\bs)$ is upper bounded by
\begin{align}
\big|\calB_{n,M}(\bs)\big| \leq \big((2nv_++3)n^3M\big)^d.
\end{align}

Compared to the setting of Algorithm \ref{alg:1}, the setting of known initial locations also changes the number of unknown random variable to be estimated from $2d$ to $d$, and thus reduces the number of possible trajectories from $((2nv_++3)n^4M^2)^d$ to $((2nv_++3)n^3M)^d$. Correspondingly, the non-adaptive query procedure is summarized in Algorithm \ref{alg:3} provided in Appendix \ref{Specialization Bound 2} and the second-order asymptotic analysis is summarized in the following theorem.

\begin{theorem} \label{caseB-2order}
For any $(n, k, d, \varepsilon)\in\bbN^3 \times (0,1)$, the minimal achievable resolution $\delta^*(n,k,d,\varepsilon)$ of an optimal non-adaptive query procedure satisfies
\begin{align}
-\log \delta^*(n,k,d,\varepsilon)&\geq \frac{nC(k)+\sqrt{nV(k,\varepsilon)}\Phi^{-1}(\varepsilon)-3dk\log n - nv_+ + O(1)}{dk},\label{caseB-2order-rlt-1}\\
-dk\log \delta^*(n,k,d,\varepsilon)&\leq nC(k)+\sqrt{nV(k,\varepsilon)}\Phi^{-1}(\varepsilon)+O(\log n). \label{caseB-2order-rlt-2}
\end{align}
\end{theorem}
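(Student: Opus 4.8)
The plan is to mirror the proof of Theorem \ref{th4} but with two structural simplifications: (i) the unknown parameter is now the velocity $\bv^k\in\calV^k$ only, so the effective search dimension is $dk$ rather than $2dk$, which removes the factor $2$ from the denominator; and (ii) the size of the relevant trajectory set is $((2nv_++3)n^3M)^d$ rather than $((2nv_++3)n^4M^2)^d$, so the metric-entropy term contributes $3dk\log n$ instead of $4dk\log n$. For the achievability bound \eqref{caseB-2order-rlt-1}, I would first adapt the non-asymptotic bound of Theorem \ref{th1}/\ref{th2} to the known-initial-location setting (this is Algorithm \ref{alg:3} in Appendix \ref{Specialization Bound 2}): the same single-threshold decoder on $\imath^{h(p)}$ is used, but the union bounds over spurious trajectory sets range over $\calB_{n,M}(\bs_t)$ for each $t$, whose cardinality is $((2nv_++3)n^3M)^d$. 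The resulting bound has the same shape as \eqref{th1-rlt} with $n^4M^2$ replaced everywhere by $n^3M$.

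Next, I would asymptotically evaluate this non-asymptotic bound. Choose $p=p^*$ a capacity achiever, set $\eta = \Theta(\sqrt{(\log n)/(n^dM^d)})$ so the atypicality term $4n\exp(-2n^dM^d\eta^2)$ is negligible and $n\eta Lc(h(p))=o(1)$ (or apply $\Xi(n,p,\sigma,\eta)=o(1)$ in the AWGN case). Apply the Berry--Esseen theorem to $\Pr\{\calG_1(n,\gamma)\}$, which is a sum of $n$ i.i.d. terms with mean $nC(p^*,k)=nC(k)$ and variance $nV(p^*,k)$; this forces $\gamma = nC(k) + \sqrt{nV(k,\varepsilon)}\Phi^{-1}(\varepsilon) + O(1)$ so that $\Pr\{\calG_1\}\le \varepsilon+o(1)$ when $\varepsilon\le 0.5$ (and analogously for $\varepsilon>0.5$, using the max-variance branch of \eqref{vk}). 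The term $\bbH_1(((2nv_++3)n^3M)^d - k, k, \gamma) \le (((2nv_++3)n^3M)^d)^k\exp(-\gamma)$ must be $o(1)$; taking logarithms this needs $\gamma \ge dk\log((2nv_++3)n^3M) + \omega(1) = dk(3\log n + \log(2nv_+ + 3) + \log M) + \omega(1)$. Since $\log(2nv_++3) = \log n + \log v_+ + O(1) \le nv_+/dk + O(\log n)$ crudely (or more carefully $O(\log(nv_+))$, which is absorbed into $nv_+ + O(\log n)$), this is where the $-3dk\log n$ and $-nv_+$ terms in \eqref{caseB-2order-rlt-1} enter. Finally, the cross terms indexed by $\calJ\in\Lambda([k])$ are handled exactly as in Theorem \ref{th4}: $\lambda_\calJ$ is chosen as a small constant (or vanishing), $\Pr\{\calG_2(n,\calJ,\lambda_\calJ)\}\to 0$ by the law of large numbers, and $\bbH_1(\cdot, k-|\calJ|, \gamma - nC(p,|\calJ|) - n\lambda_\calJ)$ is $o(1)$ because $C(k) = C(p^*,k) > C(p^*,|\calJ|)$ strictly for $|\calJ|<k$ (each additional target strictly increases the MAC sum-rate), so the exponent $-(\gamma - nC(p,|\calJ|)) \le -n(C(k)-C(p^*,|\calJ|)) + O(\sqrt n) \to -\infty$ faster than the polynomial prefactor grows. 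Solving $-\log\delta^*$ via the relation $\delta = 2/M$ and optimizing over $M$ yields \eqref{caseB-2order-rlt-1}.

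For the converse \eqref{caseB-2order-rlt-2}, I would specialize Theorem \ref{th3}: the reduction to a channel-coding converse is identical, but now only $dk$ (not $2dk$) real parameters are being resolved, so the $\frac{1}{2dk}$ prefactor in \eqref{th3-rlt} becomes $\frac{1}{dk}$, and the number of quantized velocity cells to be distinguished is $\Theta((2nv_+\beta^2)^{-dk})$-controlled in the same way. Then a single-letterization and Berry--Esseen argument on $\sum_i \imath_{\calA_i}(Z_i;Y_i)$, maximized over query sizes $|\calA_i|$ (the maximizing size is the capacity-achieving $p^*$ up to the $O(\log n)$ slack), gives $-dk\log\delta^*\le nC(k)+\sqrt{nV(k,\varepsilon)}\Phi^{-1}(\varepsilon)+O(\log n)$, which is \eqref{caseB-2order-rlt-2}. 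I expect the main obstacle to be the bookkeeping in the achievability part: carefully tracking how the $nv_+$ penalty arises from $\log(2nv_++3)$ across the different regimes of $v_+$ (cf.\ \eqref{vclass}) and verifying that the choice of $\eta$ can simultaneously kill the channel-atypicality term, keep the change-of-measure coefficient $\exp(n\eta Lc(h(p)))$ (resp.\ $\exp(\Xi)$) at $1+o(1)$, and not interfere with the Berry--Esseen window of width $\Theta(\sqrt n)$ around $\gamma$ — exactly the same delicate balance as in \cite{zhou2023resolution}, but now with the reduced trajectory-set cardinality.
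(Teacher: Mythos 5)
Your proposal is correct and takes essentially the same route as the paper's proof sketch in Appendix~\ref{Specialization Bound 2}: adapt the non-asymptotic bounds of Theorem~\ref{th1}/\ref{th2} to the trajectory-set cardinality $((2nv_++3)n^3M)^d$ via Algorithm~\ref{alg:3}, set $\gamma=dk\log n^3M+nv_+=nC(p,k)+\sqrt{nV(p,k)}\Phi^{-1}(\varepsilon')$ so the Berry--Esseen and spurious-trajectory terms balance (yielding the $-3dk\log n - nv_+$ penalty), and specialize the channel-coding converse of Theorem~\ref{th3} with the prefactor $\frac{1}{2dk}$ reduced to $\frac{1}{dk}$ since only $dk$ parameters are resolved. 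The paper presents this as a delta from Theorem~\ref{caseA-2order} rather than directly from Theorems~\ref{th1}--\ref{th3}, but the underlying machinery and the choices of $\log M$ and $\gamma$ are identical to yours.
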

A proof sketch of Theorem \ref{caseB-2order} is provided in Appendix \ref{Specialization Bound 2}. The result in Theorem \ref{caseB-2order} shows that the multiple moving target case with known targets' initial locations is equivalent to the multiple targets expansion of~\cite{zhou2023resolution} at the second time slot, which differs from Theorem \ref{th4} in that the coefficient $2$ reduces to $1$. This is because the number of unknown variables to be estimated reduces from $2d$ in the case of Algorithm \ref{alg:1} to $d$ in Algorithm \ref{alg:3}.

\subsection{Extension to Piecewise Constant Velocity Model} \label{sg Piecewise Constant Velocity Model}

Here we consider the piecewise constant velocity model~\cite{zhou2023resolution} with $B$ time slots $\bn=(n_1,\ldots,n_B)$. Let $N_1=n_1$ and let $N_j:=n_j-n_{j-1}$ for each $j\in[2:B]$. Fix any $t\in[k]$. Recall that $\calS=[0,1]^d$ and $\calV=[-v_+,v_+]^d$. In the first time slot with $i\in[n_1]$, the $t$-th target moves with unknown initial location $\bS_t\in\calS$ and velocity $\bV_{t,1}\in\calV$. For each $j\in[2:B]$, in the $j$-th time slot with $i\in[n_{j-1}+1,n_j]$, each target changes their velocity to $\bV_{t,j}$. The task in this case is to estimate the trajectory of all $k$ targets of $n_B$ time slots with unknown initial locations $\bS^k=(\bS_1,\ldots,\bS_k)\in\calS^k$ and unknown velocities $\bV^{kB}=\{(\bV_{j,1},\ldots,\bV_{j,k})\}_{j\in[B]}$. 
Correspondingly, the non-adaptive query procedure is defined as follows.

\begin{definition} \label{def6}
Given any $(\mathbf{n},k,d)\in\bbN^{B+2}$, any tolerable resolution $\delta\in\bbR_+$ and constant $\varepsilon\in(0, 1)$, an $(\mathbf{n},k,d,\delta,\varepsilon)$-non-adaptive query procedure under a piecewise constant velocity model consists of
\begin{itemize}
\item[$\bullet$] $n_B$ query sets $(\calA_1,\dots,\calA_{n_B}) \subseteq ([0,1]^d)^{n_B}$,
\item[$\bullet$] a decoding function $g_1:\calY^{N_1} \rightarrow \calS^k \times \calV^k \subseteq ([0,1]^d \times [-v_+,v_+]^{d})^k$,
\item[$\bullet$] $B-1$ decoding functions $g_j:\calY^{N_j} \rightarrow \calV^k \subseteq ([-v_+,v_+]^d)^k$,
\end{itemize}
such that the excess-resolution probability satisfies
\begin{align}
&\mathrm{P_e}(\mathbf{n},k,d,\delta):=\sup_{f_{\bS^k\bV^{kB}}\in\calF(\calS^k\times\calV^k)} \mathrm{Pr}\bigg\{ \exists~t\in[k]:\min_{(\hatbs,\hatbv^B)\in(\hatbs,\hatbv^B)^k} \max_{i\in[0:n_1]} \big\| l(\hatbs,\hatbv_1,i)-l(\bS_t,\bV_{t,1},i) \big\|_{\infty} > \delta \bigg.\nn\\
&\qquad\qquad\qquad\qquad\qquad\qquad\qquad\qquad\bigg. \mathrm{or}~\min_{(\hatbs,\hatbv^B)\in(\hatbs,\hatbv^B)^k}\max_{j\in[2:B]}\max_{i\in[n_{j-1}:n_j]} \big\| l(\hatbs,\hatbv^j,i)-l(\bS_t,\bV_{t,j},i) \big\|_{\infty} > \delta \bigg\} \leq \varepsilon ,
\end{align}
where $(\hatbs,\hatbv^B)^k = \{(\hatbs_1,\hatbv_1^B),\ldots,(\hatbs_k,\hatbv_k^B)\}\in(\calS\times\calV^B)^k$ is the set of estimated initial location and velocity pairs of $k$ targets. 
\end{definition}

A non-adaptive query procedure for the piecewise constant velocity model can be constructed by using Algorithm \ref{alg:1} for the first time slot and repeatedly applying Algorithm \ref{alg:3} in the remaining $B-1$ time slots. The corresponding algorithm is summarized in Algorithm \ref{alg:4} provided in Appendix \ref{Generalization Bound}. The second-order asymptotics of optimal non-adaptive query procedure is stated in the following theorem.

\begin{theorem} \label{caseC-2order}
For any $(\mathbf{n}, k, d, \varepsilon)\in\bbN^{B+2} \times (0,1)$, the minimal achievable resolution $\delta^*(\mathbf{n},k,d,\varepsilon)$ of an optimal non-adaptive query procedure satisfies
\begin{align}
-dk\log \delta^*(\mathbf{n},k,d,\varepsilon) &\geq \max_{\substack{(\varepsilon_1,\ldots,\varepsilon_B) \\ \sum_{j\in[B]}\varepsilon_j \leq \varepsilon}} \min\Bigg\{ \frac{N_1C(k)+\sqrt{N_1V(k,\varepsilon_1)}\Phi^{-1}(\varepsilon_1)-4dk\log N_1-N_1v_+ + O(1)}{2}, \Bigg. \nn\\*
&\Bigg. \min_{j\in[2:B]} \bigg\{N_jC(k)+\sqrt{N_jV(k,\varepsilon_j)}\Phi^{-1}(\varepsilon_j)-3dk\log N_j - N_jv_+ + O(1)\bigg\} \Bigg\} - dk\log(B+1),\\
-(B+1)dk\log \delta^*(\mathbf{n},k,d,\varepsilon)&\leq n_BC(k)+\sqrt{n_BV(k,\varepsilon)}\Phi^{-1}(\varepsilon)+O(\log n_B).
\end{align}
\end{theorem}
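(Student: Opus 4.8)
The plan is to prove achievability by a \emph{block-by-block} reduction---invoking the moving-target bound of Theorem~\ref{th4} on the first time slot and the known-initial-location bound of Theorem~\ref{caseB-2order} on each of the remaining $B-1$ slots---and to prove the converse by the same channel-coding argument that underlies Theorem~\ref{th3}, now carrying $(B+1)dk$ effective degrees of freedom. For achievability I would partition the $n_B$ queries into the $B$ consecutive blocks of lengths $N_1,\dots,N_B$ and run Algorithm~\ref{alg:4}: on block~$1$, use Algorithm~\ref{alg:1} to estimate each target's initial location together with its first-slot velocity; on each block $j\in[2:B]$, use Algorithm~\ref{alg:3} to estimate the $j$-th-slot velocities, feeding forward the estimated position of each target at time $n_{j-1}$ (produced by decoding blocks $1,\dots,j-1$) as the known starting point. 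Because the query-dependent channel is memoryless, conditioned on correct decoding of the earlier blocks the outputs observed in block~$j$ are distributed exactly as in the known-initial-location model of Section~\ref{sg Known Initial Locations and Unknown Velocities Case} with $N_j$ queries, up to a one-sub-cube mismatch in the starting position that costs only $O(\log N_j)$ in the exponent. Fixing an excess-resolution allocation $(\varepsilon_1,\dots,\varepsilon_B)$ with $\sum_{j\in[B]}\varepsilon_j\le\varepsilon$ and demanding per-block resolution $\tfrac{\delta}{B+1}$, a union bound over the $B$ blocks keeps the total excess-resolution probability at most $\varepsilon$; Theorem~\ref{th4} with $n=N_1$ then certifies that block~$1$ attains resolution $\tfrac{\delta}{B+1}$ whenever $-2dk\log\tfrac{\delta}{B+1}$ stays below $N_1C(k)+\sqrt{N_1V(k,\varepsilon_1)}\Phi^{-1}(\varepsilon_1)-4dk\log N_1-N_1v_++O(1)$, and Theorem~\ref{caseB-2order} with $n=N_j$ certifies the analogous statement for block $j\ge2$ with $-dk\log\tfrac{\delta}{B+1}$ and the bound $N_jC(k)+\sqrt{N_jV(k,\varepsilon_j)}\Phi^{-1}(\varepsilon_j)-3dk\log N_j-N_jv_++O(1)$. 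Since the overall trajectory error is at most the sum of the $B$ per-block errors, hence below $\delta$, rearranging these inequalities, taking the worst block, and maximizing over the feasible allocation gives the claimed lower bound; the additive $-dk\log(B+1)$ is exactly the $(B{+}1)$-fold resolution split.

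For the converse I would mirror the proof of Theorem~\ref{th3} (and hence of~\eqref{th4-rlt-2}): any $(\mathbf{n},k,d,\delta,\varepsilon)$ non-adaptive query procedure must in particular distinguish the roughly $(1/\delta)^{(B+1)dk}$ quantized target configurations---each target being specified by its $d$ initial-location coordinates together with the $Bd$ velocity coordinates across the $B$ slots---so it induces a length-$n_B$ code over the query-dependent channel with that many messages and block-error probability at most $\varepsilon$. Applying the non-asymptotic converse~\cite[Proposition~4.4]{tan2014asymptotic} and then a Berry--Esseen expansion of $\sum_{i\in[n_B]}\imath_{\calA_i}(Z_i;Y_i)$, just as in the passage from Theorem~\ref{th3} to Theorem~\ref{th4}, yields $(B+1)dk(-\log\delta)\le n_BC(k)+\sqrt{n_BV(k,\varepsilon)}\Phi^{-1}(\varepsilon)+O(\log n_B)$.

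The hard part is the achievability step: one must verify that conditioning on correct decoding of block $j-1$ genuinely reduces block $j$ to the known-initial-location setting of Theorem~\ref{caseB-2order}---controlling the inter-block statistical coupling introduced by the fed-forward random position estimate and absorbing the single-sub-cube starting-position mismatch into the $O(\log N_j)$ terms---and that the $B$-fold accumulation of position errors is tracked tightly enough to produce the clean additive $dk\log(B+1)$ loss rather than a larger $\poly(B)$ penalty. Once the effective message count $(1/\delta)^{(B+1)dk}$ is identified, the converse is comparatively routine, reusing the channel-coding machinery already built for Theorem~\ref{th3}.
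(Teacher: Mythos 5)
Your proposal matches the paper's proof essentially step for step: achievability by block-by-block application of Algorithm~\ref{alg:1} (Theorem~\ref{th4}) on the first slot and Algorithm~\ref{alg:3} (Theorem~\ref{caseB-2order}) on the remaining $B-1$ slots with a union bound over a per-block error allocation $(\varepsilon_1,\dots,\varepsilon_B)$ and a $\tfrac{1}{B+1}$ resolution split yielding the $-dk\log(B+1)$ loss, and a converse reusing the Theorem~\ref{th3} channel-coding reduction with $(B+1)dk$ effective degrees of freedom followed by a Berry--Esseen expansion. The caveats you flag in the last paragraph (inter-block coupling and the cumulative resolution loss) are exactly the points the paper's proof sketch also leaves at the sketch level, so no essential gap.
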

A proof sketch of Theorem \ref{caseC-2order} is provided in Appendix \ref{Generalization Bound}. Analogously to the single moving target search in~\cite[Theorem 4]{zhou2023resolution}, the upper and lower bounds on the minimal resolution generally do not match except a few special cases, including the case of $B=1$ in Theorem \ref{th4}.

\section{Application to Beam Tracking} \label{sec_app}

Twenty questions estimation is widely studied due to its diverse applications, e.g., beam alignment and tracking in wireless communications~\cite{chiu2019active,gau2024beam,ronquillo2023integrated}, target localization using sensor networks~\cite{tsiligkaridis2014collaborative,tsiligkaridis2015decentralized} and face localization in images~\cite{rajan2015bayesian}. In this section, we use beam tracking as an example to demonstrate how our theoretical results lead to theoretical benchmarks for practical problems in wireless communications.

\subsection{Background}

In 5G NR system, beam management is used to acquire and maintain beams, enabling both the transmitter (TX) and receiver (RX) to identify the appropriate beams that are likely to contain targets at any given time point~\cite{li2020beam}. Beam management procedures are applicable for both idle mode and connected mode. The idle mode occurs when the user equipment (UE) does not have active data transmission while the connected mode occurs when active data exchange is taking place between the UE and gNodeB (gNB) and the UE is moving. One component of beam management procedures is beam sweeping, which uses a two-stage exhaustive search algorithm for both transmitting and receiving ends. However, exhaustive search algorithm is infeasible for large search space and for the case that search results may be incorrect. To tackle the problem, it is necessary to find alternative algorithms to locate and track beams accurately. Existing studies on beam tracking~\cite{va2016beam,zhao2017angle,gao2016fast,shaham2020extended,yang2019beam,huang20203d,wang2020demystifying,hussain2020mobility,seo2015training,ronquillo2023integrated} heavily rely on frequent beam training. In the following, using the twenty questions framework, we propose a beam tracking algorithm with only one beam training phase, which can significantly reduce the cost of pilot training in the entire communication system. We also characterize the performance of optimal non-adaptive beam tracking algorithms.

\subsection{System Model} \label{app sm}

\begin{table}[h]
\caption{Symbol Table}
\label{table_sym_bt}
\centering
\begin{tabular}{|m{1cm}<{\centering}|m{6.5cm}|m{1cm}<{\centering}|m{6.5cm}|}
\hline
Symbol & Meaning & Symbol & Meaning\\
\hline
$\mathtt{R}_1$ & Number of rows of uniform planar array antenna elements & $\mathtt{R}_2$ & Number of columns of uniform planar array antenna elements\\
\hline
$\mathtt{g}$ & Antenna spacing & $\mathtt{l}$ & Wavelength\\
\hline
$\phi_{\mathrm{az}}$ & Angle of arrival in azimuth & $\phi_{\mathrm{el}}$ & Angle of arrival in elevation\\
\hline
$\mathtt{a}$ & Steering vector & $\alpha$ & Fading coefficient\\
\hline
$\mathtt{h}$ & Small-scale channel & $\mathtt{s}_t$ & Pilot sequence of the $t$-th transmitter\\
\hline
$\mathtt{y}$ & Signal received by the receiver & $\sqrt{\mathtt{P}}$ & Combined effect of the transmit power and the large-scale fading\\
\hline
$\mathtt{n}$ & Equivalent Gaussian noise vector & $\iota$ & Power threshold\\
\hline
$\mathtt{z}$ & Noisy decision result for beam tracking & $\mathtt{w}$ & Beamforming vector\\
\hline
$\mathrm{q_{bt}}$ & Quantization function for beam tracking & $\mathtt{v}_{\mathrm{az}}$ & Velocity of an angle of arrival in azimuth\\
\hline
$\mathtt{v}_{\mathrm{el}}$ & Velocity of an angle of arrival in elevation & $\mathtt{b}$ & Real-time angle of arrival of a transmitter\\
\hline
\end{tabular}
\end{table}

The symbols used in the beam tracking procedure are summarized in Table \ref{table_sym_bt}. Consider an air-ground communication scenario where a single stationary base station (BS) serves as RX and multiple mobile unmanned aerial vehicles serve as TXs. Assume that the RX is equipped with a uniform planar array (UPA) with $\mathtt{R}_1 \times \mathtt{R}_2\in\bbN^2$ antennas, while each TX's antennas can be regarded as a single virtual antenna because of the fixed beamforming vector. At each sampling time point $i\in[n]$, the RX receives the signal from TXs by the UPA using the directional beamforming vector $\mathtt{w}(i)\in\mathbb{C}^{\mathtt{R}_1 \times \mathtt{R}_2}$. Furthermore, we assume that an air-ground communication scenario is dominated by the line-of-sight path without obstructions and reflectors.

Given the antenna spacing $\mathtt{g}\in\mathbb{R}_+$, the steering vector with the angle of arrival (AoA) in azimuth $\phi_\mathrm{az}\in[0,\pi]$ and elevation $\phi_\mathrm{el}\in\big[0,\frac{\pi}{2}\big]$ illustrated in Fig. \ref{fig3} can be described as:
\begin{align}
\mathtt{a}(\phi_\mathrm{az},\phi_\mathrm{el})&:=\sqrt{\frac{1}{\mathtt{R}_1\times\mathtt{R}_2}}\Big( 1,e^{-j\frac{2\pi \mathtt{g}}{\mathtt{l}}\sin\phi_\mathrm{az}\cos\phi_\mathrm{el}},\ldots,e^{-j(\mathtt{R}_1-1)\frac{2\pi \mathtt{g}}{\mathtt{l}}\sin\phi_\mathrm{az}\cos\phi_\mathrm{el}} \Big) \otimes \Big( 1,e^{-j\frac{2\pi \mathtt{g}}{\mathtt{l}}\sin\phi_\mathrm{el}},\ldots,e^{-j(\mathtt{R}_2-1)\frac{2\pi \mathtt{g}}{\mathtt{l}}\sin\phi_\mathrm{el}} \Big) \\
&=\sqrt{\frac{1}{\mathtt{R}_1\times\mathtt{R}_2}}\Big( 1,\ldots,e^{-j\frac{2\pi \mathtt{g}}{\mathtt{l}}(\mathtt{r}_1\sin\phi_\mathrm{az}\cos\phi_\mathrm{el}+\mathtt{r}_2\sin\phi_\mathrm{el})},\ldots,e^{-j\frac{2\pi \mathtt{g}}{\mathtt{l}}((\mathtt{R}_1-1)\sin\phi_\mathrm{az}\cos\phi_\mathrm{el}+(\mathtt{R}_2-1)\sin\phi_\mathrm{el})} \Big) ,
\end{align}
where $\mathtt{r}_1\in[0:\mathtt{R}_1-1], \mathtt{r}_2\in[0:\mathtt{R}_2-1]$, $\mathtt{l}$ denotes wavelength and $\otimes$ denotes Kronecker multiplication operation.

\begin{figure}[tb]
\centering
\includegraphics[width=0.4\columnwidth]{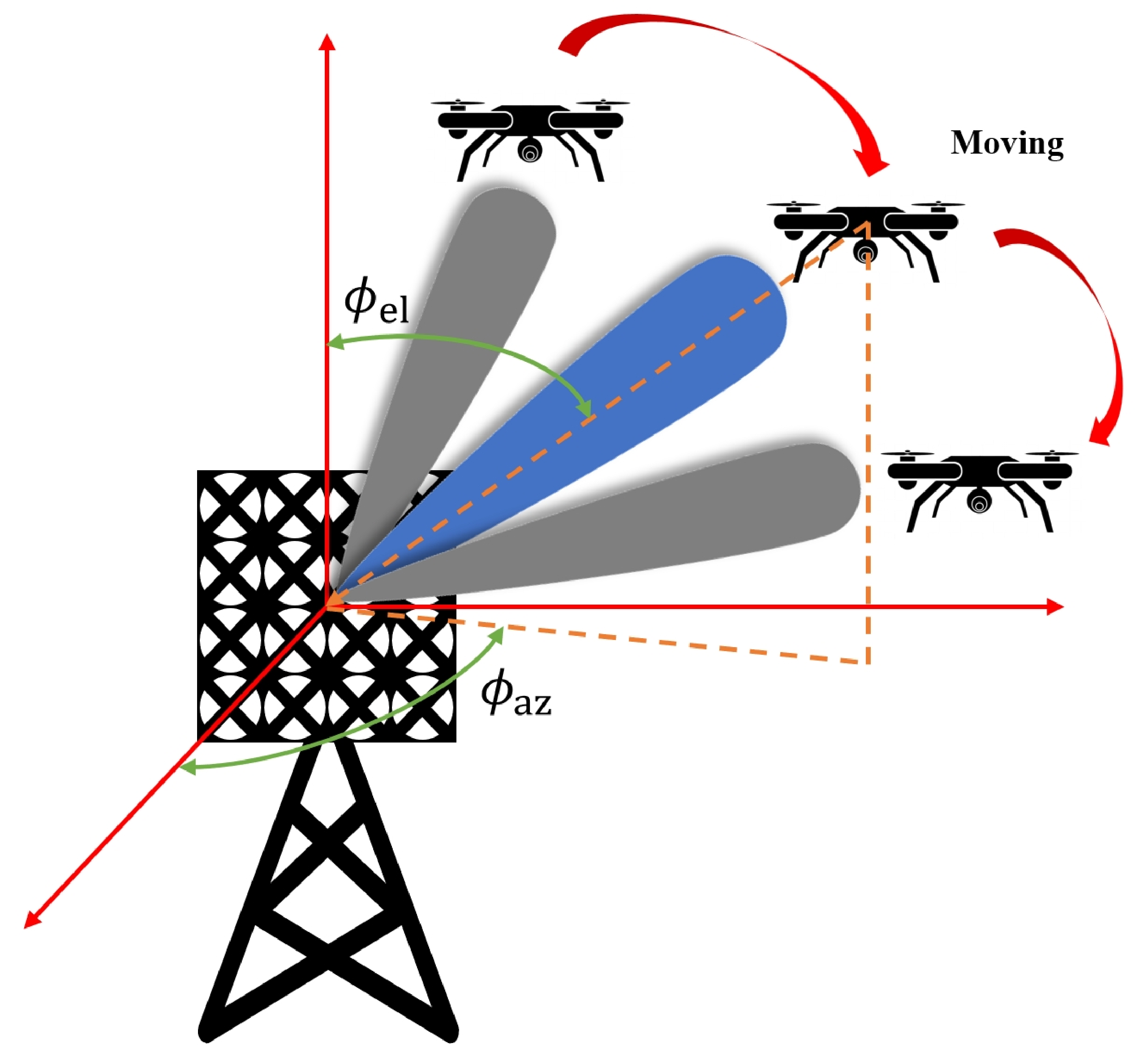}
\caption{The AoA of a beam from a TX to the RX with azimuth $\phi_\mathrm{az}$ and elevation $\phi_\mathrm{el}$.}
\label{fig3}
\end{figure}

Given the fading coefficient $\alpha\in\mathbb{C}$, the small-scale channel can be described as:
\begin{align}
\mathtt{h}:=\alpha\mathtt{a}(\phi_\mathrm{az},\phi_\mathrm{el}).
\end{align}

Assume that there are $k$ TXs in the communication system and the pilot sequence of the $t$-th TX is $\mathtt{s}_t\in\bar{\calS}$ for any $t\in[k]$. Then the received signal at the sampling time point $i$ is
\begin{align}
\mathtt{y}(i)&:=\sqrt{\mathtt{P}}\mathtt{w}^\mathrm{H}(i)\bigg(\sum_{t=1}^k \mathtt{h}_{t}\mathtt{s}_{t}\bigg)+\mathtt{w}^\mathrm{H}(i)\mathtt{n}(i) \\
&=\alpha\sqrt{\mathtt{P}}\mathtt{w}^\mathrm{H}(i)\bigg(\sum_{t=1}^k \mathtt{a}_t\big(\phi_{\mathrm{az},t},\phi_{\mathrm{el},t}\big)\mathtt{s}_{t}\bigg)+\mathtt{w}^\mathrm{H}(i)\mathtt{n}(i) ,
\end{align}
where $\sqrt{\mathtt{P}}$ denotes the combined effect of the transmit power and the large-scale fading, and $\mathtt{n}(i) \sim \mathcal{CN}(\mathbf{0}_{\mathtt{R}_1\mathtt{R}_2 \times 1},\sigma^2\mathtt{I})$ is the equivalent Gaussian noise vector.

For practical communication systems, given any power threshold $\iota\in\mathbb{R}_+$, we adopt the following $1$-bit measurement model in~\cite{chiu2019active}:
\begin{align}
\mathtt{z}(i):=\bbo\big(|\mathtt{y}(i)|^2 > \iota\big) ,
\end{align}
which means that at each sampling time point $i\in[n]$, the RX only has $1$-bit of information indicating whether or not the received power passes the threshold $\iota$.

\subsection{From Beam Tracking To Twenty Questions Estimation} \label{app ft}

\begin{figure}[tb]
\centering
\includegraphics[width=.7\columnwidth]{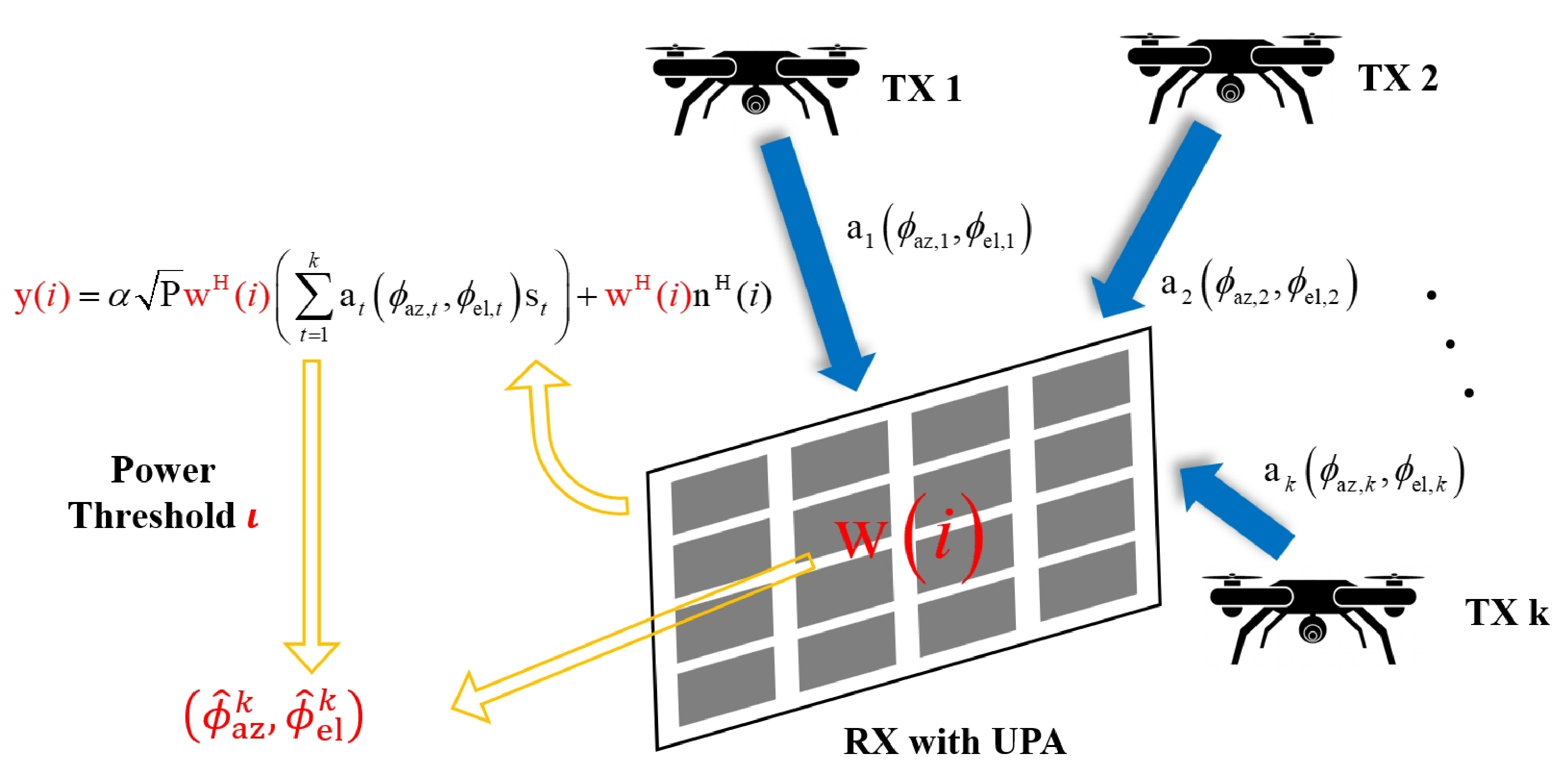}
\caption{The beam tracking procedure with $k$ mobile TXs moving over the plane ($d=2$) and a stationary RX.}
\label{fig4}
\end{figure}

The first step is to quantize the angular space $[0,\pi]\times\big[0,\frac{\pi}{2}\big]$ of the RX into $M \times \frac{M}{2}$ sub-spaces denoted as $\big(\mathtt{C}_1,\ldots,\mathtt{C}_{\frac{M^2}{2}}\big)$. Given the total number of the sampling time points $n\in\mathbb{N}$, for any real-time AoA pair $\phi_{\mathrm{az,el}}:=(\phi_\mathrm{az},\phi_\mathrm{el})\in[0,\pi]\times\big[0,\frac{\pi}{2}\big]$, define the quantization function as follows:
\begin{align}
\mathrm{q}_{\mathrm{bt}}(\phi_{\mathrm{az,el}},n):= \Big(\Big\lceil\frac{\phi_\mathrm{az}}{\pi}\times nM\Big\rceil-1\Big) \times nM+\Big\lceil\frac{\phi_\mathrm{el}}{\pi/2}\times\frac{nM}{2}\Big\rceil.
\end{align}

Given any $\phi_{\mathrm{az,el}}\in[0,\pi]\times\big[0,\frac{\pi}{2}\big]$ and $\mathtt{v}_{\mathrm{az,el}}:=(\mathtt{v}_\mathrm{az},\mathtt{v}_\mathrm{el})\in[-\mathtt{v}_+,-\mathtt{v}_+]^2$, if a beam moves with a constant velocity pair $\mathtt{v}_{\mathrm{az,el}}$ from an initial AoA pair $\phi_{\mathrm{az,el}}$ at any sampling time point $i\in[n]$, the coordinate $\mathtt{b}(\phi_{\mathrm{az,el}},\mathtt{v}_{\mathrm{az,el}},i)$ of the real-time AoA satisfies
\begin{align}
\mathtt{b}(\phi_{\mathrm{az,el}},\mathtt{v}_{\mathrm{az,el}},i):=\Big(\mathrm{mod}\big(\phi_{\mathrm{az}}+i\mathtt{v}_{\mathrm{az}},\pi\big),\mathrm{mod}\Big(\phi_{\mathrm{el}}+i\mathtt{v}_{\mathrm{el}},\frac{\pi}{2}\Big)\Big).
\end{align}
After all $n$ sampling time points, let $\mathtt{b}(\phi_{\mathrm{az,el}},\mathtt{v}_{\mathrm{az,el}},[n])$ denote $(\mathtt{b}(\phi_{\mathrm{az,el}},\mathtt{v}_{\mathrm{az,el}},1),\ldots,\mathtt{b}(\phi_{\mathrm{az,el}},\mathtt{v}_{\mathrm{az,el}},n))$ and $\mathrm{q}_{\mathrm{bt}}^n(\phi_{\mathrm{az,el}}^n,n)$ denote $(\mathrm{q}_{\mathrm{bt}}(\phi_{\{\mathrm{az,el}\},1},n),\ldots,\mathrm{q}_{\mathrm{bt}}(\phi_{\{\mathrm{az,el}\},n},n))$.

Assume that there are $k$ TXs in the communication system with initial AoA pairs $\phi_{\mathrm{az,el}}^k\in\big([0,\pi]\times\big[0,\frac{\pi}{2}\big]\big)^k$ and velocity pairs $\mathtt{v}_{\mathrm{az,el}}^k\in[-\mathtt{v}_+,-\mathtt{v}_+]^{2k}$. Recall the system model defined in Section \ref{app sm}. The beam tracking procedure is defined as follows and \blue{is} illustrated in Fig. \ref{fig4}. At first, the RX sets $n$ sampling time points and generates beamforming vectors $\mathtt{w}(i)$ for each time point $i$ in advance. Then $k$ TXs move at their respective constant velocities from their initial locations and send pilot sequences to the RX. After $n$ samples, the RX obtains $n$ decision results using power threshold $\iota$ and estimates the initial AoA pairs $\hat{\phi}_\mathrm{az,el}^k$ and velocity pairs $\hat{\mathtt{v}}_\mathrm{az,el}^k$ based on them.

\begin{definition} \label{def7}
Given any total number of sampling time points and TXs $(n,k)\in\bbN^{2}$, any tolerable resolution $\delta\in\bbR_+$ and constant $\varepsilon\in(0, 1)$, an $(n,k,\delta,\varepsilon)$ beam tracking procedure consists of
\begin{itemize}
\item[$\bullet$] $n$ query set $(\calA_1,\dots,\calA_n) \subseteq \big([0,\pi]\times\big[0,\frac{\pi}{2}\big]\big)^{n}$ and $n$ corresponding beamforming vector $(\mathtt{w}_{\calA_1}(1)\dots,\mathtt{w}_{\calA_n}(n)) \subseteq \mathbb{C}^{n}$, 
\item[$\bullet$] a power threshold $\iota\in\mathbb{R}_+$, making that for each sampling time point $i\in[n]$, the received signal power exceeds $\iota$ when at least one real-time AoA pair satisfies $\mathtt{b}(\phi_{\mathrm{az,el}},\mathtt{v}_{\mathrm{az,el}},i)\in\calA_i$ and does not exceed $\iota$ when none pair satisfies,
\item[$\bullet$] a estimator $g_\mathrm{bt}$ that obtains estimates of AoA pairs $\hat{\phi}_\mathrm{az,el}^k\in\big([0,\pi]\times\big[0,\frac{\pi}{2}\big]\big)^{k}$ and velocity pairs $\hat{\mathtt{v}}_\mathrm{az,el}^k\in[-\mathtt{v}_+,-\mathtt{v}_+]^{2k}$ after $n$ queries,
\end{itemize}
such that the excess-resolution probability satisfies
\begin{align}
&\mathrm{P_e}(n,k,\delta) \nn\\
&:= \sup \mathrm{Pr}\bigg\{ \exists~t\in[k]:\min_{(\hat{\phi}_\mathrm{az,el},\hat{\mathtt{v}}_\mathrm{az,el})\in(\hat{\phi}_\mathrm{az,el}^k,\hat{\mathtt{v}}_\mathrm{az,el}^k)} \max_{i\in[0:n]} \Big\| \mathtt{b}\big(\hat{\phi}_{\mathrm{az,el}},\hat{\mathtt{v}}_{\mathrm{az,el}},i\big)-\mathtt{b}\big(\phi_{\{\mathrm{az,el}\},t},\mathtt{v}_{\{\mathrm{az,el}\},t},i\big) \Big\|_{\infty} > \delta \bigg\} \leq \varepsilon.
\end{align}
\end{definition}

The following theorem establish the equivalence of twenty questions estimation and beam tracking.

\begin{theorem} \label{20Q-BeamTracking}
For any $(n,k,\delta,\varepsilon)\in\bbN^2 \times \bbR_+ \times [0,1]$, with any $(n,k,\delta,\varepsilon)$ beam tracking procedure, we can construct an $\big(n,k,2,\frac{\delta}{\pi},\varepsilon\big)$-non-adaptive query procedure for twenty questions estimation. Furthermore, with any $(n,k,2,\delta,\varepsilon)$-non-adaptive query procedure for twenty questions, we can construct an $(n,k,2,\pi\delta,\varepsilon)$ for beam tracking. 
\end{theorem}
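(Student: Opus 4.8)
The plan is to prove the claimed equivalence as an explicit change of variables between the angular domain and the unit square, combined with the observation that the one-bit power-threshold measurement of Section~\ref{app sm} is an instance of the query-dependent channel model of Section~\ref{pf Query-Dependent Channel}. The first step is to fix the dictionary: rescale each AoA pair $(\phi_{\mathrm{az}},\phi_{\mathrm{el}})$ to $(\phi_{\mathrm{az}}/\pi,\phi_{\mathrm{el}}/\pi)\in[0,1]\times[0,\tfrac12]\subseteq[0,1]^2$, which simultaneously sends the angular velocities $(\mathtt{v}_{\mathrm{az}},\mathtt{v}_{\mathrm{el}})$ to $(\mathtt{v}_{\mathrm{az}}/\pi,\mathtt{v}_{\mathrm{el}}/\pi)$ and maps the AoA trajectory $\mathtt{b}(\cdot)$ onto the trajectory model $l(\cdot)$ of \eqref{loc} with $d=2$, up to the boundary-convention identification discussed at the end. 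The second step is to establish the key structural fact: when the RX beamforms with a pattern covering the region $\calA_i$, the noiseless decision $\bbo(|\mathtt{y}(i)|^2>\iota)$ equals the binary OR $Z_i=\bbo(\exists\,t:\mathtt{b}_t(i)\in\calA_i)$, while the conditional law of the actual one-bit output $\mathtt{z}(i)$ given $Z_i$ depends on the beam only through its width $|\calA_i|$ (via the array gain), so that it is exactly a channel $P_{Y|Z}^{h(|\calA_i|)}$ in the paper's sense.

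For the first implication, given an $(n,k,\delta,\varepsilon)$ beam tracking procedure with query sets $(\calA_1,\dots,\calA_n)$, beamforming vectors, threshold $\iota$, and estimator $g_{\mathrm{bt}}$, I would build an $(n,k,2,\tfrac{\delta}{\pi},\varepsilon)$-non-adaptive query procedure whose query sets are $\tfrac1\pi\calA_i$ and whose decoder applies $g_{\mathrm{bt}}$ to the received bits and then rescales the AoA/velocity estimates by $1/\pi$. By the structural fact the joint law of $(Z^n,Y^n)$ is the one the query-dependent channel of size $|\tfrac1\pi\calA_i|$ produces, and by the change of variables the beam-tracking excess-resolution event is in bijection with the twenty-questions excess-resolution event at resolution $\delta/\pi$; since the rescaling also acts bijectively on priors, the two excess-resolution probabilities coincide and the bound $\le\varepsilon$ is preserved.

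For the converse implication, I would start from an $(n,k,2,\delta,\varepsilon)$-non-adaptive query procedure on $[0,1]^2$ with query sets $(\calA_1',\dots,\calA_n')$ and decoder $g$, take $\pi\calA_i'$ (intersected with $[0,\pi]\times[0,\tfrac\pi2]$) as beam-tracking query sets, choose the beamforming vector $\mathtt{w}(i)$ whose beam illuminates exactly $\pi\calA_i'$, and pick the power threshold $\iota$ so that the induced binary-input channel is the one the query procedure was designed against; running $g$ on the one-bit responses and rescaling by $\pi$ yields $g_{\mathrm{bt}}$, and the same change of variables turns the resolution-$\delta$ event into the resolution-$\pi\delta$ event.

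The hard part will be the structural fact together with the matching of the two models. Concretely, one has to model how the UPA array gain---and hence the false-alarm and miss probabilities of the threshold test on $|\mathtt{y}(i)|^2$, which is here a noncentral $\chi^2$ statistic---varies with the beam-width $|\calA_i|$, and verify that this dependence is bounded and satisfies the continuity constraint \eqref{dmc} (invoking, if needed, the Gaussian-tail truncation already used for Definition~\ref{def2}); and one must reconcile the reflective boundary of the trajectory model \eqref{loc} with the modular wrap-around of the AoA model on the rescaled sub-cube, which I would do by restricting attention to the portion of the domain actually visited within $n$ sampling times and unfolding the reflected trajectory accordingly. Everything else is the routine bookkeeping of the affine change of variables.
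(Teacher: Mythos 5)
Your high-level approach---normalize the angular domain $[0,\pi]\times[0,\tfrac{\pi}{2}]$ to the unit cube, map the beamforming pattern to a query region, treat the power-threshold bit as the noisy response, and transport estimates back and forth by the same rescaling---is exactly what the paper does; the paper's own proof of Theorem~\ref{20Q-BeamTracking} is a short two-paragraph change-of-variables argument of precisely this form. Where you go further is in explicitly flagging two obligations that the paper's proof does not discharge but that are in fact necessary for the stated equivalence: (i) that the one-bit statistic $\bbo(|\mathtt{y}(i)|^2>\iota)$ induces a conditional law on $\{0,1\}$ that depends on $\calA_i$ only through $|\calA_i|$ and is an admissible $P_{Y|Z}^{h(|\calA_i|)}$ in the sense of \eqref{dmc} (or its AWGN variant), and (ii) that the reflective trajectory model of \eqref{loc} and the modular wrap-around of $\mathtt{b}(\cdot)$ can be reconciled.

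These two caveats are genuine. In particular (ii) is not a cosmetic bookkeeping issue: $l(\cdot)$ in \eqref{loc} reflects off the boundary of the cube, whereas $\mathtt{b}(\cdot)$ wraps around, so the two dynamics produce different trajectory sets near the boundary, and a naive affine rescaling does not carry one into the other. Your suggested remedy (``unfolding the reflected trajectory'' and restricting to the portion actually visited in $n$ samples) is a plausible sketch but is not carried out; note, however, that the paper itself simply asserts the correspondence without addressing this at all. So you have identified a real gap, but it is a gap in the paper as well as in your own writeup; to close it one would need either to modify Definition~\ref{def7} so the AoA dynamics are reflective, or to show that the relevant achievability and converse bounds are insensitive to which boundary convention is used (which would be a weaker ``equal benchmark'' claim rather than the stated per-instance construction). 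Similarly, (i) is assumed implicitly in the paper but does deserve the justification you sketch, since the noncentral-$\chi^2$ tail of the beamformed measurement must be shown to satisfy the Lipschitz-in-$|\calA|$ continuity that Theorems~\ref{th1}--\ref{th4} rely on. In short: your route is the paper's route, your additional caveats are correct, and the fact that you could not dispose of them is a deficiency shared with the published proof.
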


\begin{proof}
Fix any $(n,k,\delta,\varepsilon)$ beam tracking procedure. By normalizing the entire angular space $[0,\pi]\times\big[0,\frac{\pi}{2}\big]$ to the two-dimensional unit cube $[0,1]^2$, we can associate the AoAs to moving targets with initial locations $\bS^k$ and velocities $\bV^k$ to be estimated. The angular space covered by $n$ beamforming vectors $(\mathtt{w}_{\calA_1}(1),\ldots,\mathtt{w}_{\calA_n}(n))$ can also be normalized to $n$ query regions $(\calA_1,\ldots,\calA_n)$. Then, we can estimate the initial locations $\hat{\bS}^k$ and velocities $\hat{\bV}^k$ of the $k$ targets after using the power threshold $\iota$ to obtain noisy query responses, and de-normalize them to the angular space.

Fix any $(n,k,2,\delta,\varepsilon)$-non-adaptive query procedure. By mapping the two-dimensional unit cube $[0,1]^2$ to the angular space $[0,\pi]\times\big[0,\frac{\pi}{2}\big]$, the $k$ moving targets represent $k$ TXs with AoA pairs $\phi_\mathrm{az,el}^k$ and velocity pairs $\mathtt{v}_\mathrm{az,el}^k$. Then, we can generate beamforming vectors $(\mathtt{w}_{\calA_1}(1),\ldots,\mathtt{w}_{\calA_n}(n))$ for searching the corresponding space based on the random query regions $(\calA_1,\ldots,\calA_n)$. Furthermore, we serve the noisy responses from the oracle as the decision results obtained by the RX using $1$-bit measurement model. Finally, we can estimate the AoA pairs $\hat{\phi}_\mathrm{az,el}^k$ and the velocity pairs $\hat{\mathtt{v}}_\mathrm{az,el}^k$ with the help of the decoder in the beam tracking procedure, and remap them to the unit cube.
\end{proof}

\begin{figure}[tb]
\centering
\includegraphics[width=0.5\columnwidth]{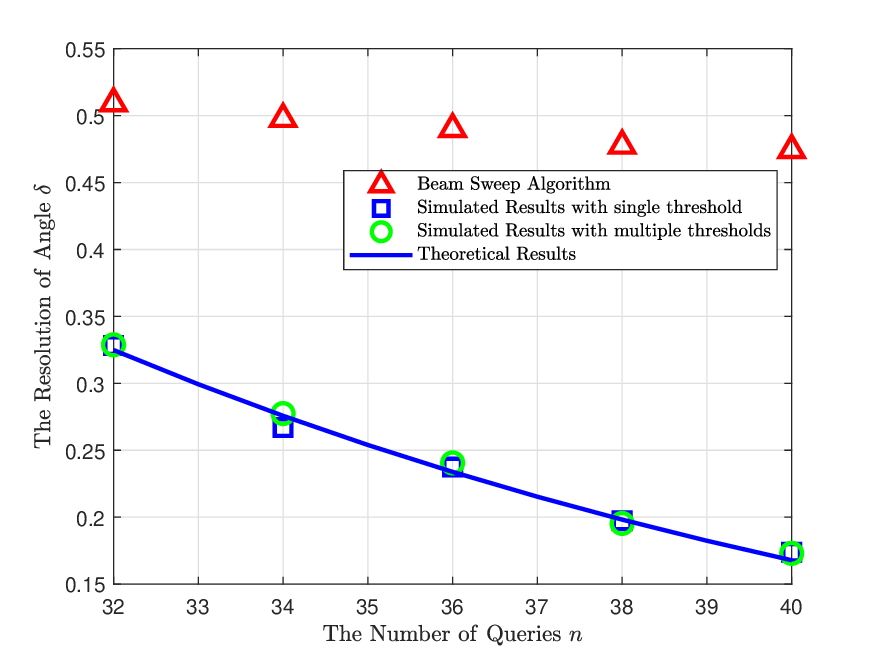}
\caption{Achievable resolutions three beam tracking procedures (red triangles, blue squares and green circles) in addition to the theoretical prediction of optimal achievable performance (blue curve). Performance of the proposed beam tracking procedure is close to the theoretically minimal achievable resolution \eqref{th4-rlt-2}. There are two TXs, the excess-resolution probability is $\varepsilon = 0.2$, and the channel is a query-dependent AWGN channel with $\sigma = 1.5$ and $h(p) = 0.4 + p$.}
\label{fig56}
\end{figure}

Thus, Theorem \ref{20Q-BeamTracking} translates the theoretical benchmarks in Theorem \ref{th4} to beam tracking. Inspired by the proof of Theorem \ref{20Q-BeamTracking}, we design an algorithm (Algorithm \ref{alg:5}) for beam tracking. Furthermore, to demonstrate the validity of Theorem \ref{20Q-BeamTracking}, in Fig. \ref{fig56}, we plot the second-order approximations to achievable resolutions of theoretical results and simulated results in blue as a function of the number of queries $n$ when $k=2$ and $d=1$. The non-adaptive procedure in Algorithm \ref{alg:5} is run independently $2 \times 10^3$ times, where AoA and velocity for each device are generated independently from uniform distributions. We assume that the channel is a query-dependent AWGN channel with $\sigma=1.5$ and fix the query size function in Definition \ref{def2} to the affine function $h(p)=0.4+p$. Such a choice is natural since it aligns with practical applications where the additive noise accumulates as the query region expands~\cite{chiu2019active}. We also simulate the achievable resolution of the beam sweep algorithm proposed in the 5G beam management standard~\cite{ZTE2017OnCSI} (plotted as red triangles in Fig. \ref{fig56}). Specifically, we fix the number of beams to be traversed in the coarse searching stage to $8$, while increasing the number of beams to be traversed in the fine searching stage according to the number of queries $n$, i.e., the number of beams to be traversed in the fine searching stage is fixed to $\frac{n-8}{k}$ within each angular region estimated by the coarse searching stage. The remaining settings are consistent with the proposed beam tracking algorithm in Algorithm \ref{alg:5}. Finally, for comparison, we plot the simulated results of the multiple threshold scheme proposed in~\cite[Algorithm 1]{zhou2022resolution} (denoted by green circles).

\begin{algorithm}[tb]
\caption{Beam Tracking Procedure Using Noisy 20 Questions Search}
\label{alg:5} 
\begin{algorithmic}[0]  
\REQUIRE
The number of sampling time points $n\in\bbN$, the power threshold $\iota\in\bbR_+$ and three design parameters $(M,p,\gamma)\in\bbN \times (0,1) \times \bbR_+$
\ENSURE
The sets of estimated initial AoA pairs $\hat{\phi}_\mathrm{az,el}^k\in\big([0,\pi]\times\big[0,\frac{\pi}{2}\big]\big)^k$ and velocity pairs $\hat{\mathtt{v}}_\mathrm{az,el}^k\in[-\mathtt{v}_+,-\mathtt{v}_+]^{2k}$
~\\
\hrule 
~\\
\STATE Divide the angular space $[0,\pi]\times\big[0,\frac{\pi}{2}\big]$ into $\frac{n^2M^2}{2}$ equally sized non-overlapping sub-spaces $\big(\calC_1,\ldots,\calC_{\frac{n^2M^2}{2}}\big)$.
\STATE \textbf{\emph{Beamforming vector generation and signal sampling:}}
\STATE Generate $n$ binary vectors $\big(x^n(1),\ldots,x^n\big(\frac{n^2M^2}{2}\big)\big)$ with length-$\frac{n^2M^2}{2}$, where each vector is generated i.i.d. from Bern$(p)$.
\FOR{$i\in[n]$}
\STATE Form a query region $\calA_i$ as
\begin{align}
\calA_i:=\bigcup_{j\in[\frac{n^2M^2}{2}]: x_{i,j}=1} \calC_j \nn.
\end{align}
\STATE Generate corresponding beamforming vector $\mathtt{w}(i)$ based on the $i$-th region $\calA_i$.
\STATE Take samples and obtain a noisy decision result $y_i$ on whether the received power $|\mathtt{y}(i)|^2$ pass the threshold $\iota$.
\ENDFOR
\STATE Collect noisy decision results $y^n=(y_1,\ldots,y_n)$.
\STATE \textbf{\emph{Decoding:}}
\IF{there exists sets of estimated initial AoA pairs $\hat{\phi}_\mathrm{az,el}^k$ and velocity pairs $\hat{\mathtt{v}}_\mathrm{az,el}^k$ such that $$\imath^{h(p)}\Big(x^n\big( \mathrm{q}_{\mathrm{bt}}^n\big(\mathtt{b}\big(\hat{\phi}_{\{\mathrm{az,el}\},1},\hat{\mathtt{v}}_{\{\mathrm{az,el}\},1},[n]\big),n\big) \big),\ldots,x^n\big( \mathrm{q}_{\mathrm{bt}}^n\big(\mathtt{b}\big(\hat{\phi}_{\{\mathrm{az,el}\},k},\hat{\mathtt{v}}_{\{\mathrm{az,el}\},k},[n]\big),n\big) \big); y^n\Big) \geq \gamma,$$}
\RETURN estimates $\hat{\phi}_\mathrm{az,el}^k=\big(\hat{\phi}_{\{\mathrm{az,el}\},1},\ldots,\hat{\phi}_{\{\mathrm{az,el}\},k}\big)$ and $\hat{\mathtt{v}}_\mathrm{az,el}^k=\big(\hat{\mathtt{v}}_{\{\mathrm{az,el}\},1},\ldots,\hat{\mathtt{v}}_{\{\mathrm{az,el}\},k}\big)$.
\ELSE
\RETURN estimates $\hat{\phi}_\mathrm{az,el}^k=\big\{\big(\frac{2}{\pi},\frac{4}{\pi}\big),\ldots,\big(\frac{2}{\pi},\frac{4}{\pi}\big)\big\}_{k \times 1}$ and $\hat{\mathtt{v}}_\mathrm{az,el}^k=\{(0,0),\ldots,(0,0)\}_{k \times 1}$.
\ENDIF
\end{algorithmic}
\end{algorithm}

The performance of our proposed algorithm (blue squares) can achieve a much finer resolution $\delta$ with the same number of queries as compared to the 5G beam sweeping algorithm. Furthermore, compared with~\cite[Algorithm 1]{zhou2022resolution}, our proposed algorithm has no performance loss, while the computational complexity has decreased, as reflected in terms of runtime of the decoding procedure\footnote{For each number of targets $k$, the two algorithms are run independently $10^3$ times with $d=1$, respectively.} in Table \ref{runtime}. Our proposed algorithm is implemented with brute force random coding. It would be worthwhile to explore more efficient alternatives such as low-density parity-check (LDPC) coding~\cite{gallager1962low} and Polar coding~\cite{arikan2009channel}.

\begin{table}[tb]
\caption{Average Runtime}
\label{runtime}
\centering
\begin{tabular}{|m{4cm}<{\centering}|m{4cm}<{\centering}|m{4cm}<{\centering}|m{4cm}<{\centering}|}
\hline
The Number of Targets & The Number of Queries & Single Threshold & Multiple Thresholds\\
\hline
$k=2$ & $n=60$ & $0.0585~\rms$ & $0.1324~\rms$\\
\hline
$k=3$ & $n=70$ & $0.4145~\rms$ & $1.3518~\rms$\\
\hline
$k=4$ & $n=85$ & $2.2044~\rms$ & $13.9777~\rms$\\
\hline
\end{tabular}
\end{table}

\section{Proof of Non-Asymptotic Bounds (Theorem \ref{th1}-\ref{th3})} \label{sec_proof_nona}

\subsection{Achievability for DMCs (Theorem \ref{th1})} 
\label{Achievable Non-Asymptotic Bound DMC}

We first mathematically describe the non-adaptive query procedure and the excess-resolution events. Given any positive integers $(n,M)\in\bbN^2$, we partition the unit cube $[0,1]^d$ into $(nM)^d$ equal-sized non-overlapping sub-cubes $(\calC_1,\ldots,\calC_{(nM)^d})$. Let there be $k$ targets with initial location and velocity pairs $(\bs,\bv)^k=\{(\bs_1,\bv_1),\ldots,(\bs_k,\bv_k)\}\in([0, 1]^d \times [-v_+,v_+]^d)^k$. Let $\bx=(x^n(1),\ldots,x^n((nM)^d))\in(\{0,1\}^n)^{(nM)^d}$ be the binary query matrix used by the oracle, where at each time point $i\in[n]$, the query is whether or not there exists at least one target located in the region $\calA_i$:
\begin{align}
\calA_i:=\bigcup_{j\in[(nM)^d]:x_{i,j}=1} \calC_j.
\end{align}
The oracle's noiseless response to the query $\calA_i$ is
\begin{align}
z_i:=\bbo\big(\exists~t\in[k]:l(\bs_t,\bv_t,i)\in\calA_i\big) , \label{z}
\end{align}
which is communicated to the questioner as the noisy response $y_i$ at the output of a query-dependent memoryless channel $P_{Y|Z}^{h(|\calA_i|)}$. If there exists a set $\{\hat{\bw}_1^n,\ldots,\hat{\bw}_k^n\}\in\calB_{n,M}^{(k)}$ such that the corresponding mutual information density $\imath^{h(p)}$ satisfies \eqref{single_thres} at single threshold $\gamma$, the decoder $g$ generates $k$ estimated targets $\{(\hat{\bs}_1,\hat{\bv}_1),\ldots,(\hat{\bs}_k,\hat{\bv}_k)\}\in([0, 1]^d \times [-v_+,v_+]^d)^k$ such that $w(\hatbs_t,\hatbv_t,[n])=\hat{\bw}_t^n$ for any $t\in[k]$.

An excess-resolution event occurs if at least one of the following three events occurs:
\begin{itemize}
\item[$\bullet$]$\mathcal{E}_1((\bs,\bv)^k,\mathbf{x},Y^n):$ the true set of $k$ trajectories fails the decoding rule \eqref{single_thres} at time $n$;
\item[$\bullet$]$\mathcal{E}_2((\bs,\bv)^k,\mathbf{x},Y^n):$ there exists a set of $k$ trajectories in which all $k$ trajectories are incorrect passing the decoding rule \eqref{single_thres} at time $n$;
\item[$\bullet$]$\mathcal{E}_3((\bs,\bv)^k,\mathbf{x},Y^n):$ there exists a set of $k$ trajectories in which some trajectories are correct and the other trajectories are incorrect passing the decoding rule \eqref{single_thres} at time $n$.
\end{itemize}

Recall the definition of excess-resolution probability in \eqref{pe}. The excess-resolution probability satisfies the bound:
\begin{align}
\mathrm{P_e}\big((\bs,\bv)^k,\mathbf{x}\big) &:=\mathrm{Pr}\Big\{ \exists~t\in[k]:\min_{(\hatbs,\hatbv)\in(\hatbs,\hatbv)^k} \max_{i\in[0:n]} \big\| l\big(\hatbs,\hatbv,i\big)-l\big(\bs_t,\bv_t,i\big) \big\|_{\infty} > \delta \Big\} \label{pe_event_1}\\
& \leq \mathrm{Pr}\bigg\{\bigcup_{j\in[3]}\mathcal{E}_j\big((\bs,\bv)^k,\mathbf{x},Y^n\big)\bigg\} \label{pe_event_2}\\
& \leq \sum_{j\in[3]} \mathrm{Pr}\big\{\calE_j\big((\bs,\bv)^k,\mathbf{x},Y^n\big)\big\}. \label{pe_event_rlt}
\end{align}

Fix any $p\in(0,1)$. For subsequent analysis, we use the random coding idea by assuming that the query matrix of the oracle is randomly generated i.i.d. from the Bernoulli distribution $P_X=\mathrm{Bern}(p)$. This way, the joint distribution of the query matrix $\bX$ and the noisy response $Y^n$ satisfies that for each $(\bx,y^n)$,
\begin{align}
P_{\bX Y^n}^{\mathrm{md}}(\bx,y^n)=\bigg(\prod_{j\in[(nM)^d]}P_X^n\big(x^n(j)\big)\bigg)\prod_{i\in[n]}P_{Y|Z}^{h(|\calA_i|)}\big(y_i|z_i\big),
\end{align}
where $z_i=\bbo\big(\exists~t\in[k]:x_{i,\Gamma(w(\bs_t,\bv_t,i))}=1\big)$. To apply the change-of-measure technique, we also need the following alternative query-independent distribution
\begin{align}
P_{\mathbf{X}Y^n}^{\mathrm{alt}}(\bx,y^n)=\bigg(\prod_{j\in[(nM)^d]}P_X^n\big(x^n(j)\big)\bigg)\prod_{i\in[n]}P_{Y|Z}^{h(p)}\big(y_i|z_i\big). \label{P_alt}
\end{align}

For any $\eta\in\bbR_+$, define the following typical set of query vectors:
\begin{align}
&\calT^n(M,d,p,\eta):=\Big\{\mathbf{x}\in(\{0,1\}^n)^{(nM)^d}:\max_{i\in[n]}\Big|q_{i,d}^{n,M}(\mathbf{x})-p\Big| \leq \eta\Big\} ,
\end{align}
where 
\begin{align}
q_{i,d}^{n,M}(\mathbf{x}):=\frac{1}{(nM)^d}\sum_{j\in[(nM)^d]} x_{i,j}.
\end{align}
Then, using the random coding idea, analogously to the proof of~\cite[Theorem 1]{zhou2021resolution}, we have
\begin{align}
\mathbb{E}\big[\mathrm{P_e}\big((\bs,\bv)^k,\bX\big)\big] &\leq \mathbb{E}\big[\mathrm{P_e}\big((\bs,\bv)^k,\bX\big)\bbo\big\{\bX\in\calT^n(M,d,p,\eta)\big\}\big]+\mathrm{Pr}\big\{\bX \notin \calT^n(M,d,p,\eta)\big\} \label{pe-second-term-1}\\
&\leq \mathbb{E}\big[\mathrm{P_e}\big((\bs,\bv)^k,\bX\big)\bbo\big\{\bX\in\calT^n(M,d,p,\eta)\big\}\big]+4n\exp\big(-2n^dM^d\eta^2\big) , \label{pe-second-term-2}
\end{align}
where \eqref{pe-second-term-2} follows from~\cite[Lemma 22]{tomamichel2014second}, which upper bounds the second term in \eqref{pe-second-term-1} using the union bound and Hoeffding's inequality.

Consider the fact that $h(\cdot)$ is Lipschitz continuous with parameter $L$ and the assumption on the query-dependent channel in \eqref{dmc}. It follows from the change-of-measure technique that the first term in \eqref{pe-second-term-2} satisfies
\begin{align}
\mathbb{E}\big[\mathrm{P_e}\big((\bs,\bv)^k,\mathbf{X}\big)\bbo\big(\mathbf{X}\in\calT^n(M,d,p,\eta)\big)\big] \leq \exp\big(n \eta Lc(h(p))\big)\sum_{j\in[3]} \underset{P_{\mathbf{X}Y^n}^{\mathrm{alt}}}{\mathrm{Pr}}\big\{\calE_j\big((\bs,\bv)^k,\mathbf{X},Y^n\big)\big\}. \label{typical_pe}
\end{align}

Recall the definition of event $\calG_1(\cdot)$ in \eqref{calG1}. Given any $\gamma\in\mathbb{R}_+$, the probability of the first error event in \eqref{typical_pe} follows that
\begin{align}
\underset{P_{\mathbf{X}Y^n}^{\mathrm{alt}}}{\mathrm{Pr}}\big\{\calE_1\big((\bs,\bv)^k,\mathbf{X},Y^n\big)\big\} 
&=\underset{P_{\mathbf{X}Y^n}^{\mathrm{alt}}}{\mathrm{Pr}}\Big\{ \imath^{h(p)}\big(X^n(\Gamma(\bw_1^n)),\ldots,X^n(\Gamma(\bw_k^n));Y^n\big) < \gamma \Big\}\\*
&=\underset{P_{\mathbf{X}Y^n}^{\mathrm{alt}}}{\mathrm{Pr}}\Big\{ \imath^{h(p)}\big(X^n(1),\ldots,X^n(k);Y^n\big) < \gamma \Big\}\\*
&=\underset{P_{\mathbf{X}Y^n}^{\mathrm{alt}}}{\mathrm{Pr}}\big\{ \calG_1(n,\gamma) \big\}. \label{p-event-1-final}
\end{align}

Recall the definition of the set of given $k$ targets’ quantized trajectories in \eqref{set-u}. For subsequent derivation of the probability of the second and third error events in \eqref{typical_pe}, define the following sets when $\calJ \subset [k]$,
\begin{align}
&\calI\big((\bs,\bv)^k,\calJ\big):=\Big\{ \big\{\bw_1^n,\ldots,\bw_k^n\big\}\in\calB_{n,M}^{(k)}:\bw_t^n\in\calU_{n,M}\big((\bs,\bv)^k\big)~\mathrm{for}~t\in\calJ,\bw_t^n \notin \calU_{n,M}\big((\bs,\bv)^k\big)~\mathrm{for}~t\in([k]\setminus\calJ) \Big\}. \label{ij}
\end{align}
Note that $\calI((\bs,\bv)^k,\calJ)$ denotes such a set of trajectories that some elements are same as the elements of true targets' trajectories set with the index set $\calJ$ and others are different from the rest elements of true targets' trajectories set. When $\calJ=\emptyset$, $\calI((\bs,\bv)^k,\emptyset)$ denotes the set of $k$-length sets with $k$ different elements from the true targets' trajectories set $\calU_{n,M}((\bs,\bv)^k)$. The size of $\calI((\bs,\bv)^k,\emptyset)$ satisfies
\begin{align}
\big|\calI((\bs,\bv)^k,\emptyset)\big| = \binom{\big((2nv_++3)n^4M^2\big)^d-k}{k} , \label{size-I}
\end{align}
where \eqref{size-I} follows since the size of $\calI((\bs,\bv)^k,\emptyset)$ is equal to the number of all possible combinations of choosing $k$ trajectories from $((2nv_++3)n^4M^2)^d-k$ non-target trajectories.

For any $\widetilde{\bw}_{[k]}^n=\{\widetilde{\bw}_1^n,\ldots,\widetilde{\bw}_k^n\}\in\calI((\bs,\bv)^k,\emptyset)$, we denote $(X^n(\Gamma(\widetilde{\bw}_1^n)),\ldots,X^n(\Gamma(\widetilde{\bw}_k^n)))$ with $X_{\widetilde{\bw}_{[k]}^n}^n$. Recall the definition of function $\bbH_1(\cdot)$ in \eqref{calH1}. We can upper bound $\mathrm{Pr}\{\calE_2((\bs,\bv)^k,\mathbf{X},Y^n)\}$ by using the information spectrum method introduced in~\cite{han2006information} (see also~\cite[Lemma 7.10.1]{koga2002information}) as follows:
\begin{align}
&\underset{P_{\mathbf{X}Y^n}^{\mathrm{alt}}}{\mathrm{Pr}}\big\{\calE_2\big((\bs,\bv)^k,\mathbf{X},Y^n\big)\big\} \nn\\
&= \underset{P_{\mathbf{X}Y^n}^{\mathrm{alt}}}{\mathrm{Pr}}\Big\{\exists~\widetilde{\bw}_{[k]}^n\in\calI\big((\bs,\bv)^k,\emptyset\big): \imath^{h(p)}\big(X_{\widetilde{\bw}_{[k]}^n}^n;Y^n\big) \geq \gamma\Big\} \label{p-event-2-1}\\
&\leq \sum_{\widetilde{\bw}_{[k]}^n\in\calI((\bs,\bv)^k,\emptyset)}\underset{P_{\mathbf{X}Y^n}^{\mathrm{alt}}}{\mathrm{Pr}}\Big\{\imath^{h(p)}\big(X_{\widetilde{\bw}_{[k]}^n}^n;Y^n\big) \geq \gamma\Big\} \label{p-event-2-2}\\
&= \sum_{\widetilde{\bw}_{[k]}^n\in\calI((\bs,\bv)^k,\emptyset)} \sum_{(\mathbf{x},y^n):\imath^{h(p)}(x_{\widetilde{\bw}_{[k]}^n}^n;y^n) \geq \gamma} P_{\mathbf{X}Y^n}^{\mathrm{alt}}(\mathbf{x},y^n) \label{p-event-2-3}\\
&= \sum_{\widetilde{\bw}_{[k]}^n\in\calI((\bs,\bv)^k,\emptyset)} \sum_{\imath^{h(p)}(x_{\widetilde{\bw}_{[k]}^n}^n;y^n) \geq \gamma} \bigg(\prod_{t\in[k]}P_X^n\big(x^n\big(\Gamma(\widetilde{\bw}_t^n)\big)\big)\bigg) \times P_{Y^n}^{\mathrm{alt}}(y^n) \label{p-event-2-4}\\
&\leq \sum_{\widetilde{\bw}_{[k]}^n\in\calI((\bs,\bv)^k,\emptyset)} \sum_{\imath^{h(p)}(x_{\widetilde{\bw}_{[k]}^n}^n;y^n) \geq \gamma} \bigg(\prod_{t\in[k]}P_X^n\big(x^n\big(\Gamma(\widetilde{\bw}_t^n)\big)\big)\bigg) \times P_{Y^n|X_{\widetilde{\bw}_{[k]}^n}^n}^{\mathrm{alt}}\big(y^n|x_{\widetilde{\bw}_{[k]}^n}^n\big) \times \exp(-\gamma) \label{p-event-2-5}\\
&= \sum_{\widetilde{\bw}_{[k]}^n\in\calI((\bs,\bv)^k,\emptyset)} \exp(-\gamma) \label{p-event-2-6}\\
&= \binom{\big((2nv_++3)n^4M^2\big)^d-k}{k} \exp(-\gamma) \label{p-event-2-7}\\
&=\bbH_1\Big(\big((2nv_++3)n^4M^2\big)^d-k,k,\gamma\Big) , \label{p-event-2-final}
\end{align}
where \eqref{p-event-2-1} follows from the definition of $\calI((\bs,\bv)^k,\calJ)$ in \eqref{ij}, \eqref{p-event-2-4} follows from the fact that the noisy $Y^n$ is only dependent on binary vectors $X_{\calU_{n,M}((\bs,\bv)^k)}^n$, \eqref{p-event-2-5} follows from the Algorithm \ref{alg:1} in Section \ref{mr Non-Asymptotic Bounds}, \eqref{p-event-2-7} follows from the size of the set $\calI((\bs,\bv)^k,\emptyset)$ in \eqref{size-I}.

Then, we focus on the probability of the occurrence of the third error event. Let $\calJ\in\Lambda([k])$ be the set of estimated target indices which are same as the true targets and let $\calJ^c$ denote $[k]\setminus\calJ$. Define
\begin{align}
\widetilde{\calI}\big((\bs,\bv)^k,\calJ^c\big):=\Big\{ \big\{\bw_1^n,\ldots,\bw_{|\calJ^c|}^n\big\}\in\calB_{n,M}^{(|\calJ^c|)}:\bw_t^n \notin \calU_{n,M}\big((\bs,\bv)^k\big)~\mathrm{for}~t\in\calJ^c \Big\}. \label{w-calJ}
\end{align}
Its size satisfies
\begin{align}
\big|\widetilde{\calI}\big((\bs,\bv)^k,\calJ^c\big)\big| = \binom{\big((2nv_++3)n^4M^2\big)^d-k}{|\calJ^c|} , \label{size-I-tilde}
\end{align}
where \eqref{size-I-tilde} follows since the size of $\widetilde{\calI}((\bs,\bv)^k,\calJ^c)$ is equal to the number of all possible combinations of choosing $|\calJ^c|$ trajectories from $((2nv_++3)n^4M^2)^d-k$ non-target trajectories.

Next, set an arbitrary $\lambda_\calJ > 0$, define the following two events
\begin{align}
&\widetilde{\calE}_1(\calJ):=\Big\{ \imath^{h(p)}\big(X_{\calJ}^n;Y^n\big) > nC(p,|\calJ|)+n\lambda_\calJ \Big\} , \label{w-event1}\\
&\widetilde{\calE}_2(\calJ):=\bigcup_{\bw_{\calJ^c}^n\in\widetilde{\calI}((\bs,\bv)^k,\calJ^c)} \Big\{ \imath^{h(p)}\big(\bar{X}_{\calJ^c}^n(\Gamma(\bw_{\calJ^c}^n)),X_{\calJ}^n;Y^n\big) \geq \gamma \Big\} , \label{w-event2}
\end{align}
and the threshold $\gamma_\calJ$ corresponding to the set $\calJ$
\begin{align}
\gamma_\calJ:=\gamma-nC(p,|\calJ|)-n\lambda_\calJ. \label{gammaJ}
\end{align}

Therefore, we can upper bound $\mathrm{Pr}\{\calE_3((\bs,\bv)^k,\mathbf{X},Y^n)\}$ as follows:
\begin{align}
\underset{P_{\mathbf{X}Y^n}^{\mathrm{alt}}}{\mathrm{Pr}}\big\{\calE_3\big((\bs,\bv)^k,\mathbf{X},Y^n\big)\big\} &\leq \sum_{\calJ\in\Lambda([k])} \underset{P_{\mathbf{X}Y^n}^{\mathrm{alt}}}{\mathrm{Pr}}\big\{\widetilde{\calE}_2(\calJ)\big\} \label{p-event-3-1}\\
&= \sum_{\calJ\in\Lambda([k])} \bigg(\underset{P_{\mathbf{X}Y^n}^{\mathrm{alt}}}{\mathrm{Pr}}\big\{\widetilde{\calE}_1(\calJ)\cap\widetilde{\calE}_2(\calJ)\big\}+\underset{P_{\mathbf{X}Y^n}^{\mathrm{alt}}}{\mathrm{Pr}}\big\{\widetilde{\calE}_1^c(\calJ)\cap\widetilde{\calE}_2(\calJ)\big\}\bigg). \label{p-event-3-2}
\end{align}
Recall the definition of event $\calG_2(\cdot)$ in \eqref{calG2}. The first term in the brace in \eqref{p-event-3-2} can be further upper bounded as follows:
\begin{align}
\underset{P_{\mathbf{X}Y^n}^{\mathrm{alt}}}{\mathrm{Pr}}\big\{\widetilde{\calE}_1(\calJ)\cap\widetilde{\calE}_2(\calJ)\big\} &\leq \underset{P_{\mathbf{X}Y^n}^{\mathrm{alt}}}{\mathrm{Pr}}\big\{\widetilde{\calE}_1(\calJ)\big\} \\
&\leq \underset{P_{\mathbf{X}Y^n}^{\mathrm{alt}}}{\mathrm{Pr}}\Big\{\imath^{h(p)}\big(X_{\calJ}^n;Y^n\big) > nC(p,|\calJ|)+n\lambda_\calJ\Big\} \\
&= \underset{P_{\mathbf{X}Y^n}^{\mathrm{alt}}}{\mathrm{Pr}}\big\{\calG_2\big(n,\calJ,\lambda_{\calJ}\big)\big\},\label{p-event-3-left-rlt}
\end{align}
where \eqref{p-event-3-left-rlt} follows from the definition of event $\widetilde{\calE}_1(\calJ)$ in \eqref{w-event1}. The second term in the brace in \eqref{p-event-3-2} can be further upper bounded as follows:
\begin{align}
&\underset{P_{\mathbf{X}Y^n}^{\mathrm{alt}}}{\mathrm{Pr}}\big\{\widetilde{\calE}_1^c(\calJ)\cap\widetilde{\calE}_2(\calJ)\big\} \nn\\
&= \underset{P_{\mathbf{X}Y^n}^{\mathrm{alt}}}{\mathrm{Pr}}\Bigg\{ \Big\{\imath^{h(p)}\big(X_{\calJ}^n;Y^n\big) \leq nC(p,|\calJ|)+n\lambda_\calJ\Big\}\bigcap\Bigg\{\bigcup_{\bw_{\calJ^c}^n\in\widetilde{\calI}((\bs,\bv)^k,\calJ^c)} \Big\{ \imath^{h(p)}\big(\bar{X}_{\calJ^c}^n(\Gamma(\bw_{\calJ^c}^n)),X_{\calJ}^n;Y^n\big) \geq \gamma \Big\}\Bigg\} \Bigg\} \label{p-event-3-right-1}\\
&\leq \underset{P_{\mathbf{X}Y^n}^{\mathrm{alt}}}{\mathrm{Pr}}\Bigg\{\bigcup_{\bw_{\calJ^c}^n\in\widetilde{\calI}((\bs,\bv)^k,\calJ^c)} \Big\{ \imath_{\calJ}^{h(p)}\big(\bar{X}_{\calJ^c}^n(\Gamma(\bw_{\calJ^c}^n)),X_{\calJ}^n;Y^n\big) \geq \gamma_\calJ \Big\}\Bigg\} \label{p-event-3-right-2}\\
&\leq \binom{\big((2nv_++3)n^4M^2\big)^d-k}{|\calJ^c|} \underset{P_{\mathbf{X}Y^n}^{\mathrm{alt}}}{\mathrm{Pr}}\Big\{ \imath_{\calJ}^{h(p),k}\big(\bar{X}_{\calJ^c}^n,X_{\calJ}^n;Y^n\big) \geq \gamma_\calJ \Big\} \label{p-event-3-right-3}\\
&\leq \binom{\big((2nv_++3)n^4M^2\big)^d-k}{k-|\calJ|}\exp(-\gamma_\calJ) \label{p-event-3-right-4}\\
&\leq \binom{\big((2nv_++3)n^4M^2\big)^d-k}{k-|\calJ|}\exp(-\gamma+nC(p,|\calJ|)+n\lambda_\calJ) \label{p-event-3-right-5}\\
&= \bbH_1\Big(\big((2nv_++3)n^4M^2\big)^d-k,k-|\calJ|,\gamma-nC(p,|\calJ|)-n\lambda_\calJ\Big) , \label{p-event-3-right-rlt}
\end{align}
where \eqref{p-event-3-right-1} follows from the two events $\widetilde{\calE}_1(\calJ),\widetilde{\calE}_2(\calJ)$ defined in \eqref{w-event1}, \eqref{w-event2}, \eqref{p-event-3-right-3} follows from the size of the set $\widetilde{\calI}((\bs,\bv)^k,\calJ^c)$ in \eqref{size-I-tilde}, \eqref{p-event-3-right-4} follows from~\cite[Eq. 88]{vincent2018on}, \eqref{p-event-3-right-5} follows from the definition of $\gamma_\calJ$ in \eqref{gammaJ}, \eqref{p-event-3-right-2} follows since
\begin{align}
\imath_{\calJ}^{h(p)}\big(\bar{X}_{\calJ^c}^n(\Gamma(\bw_{\calJ^c}^n)),X_{\calJ}^n;Y^n\big) &:= \log \frac{P_{Y | X_{[k]}}^{h(p)}\big(y | x_{[k]}\big)}{P_{Y | X_{\calJ}}^{h(p)}\big(y | x_{\calJ}\big)} \\*
&= \log \frac{P_{Y | X_{[k]}}^{h(p)}\big(y | x_{[k]}\big)}{P_Y^{h(p)}(y)} - \log \frac{P_{Y | X_{\calJ}}^{h(p)}\big(y | x_{\calJ}\big)}{P_Y^{h(p)}(y)} \\*
&= \imath^{h(p)}\big(\bar{X}_{\calJ^c}^n(\Gamma(\bw_{\calJ^c}^n)),X_{\calJ}^n;Y^n\big) - \imath^{h(p)}\big(X_{\calJ}^n;Y^n\big) \\*
&\geq \gamma - nC(p,|\calJ|) - n\lambda_\calJ \label{gamma-nc-nlambda}\\*
&= \gamma_\calJ,
\end{align}
where \eqref{gamma-nc-nlambda} follows from the two inequalities in \eqref{p-event-3-right-1}, which state that $\imath^{h(p)}\big(X_{\calJ}^n;Y^n\big) \leq nC(p,|\calJ|)+n\lambda_\calJ$ and\\$\imath^{h(p)}\big(\bar{X}_{\calJ^c}^n(\Gamma(\bw_{\calJ^c}^n)),X_{\calJ}^n;Y^n\big) \geq \gamma$ .

Combining \eqref{pe-second-term-2}, \eqref{typical_pe}, \eqref{p-event-1-final}, \eqref{p-event-2-final}, \eqref{p-event-3-2}, \eqref{p-event-3-left-rlt} and \eqref{p-event-3-right-rlt}, we conclude that for any $k$ targets' initial location and velocity pairs $(\bS,\bV)^k$, the excess-resolution probability in Algorithm \ref{alg:1} satisfies
\begin{align}
&\mathbb{E}\Big[\mathrm{P_e}\Big(\big(\bS,\bV\big)^k,\mathbf{X}\Big)\Big] \nn\\*
&\leq 4n\exp\big(-2n^dM^d\eta^2\big) + \exp\big(n \eta Lc(h(p))\big) \times \bigg( \underset{(P_{X_{[k]} Y}^{h(p)})^n}{\mathrm{Pr}}\big\{ \calG_1(n,\gamma) \big\} + \bbH_1\Big(\big((2nv_++3)n^4M^2\big)^d-k,k,\gamma\Big) \bigg. \nn\\*
&\qquad \bigg.+ \sum_{\calJ\in\Lambda([k])} \bigg(\underset{(P_{X_{[k]} Y}^{h(p)})^n}{\mathrm{Pr}}\big\{ \calG_2\big(n,\calJ,\lambda_{\calJ}\big) \big\} + \bbH_1\Big(\big((2nv_++3)n^4M^2\big)^d-k,k-|\calJ|,\gamma-nC(p,|\calJ|)-n\lambda_\calJ\Big) \bigg) \bigg) , \label{epe-con-1}
\end{align}
where \eqref{epe-con-1} follows from the definition of the joint distributions of $P_{X_{[k]} Y}^{h(p)}$ in \eqref{joint_d} and $P_{\mathbf{X}Y^n}^{\mathrm{alt}}$ in \eqref{P_alt}. 

\subsection{Achievability for AWGN Channels (Theorem \ref{th2})} 
\label{Achievable Non-Asymptotic Bound AWGN}

In this subsection, we will focus on explaining the changes in the achievable non-asymptotic bound derivation for a query-dependent AWGN channel.

Fix $p\in(0,1)$. Let $\psi^n$ be generated from the Gaussian distribution $\calN(0,\sigma^2)$ with mean $0$ and variance $\sigma^2$. Define $\zeta$ as 
\begin{align}
\zeta:=2\sigma^2\big(h(p)^2+L\eta(2h(p)+L\eta)\big). \label{alpha}
\end{align}
Recall the definition of $z$ in \eqref{z}. For any $(\bs,\bv)^k\in([0,1]^d \times [-v_+,v_+]^d)^k$ and any $\bx\in\calT^n(M,d,p,\eta)$, given queries $\calA^n=(\calA_1,\ldots,\calA_n)$, we have that
\begin{align}
\mathrm{Pr}\big\{\big\|Y^n-z^n\big\|^2>n\zeta\big\} &\leq \exp\Big(-\frac{n(1-\log2)}{2}\Big) , \label{Pryminusz-0}
\end{align}
where \eqref{Pryminusz-0} follows from~\cite[Section IV-B]{zhou2023resolution}.

The analysis for the query-dependent AWGN channel case is exactly the same as that for the query-dependent DMC case provided in Section \ref{Achievable Non-Asymptotic Bound DMC} till \eqref{pe-second-term-2}. Then we need to bound the first term in \eqref{pe-second-term-2} in a slightly different manner as follows:
\begin{align}
&\mathbb{E}\big[\mathrm{P_e}\big((\bs,\bv)^k,\mathbf{X}\big)\bbo\big(\mathbf{X}\in\calT^n(M,d,p,\eta)\big)\big] \nn\\
&= \mathbb{E}\big[\mathrm{P_e}\big((\bs,\bv)^k,\mathbf{X}\big)\bbo\big(\mathbf{X}\in\calT^n(M,d,p,\eta)\big)\bbo\big(\big\|Y^n-Z^n\big\|^2>n\zeta\big)\big] \nn\\
&+ \mathbb{E}\big[\mathrm{P_e}\big((\bs,\bv)^k,\mathbf{X}\big)\bbo\big(\mathbf{X}\in\calT^n(M,d,p,\eta)\big)\bbo\big(\big\|Y^n-Z^n\big\|^2 \leq n\zeta\big)\big] \label{typical-pe-AWGN-1}\\
&\leq \mathbb{E}\big[\bbo\big(\mathbf{X}\in\calT^n(M,d,p,\eta)\big)\bbo\big(\big\|Y^n-Z^n\big\|^2>n\zeta\big)\big] + \mathbb{E}\big[\mathrm{P_e}\big((\bs,\bv)^k,\mathbf{X}\big)\bbo\big(\mathbf{X}\in\calT^n(M,d,p,\eta)\big)\bbo\big(\big\|Y^n-Z^n\big\|^2 \leq n\zeta\big)\big] \label{typical-pe-AWGN-2}\\
&\leq \mathbb{E}\big[\mathrm{P_e}\big((\bs,\bv)^k,\mathbf{X}\big)\bbo\big(\mathbf{X}\in\calT^n(M,d,p,\eta)\big)\bbo\big(\big\|Y^n-Z^n\big\|^2 \leq n\zeta\big)\big] + \exp\Big(-\frac{n(1-\log 2)}{2}\Big) , \label{typical-pe-AWGN-0}
\end{align}
where \eqref{typical-pe-AWGN-0} follows from \eqref{Pryminusz-0}.

For query-dependent AWGN channel, when $\mathbf{X}\in\calT(M,d,p,\eta)$ and $\|Y^n-Z^n\|^2 \leq n\zeta$, analogously to~\cite[Section IV-B]{zhou2023resolution}, we have
\begin{align}
\log \frac{P_{\mathbf{X}Y^n}^{\mathrm{md}}(\mathbf{x},y^n)}{P_{\mathbf{X}Y^n}^{\mathrm{alt}}(\mathbf{x},y^n)} &= \sum_{i\in[n]} \log \frac{h(p)}{h(|\calA_i|)} + \sum_{i\in[n]} (y_i-z_i)^2\frac{h(|\calA_i|)^2-h(p)^2}{2\sigma^2h(p)^2h(|\calA_i|)^2} \\
&\leq \frac{nL\eta}{h(p)-L\eta} + \frac{nL\eta\big(h(p)^2+L\eta(2h(p)+L\eta)\big)(2h(p)+L\eta)}{h(p)^2\big(h(p)^2-L\eta(2h(p)+L\eta)\big)} = \Xi(n,p,\sigma,\eta). \label{aleph-rlt}
\end{align}

Then using the change-of-measure technique derived in \eqref{aleph-rlt}, the first term in \eqref{typical-pe-AWGN-0} is further upper bounded as follows:
\begin{align}
\mathbb{E}\big[\mathrm{P_e}\big((\bs,\bv)^k,\mathbf{X}\big)\bbo\big(\mathbf{X}\in\calT^n(M,d,p,\eta)\big)\bbo\big(\big\|Y^n-Z^n\big\|^2 \leq n\zeta\big)\big] \leq \exp\big(\Xi(n,p,\sigma,\eta)\big)\sum_{j\in[3]} \underset{P_{\mathbf{X}Y^n}^{\mathrm{alt}}}{\mathrm{Pr}}\big\{\calE_j\big((\bs,\bv)^k,\mathbf{X},Y^n\big)\big\}.
\end{align}
The subsequent analysis is consistent with the analysis of the probability of three excess-resolution events in Section \ref{Achievable Non-Asymptotic Bound DMC}.

\subsection{Converse} \label{Non-Asymptotic Converse Bound}

We now prove the non-asymptotic converse result that bounds the performance of an optimal non-adaptive query procedure.

\subsubsection{From Estimation of Trajectory to Estimation Initial Locations and Velocities}

In this subsection, we show that the excess-resolution probability of estimating the trajectories can be further lower bounded by the excess-resolution probability of estimating the initial locations and velocities. To do so, we need to define the following events
\begin{align}
&\calE_\dagger:=\bigg\{ \big(\hatbS,\hatbV\big)^k:\exists~t\in[k]:\min_{(\hatbs,\hatbv)\in(\hatbS,\hatbV)^k} \big\|\hatbs-\bS_t\big\|_{\infty} > \delta \bigg\} , \\
&\calE_\ddagger:=\bigg\{ \big(\hatbS,\hatbV\big)^k:\exists~t\in[k]:\min_{(\hatbs,\hatbv)\in(\hatbS,\hatbV)^k, \|\hatbs-\bS_t\|_{\infty} \leq \delta} n\big\|\hatbv-\bV_t\big\|_{\infty} > 2\delta \bigg\}.
\end{align}

Note that $\calE_\dagger$ leads to an excess-resolution event at time $i=0$. Using the triangle inequality, it follows that if $(\hatbs,\hatbv)^k\in\calE_\ddagger$, $\exists~t\in[k]$,
\begin{align}
\min_{(\hatbs,\hatbv)\in(\hatbs,\hatbv)^k} \big\| l(\hatbs,\hatbv,n)-l\big(\bS_t,\bV_t,n\big) \big\|_{\infty} &= \min_{(\hatbs,\hatbv)\in(\hatbs,\hatbv)^k} \max_{d^{\prime}\in[d]} \big| (\hat{s}_{d^{\prime}}+n\hat{v}_{d^{\prime}}) - (S_{t,d^{\prime}}+nV_{t,d^{\prime}}) \big| \\
&\geq \min_{(\hatbs,\hatbv)\in(\hatbs,\hatbv)^k} \max_{d^{\prime}\in[d]} \big(n\big|\hat{v}_{d^{\prime}}-V_{t,d^{\prime}}\big|-\big|\hat{s}_{d^{\prime}}-S_{t,d^{\prime}}\big|\big) \\
&> 2\delta-\delta \\
&= \delta ,
\end{align}
which leads to an excess-resolution event at time $i=n$.

Combining the above arguments and using \eqref{pe}, it follows that
\begin{align}
\varepsilon &\geq \mathrm{Pr}\bigg\{\exists~t\in[k]:\min_{(\hatbs,\hatbv)\in(\hatbs,\hatbv)^k} \max_{i\in[0:n]} \big\| l\big(\hatbs,\hatbv,i\big)-l\big(\bS_t,\bV_t,i\big) \big\|_{\infty} > \delta\bigg\} \\
&\geq \mathrm{Pr}\Big\{\big(\hatbs,\hatbv\big)^k\in\big(\calE_\dagger \cup \calE_\ddagger\big)\Big\} \\
&= \mathrm{Pr}\bigg\{\exists~t\in[k]:\min_{(\hatbs,\hatbv)\in(\hatbs,\hatbv)^k} \big\| \hatbs-\bS_t \big\|_{\infty} > \delta~\mathrm{or}~\min_{(\hatbs,\hatbv)\in(\hatbs,\hatbv)^k} \big\| \hatbv-\bV_t \big\|_{\infty} > \frac{2\delta}{n}\bigg\}. \label{pe-epsilon}
\end{align}

\subsubsection{Connection to Channel Coding}\

Let $\beta\in\bbR_+$ be arbitrary such that $\beta < \frac{1-\varepsilon}{2} < 0.25$ and let $\tilde{M}:=\big\lfloor\frac{\beta}{\delta}\big\rfloor$. Partition the set $[0,1]$ into $\tilde{M}$ disjoint equal size subsets $\calS_1,\ldots,\calS_{\tilde{M}}$. Furthermore, partition the set $[-v_+,v_+]$ into $\tilde{M}_* = \lfloor2nv_+\tilde{M}\rfloor$ disjoint equal size subset $\calV_1,\ldots,\calV_{\tilde{M}_*}$. Now define the following quantization functions
\begin{align}
\mathrm{q_s}(s) &:= \sum_{j\in[\tilde{M}]} j\bbo(s\in\calS_j), \forall~s\in[0,1] , \\
\mathrm{q_v}(v) &:= \sum_{j\in[\tilde{M}_*]} j\bbo(v\in\calV_j), \forall~v\in[-v_+,v_+].
\end{align}
In subsequent analysis, assuming that the initial locations and velocities of $k$ targets are distinct, let
\begin{align}
\Omega_0^k\big(\bS^k\big) &:= \big\{\big(\mathrm{q_s}(S_{1,1}),\ldots,\mathrm{q_s}(S_{1,d})\big),\ldots,\big(\mathrm{q_s}(S_{k,1}),\ldots,\mathrm{q_s}(S_{k,d})\big)\big\}=\big\{\Omega_{0,1},\ldots,\Omega_{0,k}\big\} ,\\
\Omega_1^k\big(\bV^k\big) &:= \big\{\big(\mathrm{q_v}(V_{1,1}),\ldots,\mathrm{q_v}(V_{1,d})\big),\ldots,\big(\mathrm{q_v}(V_{k,1}),\ldots,\mathrm{q_v}(V_{k,d})\big)\big\}=\big\{\Omega_{1,1},\ldots,\Omega_{1,k}\big\} ,\\
\hat{\Omega}_0^k\big(\hatbS^k\big) &:= \Big\{\Big(\mathrm{q_s}\big(\hat{S}_{1,1}\big),\ldots,\mathrm{q_s}\big(\hat{S}_{1,d}\big)\Big),\ldots,\Big(\mathrm{q_s}\big(\hat{S}_{k,1}\big),\ldots,\mathrm{q_s}\big(\hat{S}_{k,d}\big)\Big)\Big\}=\big\{\hat{\Omega}_{0,1},\ldots,\hat{\Omega}_{0,k}\big\} ,\\
\hat{\Omega}_1^k\big(\hatbV^k\big) &:= \Big\{\Big(\mathrm{q_v}\big(\hat{V}_{1,1}\big),\ldots,\mathrm{q_v}\big(\hat{V}_{1,d}\big)\Big),\ldots,\Big(\mathrm{q_v}\big(\hat{V}_{k,1}\big),\ldots,\mathrm{q_v}\big(\hat{V}_{k,d}\big)\Big)\Big\}=\big\{\hat{\Omega}_{1,1},\ldots,\hat{\Omega}_{1,k}\big\} .
\end{align}

Given queries $\calA^n=(\calA_1,\ldots,\calA_n)$, the noiseless answer $z_i$ to query $\calA_i$ is
\begin{align}
z_i:=\bbo\big(\exists~t\in[k]:l(\bS_t,\bV_t,i)\in\calA_i\big).
\end{align}
When we consider the random pairs of targets' initial locations and velocities $(\bS,\bV)^k$, the induced random noiseless response $Z_i\sim\mathrm{Bern}(1-(1-|\calA_i|)^k)$. This is because $Z_i=1$ if for any $t\in[k]$, the real-time location $l(\bS_t,\bV_t,i)$ lies in the region $\calA_i$. Equivalently, $Z_i$ is the binary OR of $k$ independent random variables $\big(X_{i,\Gamma(w(\bS_1,\bV_1,i))},\ldots,X_{i,\Gamma(w(\bS_k,\bV_k,i))}\big)$, each generated from Bern$(|\calA_i|)$, i.e., $Z_i=\bbo\big(\exists~t\in[k]:X_{i,\Gamma(w(\bS_t,\bV_t,i))}=1\big)$. The noisy response $Y_i$ is the output of passing $Z_i$ over the query-dependent channel $P_{Y|Z}^{h(|\calA_i|)}$. The marginal distribution $P_{Y_i}^{\calA_i}$ is thus induced by the distribution of $Z_i$ and the channel $P_{Y|Z}^{h(|\calA_i|)}$.

From the problem formulation, the Markov chain $(\Omega_0^k,\Omega_1^k)-(\bS,\bV)^k-Z^n-Y^n-(\hatbS,\hatbV)^k$ holds and the joint distribution of these random variables satisfies
\begin{align}
&\mathrm{P}_{\Omega_0^k\Omega_1^k(\bS\bV)^kZ^nY^n(\hatbS\hatbV)^k}\big(\omega_0^k,\omega_1^k,(\bs,\bv)^k,z^n,y^n,(\hatbs,\hatbv)^k\big) \nn\\*
&= f_{\bS^k\bV^k}\big((\bs,\bv)^k\big)\prod_{t\in[k]}\Big(\bbo\big(\omega_{0,t}=\mathrm{q_s}(\bs_t)\big) \times \bbo\big(\omega_{1,t}=\mathrm{q_v}(\bv_t)\big)\Big) \nn\\*
&\qquad\qquad\qquad\qquad\qquad\times \Bigg( \prod_{i\in[n]} \bbo\big(z_i=\bbo\big(\exists~t\in[k],l(\bs_t,\bv_t,i)\in\calA_i\big)\big)P_{Y|Z}^{\calA_i}(y_i|z_i) \Bigg) \bbo\big((\hatbs,\hatbv)^k=g(y^n)\big). \label{markov}
\end{align}
Unless stated otherwise, the probabilities of events are calculated according to the distribution in \eqref{markov} or its induced marginal and conditional versions.

Let $\hat{\Omega}_{[0:1]}=(\hat{\Omega}_0^k,\hat{\Omega}_1^k)$ and $\Omega_{[0:1]}=(\Omega_0^k,\Omega_1^k)$, it follows that
\begin{align}
&\mathrm{Pr}\Big\{\Omega_{[0:1]} \neq \hat{\Omega}_{[0:1]}\Big\} \nn\\
&=\mathrm{Pr}\bigg\{ \Omega_{[0:1]} \neq \hat{\Omega}_{[0:1]}~\mathrm{and}~\exists~t\in[k]: \min_{(\hatbs,\hatbv)\in(\hatbs,\hatbv)^k} \big\|\hatbs-\bS_t\big\|_{\infty} > \delta~\mathrm{or}~\min_{(\hatbs,\hatbv)\in(\hatbs,\hatbv)^k} \big\|\hatbv-\bV_t\big\|_{\infty} > \frac{2\delta}{n} \bigg\} \nn\\
&\qquad\qquad+\mathrm{Pr}\bigg\{ \Omega_{[0:1]} \neq \hat{\Omega}_{[0:1]}~\mathrm{and}~\forall~t\in[k]: \min_{(\hatbs,\hatbv)\in(\hatbs,\hatbv)^k} \big\|\hatbs-\bS_t\big\|_{\infty} \leq \delta~\mathrm{and}~\min_{(\hatbs,\hatbv)\in(\hatbs,\hatbv)^k} \big\|\hatbv-\bV_t\big\|_{\infty} \leq \frac{2\delta}{n} \bigg\} \label{Omega-1}\\
&\leq \mathrm{Pr}\bigg\{ \exists~t\in[k]: \min_{(\hatbs,\hatbv)\in(\hatbs,\hatbv)^k} \big\|\hatbs-\bS_t\big\|_{\infty} > \delta~\mathrm{or}~\min_{(\hatbs,\hatbv)\in(\hatbs,\hatbv)^k} \big\|\hatbv-\bV_t\big\|_{\infty} > \frac{2\delta}{n} \bigg\} \nn\\
&\qquad\qquad+\mathrm{Pr}\bigg\{ \Omega_{[0:1]} \neq \hat{\Omega}_{[0:1]}~\mathrm{and}~\forall~t\in[k]: \min_{(\hatbs,\hatbv)\in(\hatbs,\hatbv)^k} \big\|\hatbs-\bS_t\big\|_{\infty} \leq \delta~\mathrm{and}~\min_{(\hatbs,\hatbv)\in(\hatbs,\hatbv)^k} \big\|\hatbv-\bV_t\big\|_{\infty} \leq \frac{2\delta}{n} \bigg\} \label{Omega-2}\\
&\leq \varepsilon+\mathrm{Pr}\bigg\{ \Omega_{[0:1]} \neq \hat{\Omega}_{[0:1]}~\mathrm{and}~\forall~t\in[k]: \min_{(\hatbs,\hatbv)\in(\hatbs,\hatbv)^k} \big\|\hatbs-\bS_t\big\|_{\infty} \leq \delta~\mathrm{and}~\min_{(\hatbs,\hatbv)\in(\hatbs,\hatbv)^k} \big\|\hatbv-\bV_t\big\|_{\infty} \leq \frac{2\delta}{n} \bigg\} \label{Omega-3}\\
&\leq \varepsilon+\mathrm{Pr}\bigg\{ \Omega_0 \neq \hat{\Omega}_0~\mathrm{and}~\forall~t\in[k]: \min_{(\hatbs,\hatbv)\in(\hatbs,\hatbv)^k} \big\|\hatbs-\bS_t\big\|_{\infty} \leq \delta \bigg\} \nn\\
&\qquad\qquad+\mathrm{Pr}\bigg\{ \Omega_1 \neq \hat{\Omega}_1~\mathrm{and}~\forall~t\in[k]:\min_{(\hatbs,\hatbv)\in(\hatbs,\hatbv)^k} \big\|\hatbv-\bV_t\big\|_{\infty} \leq \frac{2\delta}{n} \bigg\} \label{Omega-4}\\
&\leq \varepsilon+\sum_{t\in[k]}\bigg(\mathrm{Pr}\bigg\{ \omega_{0,t}\in\hat{\Omega}_0~\mathrm{and}~\min_{(\hatbs,\hatbv)\in(\hatbs,\hatbv)^k} \big\|\hatbs-\bS_t\big\|_{\infty} \leq \delta \bigg\}+\mathrm{Pr}\bigg\{\omega_{1,t}\in\hat{\Omega}_1~\mathrm{and}~\min_{(\hatbs,\hatbv)\in(\hatbs,\hatbv)^k} \big\|\hatbv-\bV_t\big\|_{\infty} \leq \frac{2\delta}{n} \bigg\}\bigg) \label{Omega-5}\\
&=\varepsilon+\sum_{t\in[k]}\bigg(\mathrm{Pr}\bigg\{ \omega_{0,t}\in\hat{\Omega}_0~\mathrm{and}~\exists~(\hatbs,\hatbv)\in\big(\hatbs,\hatbv\big)^k:\big\|\hatbs-\bS_t\big\|_{\infty} \leq \delta \bigg\} \bigg. \nn\\*
&\qquad\qquad\qquad\qquad\bigg. +\mathrm{Pr}\bigg\{ \omega_{1,t}\in\hat{\Omega}_1~\mathrm{and}~\exists~(\hatbs,\hatbv)\in\big(\hatbs,\hatbv\big)^k:\big\|\hatbv-\bV_t\big\|_{\infty} \leq \frac{2\delta}{n} \bigg\}\bigg) \label{Omega-6}\\
&\leq \varepsilon+\sum_{t\in[k]}\bigg(\mathrm{Pr}\bigg\{ \exists~(\hatbs,\hatbv)\in\big(\hatbs,\hatbv\big)^k:\omega_{0,t} \neq \mathrm{q_s}(\hatbs)~\mathrm{and}~\big\|\hatbs-\bS_t\big\|_{\infty} \leq \delta \bigg\} \bigg. \nn\\*
&\qquad\qquad\qquad\qquad\bigg. +\mathrm{Pr}\bigg\{\exists~(\hatbs,\hatbv)\in\big(\hatbs,\hatbv\big)^k:\omega_{1,t} \neq \mathrm{q_v}(\hatbv)~\mathrm{and}~\big\|\hatbv-\bV_t\big\|_{\infty} \leq \frac{2\delta}{n} \bigg\}\bigg) \label{Omega-7}\\
&\leq \varepsilon+\sum_{t\in[k]}\sum_{(\hatbs,\hatbv)\in(\hatbs,\hatbv)^k}\bigg(\mathrm{Pr}\bigg\{ \omega_{0,t} \neq \mathrm{q_s}(\hatbs)~\mathrm{and}~\big\|\hatbs-\bS_t\big\|_{\infty} \leq \delta \bigg\}+\mathrm{Pr}\bigg\{\omega_{1,t} \neq \mathrm{q_v}(\hatbv)~\mathrm{and}~\big\|\hatbv-\bV_t\big\|_{\infty} \leq \frac{2\delta}{n} \bigg\}\bigg) \label{Omega-8}\\
&\leq \varepsilon+\sum_{t\in[k]}\sum_{(\hatbs,\hatbv)\in(\hatbs,\hatbv)^k}\bigg(2d\delta\tilde{M}+\frac{4d\delta}{n}\tilde{M}_*\bigg) \label{Omega-9}\\
&\leq \varepsilon+2k^2d\delta\bigg(\tilde{M}+\frac{2}{n}\tilde{M}_*\bigg) \label{Omega-10}\\
&\leq \varepsilon+2(1+4v_+)k^2d\beta , \label{Omega-rlt}
\end{align}
where \eqref{Omega-3} follows from \eqref{pe-epsilon}, \eqref{Omega-9} follows from the union bound and arguments analogous to those in~\cite{kaspi2017searching} and \eqref{Omega-rlt} follows from the definition of $\tilde{M}$ and $\tilde{M}_*$.

For subsequent analysis, let $\Delta:[\tilde{M}^d]^k \times [\tilde{M}_*^d]^k \rightarrow [(\tilde{M} \times \tilde{M}_*)^{d \times k}]$ be an arbitrary one-to-one mapping from $2~(d \times k)$-dimensional vectors to an integer. To connect the current problem to data transmission, let $\Omega=\Delta(\Omega_{[0:1]})$ be a random variable. Similarly, we can define $\hat{\Omega}=\Delta(\hat{\Omega}_{[0:1]})$. It follows from \eqref{Omega-rlt} that
\begin{align}
\mathrm{Pr}\big\{\hat{\Omega} \neq \Omega\big\}=\mathrm{Pr}\big\{\Omega_{[0:1]} \neq \hat{\Omega}_{[0:1]}\big\}.
\end{align}
Since we consider uniformly distributed initial location and velocity pairs $(\bS,\bV)^k$, the random variable $\Omega$ is uniformly distributed over $[(\tilde{M} \times \tilde{M}_*)^{d \times k}]$. Given any queries $\calA^n$, the probability $\mathrm{Pr}\{\hat{\Omega} \neq \Omega\}$ is the average error probability for a channel coding problem with states at both the encoder and decoder, where at each time $i\in[n]$, the channel is given by the query-dependent channel $P_{Y|Z}^{h(|\calA_i|)}$.

\subsubsection{Final Steps}

Using the non-asymptotic converse bound for channel coding, for any $\kappa\in(0,1-\varepsilon-2(1+4v_+)k^2d\beta)$, we have
\begin{align}
&\log\big(\tilde{M} \times \tilde{M}_*\big)^{d \times k} \nn\\
&\leq \inf_{Q_{Y^n}\in\calP(\calY^n)}\sup_{z\in\calZ}\sup\bigg\{r\in\bbR_+ | \mathrm{Pr}\bigg\{\sum_{i\in[n]}\log\frac{P_{Y|Z}^{h(|\calA_i|)}(Y_i|Z_i)}{Q_{Y}(Y_i)} \leq r\bigg\} \leq \varepsilon+2(1+4v_+)k^2d\beta+\kappa\bigg\}-\log\kappa \\
&\leq \sup_{z\in\calZ}\sup\bigg\{r\in\bbR_+ | \mathrm{Pr}\bigg\{\sum_{i\in[n]}\log\frac{P_{Y|Z}^{h(|\calA_i|)}(Y_i|Z_i)}{P_{Y}^{h(|\calA_i|)}(Y_i)} \leq r\bigg\} \leq \varepsilon+2(1+4v_+)k^2d\beta+\kappa\bigg\}-\log\kappa , \label{logtildeM-1}
\end{align}
where \eqref{logtildeM-1} follows by choosing $Q_{Y^n}$ as distribution $\prod_{i\in[n]} P_{Y}^{f(|\calA_i|)}$.

Recall the definitions of $\tilde{M}$ and $\tilde{M}_*$. It follows that
\begin{align}
\log\big(\tilde{M} \times \tilde{M}_*\big)^{d \times k} &= dk\big(\log\tilde{M}+\log\tilde{M}_*\big) \\
&=dk\big(2\log\tilde{M}+\log(2nv_+)\big) \\
&=2dk\log\beta-2dk\log\delta+dk\log(2nv_+). \label{logtildeM-2}
\end{align}

Combining \eqref{logtildeM-1} and \eqref{logtildeM-2} leads to
\begin{align}
&-\log\delta \nn\\
&\leq \sup_{z\in\calZ}\frac{1}{2dk}\bigg\{\sup\bigg\{r\in\bbR_+ | \mathrm{Pr}\bigg\{\sum_{i\in[n]}\imath_{\calA_i}(Z_i;Y_i) \leq r\bigg\} \leq \varepsilon+2(1+4v_+)k^2d\beta+\kappa\bigg\}-dk\log\big(2nv_+\beta^2\big)-\log\kappa\bigg\} \\
&= \sup_{\calA^n \in ([0,1]^d)^n}\frac{1}{2dk}\bigg\{\sup\bigg\{r\in\bbR_+ | \mathrm{Pr}\bigg\{\sum_{i\in[n]}\imath_{\calA_i}(Z_i;Y_i) \leq r\bigg\} \leq \varepsilon+2(1+4v_+)k^2d\beta+\kappa\bigg\}-dk\log\big(2nv_+\beta^2\big)-\log\kappa\bigg\}.
\end{align}

The proof of non-asymptotic converse bound (Theorem \ref{th3}) is now completed.

\section{Proof of Second-Order Asymptotics (Theorem \ref{th4})} 
\label{sec_proof_a}

\subsection{Achievability} \label{Achievable Second-Order Asymptotic Bound}

From Theorem \ref{th1}, we have that there exists a non-adaptive query procedure such that the average excess-resolution probability is upper bounded as follows:
\begin{align}
&\mathbb{E}\big[\mathrm{P_e}\big((\bS,\bV)^k,\mathbf{X}\big)\big] \nn\\
&\leq 4n\exp\big(-2n^dM^d\eta^2\big) + \exp\big(n \eta Lc(h(p))\big) \times \bigg( \underset{P_{\mathbf{X}Y^n}^{\mathrm{alt}}}{\mathrm{Pr}}\big\{ \calG_1(n,\gamma) \big\} + \bbH_1\Big(\big((2nv_++3)n^4M^2\big)^d-k,k,\gamma\Big) \bigg. \nn\\
&\qquad \bigg.+ \sum_{\calJ\in\Lambda([k])} \bigg( \underset{P_{\mathbf{X}Y^n}^{\mathrm{alt}}}{\mathrm{Pr}}\big\{ \calG_2\big(n,\calJ,\lambda_{\calJ}\big) \big\} + \bbH_1\Big(\big((2nv_++3)n^4M^2\big)^d-k,k-|\calJ|,\gamma-nC(p,|\calJ|)-n\lambda_\calJ\Big) \bigg) \bigg). \label{epe-con-2}
\end{align}

In subsequent analysis, the probability terms are calculated with respect to $P_{\mathbf{X}Y^n}^{\mathrm{alt}}$ (cf. \eqref{P_alt}) unless otherwise stated. Recall the definitions of $C(p,k)$ in \eqref{cempty} and $V(p,k)$ in \eqref{vempty}. We choose $M\in\bbN_+$ and $\gamma\in\mathbb{R}_+$ such that for some $\varepsilon^{\prime}\in(0,1)$,
\begin{align}
\log M&=\frac{nC(p,k)+\sqrt{nV(p,k)}\Phi^{-1}(\varepsilon^{\prime})-4dk\log n-nv_+}{2dk} , \label{logm}\\
\gamma&=dk\log n^4M^2 + nv_+ = nC(p,k)+\sqrt{nV(p,k)}\Phi^{-1}(\varepsilon^{\prime}). \label{gamma}
\end{align}

Given $(p,t)\in(0,1)\times[k]$, recall the definition of $C(p,t)$ in \eqref{cempty}. Let $T(p,t)$ denote the third absolute moment of the mutual information density $\imath^{h(p)}$ in \eqref{mu_info_e}, i.e.
\begin{align}
T(p,t):=\mathbb{E}\big[\big|\imath^{h(p)}\big(X_{[t]};Y\big)-C(p,t)\big|^3\big]. \label{moment}
\end{align}
Recall the definition of event $\calG_1(\cdot)$ in \eqref{calG1}. The Berry-Esseen theorem~\cite{berry1941accuracy,esseen1942liapounoff} implies that
\begin{align}
\mathrm{Pr}\big\{ \calG_1(n,\gamma) \big\} &= \mathrm{Pr}\Big\{ \imath^{h(p)}\big(X^n(1),\ldots,X^n(k);Y^n\big) < \gamma \Big\} \\
&= \mathrm{Pr}\Big\{ \imath^{h(p)}\big(X^n(1),\ldots,X^n(k);Y^n\big) < dk\log n^4M^2 + nv_+ \Big\} \\
&= \mathrm{Pr}\Big\{ \imath^{h(p)}\big(X^n(1),\ldots,X^n(k);Y^n\big) < nC(p,k)+\sqrt{nV(p,k)}\Phi^{-1}(\varepsilon^{\prime}) \Big\} \\
&\leq \varepsilon^{\prime} + \frac{6T(p,k)}{\big(nV(p,k)\big)^{\frac{3}{2}}}. \label{epsilon'+t}
\end{align}

The weak law of large numbers implies that when $n \rightarrow \infty$, for any $\calJ\in\Lambda([k])$,
\begin{align}
\mathrm{Pr}\big\{\calG_2\big(n,\calJ,\lambda_{\calJ}\big)\big\} = \mathrm{Pr}\Big\{\frac{1}{n}\imath^{h(p)}\big(X_{\calJ}^n;Y^n\big) - C(p,|\calJ|) > \lambda_\calJ\Big\}=\theta_1(n) \rightarrow 0. \label{khinchin}
\end{align}

Then, we analyze the last part of the average excess-resolution probability in \eqref{epe-con-2}. For any $\calJ\in\Lambda([k])$,
\begin{align}
&\bbH_1\Big(\big((2nv_++3)n^4M^2\big)^d-k,k-|\calJ|,\gamma-nC(p,|\calJ|)-n\lambda_\calJ\Big) \nn\\
&= \binom{\big((2nv_++3)n^4M^2\big)^d-k}{k-|\calJ|}\exp\Big(n\big(C(p,|\calJ|)-C(p,k)+\lambda_\calJ\big)-\sqrt{nV(p,k)}\Phi^{-1}(\varepsilon^{\prime})\Big) = \theta_2(n). \label{lambda_calJ}
\end{align}
Consider that
\begin{align}
C(p,k)-C(p,|\calJ|) &= \mathbb{E}\big[\imath^{h(p)}\big(X_{[k]};Y\big)\big] - \mathbb{E}\big[\imath^{h(p)}\big(X_{\calJ};Y\big)\big] \\
&= \mathbb{E}\bigg[\log \frac{P_{Y | X_{[k]}}^{h(p)}\big(y | x_{[k]}\big)}{P_Y^{h(p)}(y)} - \log \frac{P_{Y | X_{\calJ}}^{h(p)}\big(y | x_{\calJ}\big)}{P_Y^{h(p)}(y)}\bigg] \\
&= \mathbb{E}\bigg[\log \frac{P_{Y | X_{[k]}}^{h(p)}\big(y | x_{[k]}\big)}{P_{Y | X_{\calJ}}^{h(p)}\big(y | x_{\calJ}\big)}\bigg] \\
&= \mathbb{E}\big[\imath_{\calJ}^{h(p)}\big(X_{[k]};Y\big)\big].
\end{align}
When choosing some value of the auxiliary variable $\lambda_\calJ$ to satisfy $0 < \lambda_\calJ \leq \mathbb{E}[\imath_{\calJ}^{h(p)}(X_{[k]};Y)]$, we have that $\theta_2(n) \rightarrow 0$ as $n \rightarrow \infty$.

Combining \eqref{epe-con-2}, \eqref{epsilon'+t}, \eqref{khinchin} and \eqref{lambda_calJ}, when $M$ and $\gamma$ satisfy \eqref{logm} and \eqref{gamma}, respectively, we have
\begin{align}
&\mathbb{E}\Big[\mathrm{P_e}\Big(\big(\bS,\bV\big)^k,\mathbf{X}\Big)\Big] \nn\\*
&\leq 4n\exp\big(-2n^dM^d\eta^2\big) + \exp\big(n \eta Lc(h(p))\big) \times \Bigg(\varepsilon^{\prime} + \frac{6T(p,k)}{(nV(p,k))^{\frac{3}{2}}} + \frac{\binom{((2nv_++3)n^4M^2)^d-k}{k}}{(n^4M^2)^{dk}\exp(nv_+)} + \theta_1(n) + \theta_2(n) \Bigg). \label{epevarepsilonprime}
\end{align}
The result in \eqref{epevarepsilonprime} implies that there exist binary vectors $\bar{\mathbf{x}}=(\bar{x}^n(1),\ldots,\bar{x}^n((nM)^d))$ such that the excess-resolution probability with respect to the resolution level $\frac{2}{M}$ for any initial location and velocity pairs $(\bS,\bV)^k=\{(\bS_1,\bV_1),\ldots,(\bS_k,\bV_k)\}$ is upper bounded by the right-hand side of \eqref{epevarepsilonprime}.

Let
\begin{align}
\eta=\sqrt{\frac{d\log M}{2M^d}} , \label{eta}
\end{align}
and recall the choice of $M$ in \eqref{logm}. We can verify that as $n \rightarrow \infty$,
\begin{align}
\exp\big(n \eta Lc(h(p))\big) &= 1 + n \eta Lc(h(p)) + o\big(n \eta Lc(h(p))\big) \label{taylor}\\*
&= 1 + O\Bigg(\frac{n^{\frac{3}{2}}}{\exp\Big(\frac{n(C(p,k)-v_+)}{4k}\Big)}\Bigg) \rightarrow 1 ,
\end{align}
and
\begin{align}
4n\exp\big(-2n^dM^d\eta^2\big) &= 4n\exp\big(-n^d\log M^d\big) \\*
&= \frac{4n}{(M^d)^{n^d}} \\*
&= O\Bigg(\frac{n}{\exp\Big(\frac{n^{d+1}(C(p,k)-v_+)}{2k}\Big)}\Bigg) \rightarrow 0 ,
\end{align}
where \eqref{taylor} follows from the Taylor expansion.

Furthermore, in the query-dependent AWGN channel case, 
\begin{align}
\exp\big(\Xi(n,p,\sigma,\eta)\big) &= 1 + \Xi(n,p,\sigma,\eta) + o\big(\Xi(n,p,\sigma,\eta)\big) \label{taylor-2}\\*
&= 1 + O\bigg(n \eta L\bigg(\frac{1}{h(p)-\eta L}+\frac{(2h(p)+\eta L)(h(p)^2+\eta L(2h(p)+\eta L))}{h(p)^2(h(p)^2-\eta L(2h(p)+\eta L))}\bigg)\bigg) \rightarrow 1 ,
\end{align}
and
\begin{align}
\exp\Big(-\frac{n(1-\log 2)}{2}\Big) = \Big(\frac{e}{2}\Big)^{-\frac{n}{2}} \rightarrow 0.
\end{align}

Since $\calX$ and $\calY$ are both finite sets, we conclude that the dispersion $V(p,t)$ and the third absolute moment $T(p,t)$ are both finite for any $t\in[k]$. We choose $\varepsilon^{\prime}$ such that the bound in \eqref{epevarepsilonprime} is exactly $\varepsilon\in(0,1)$. The above analysis implies that
\begin{align}
\varepsilon^{\prime}=\varepsilon-\Theta\Big(\frac{1}{n^\frac{3}{2}}\Big).
\end{align}

Note that the above result holds for any $p\in(0,1)$. Thus, for any $n\in\bbN$ and $\varepsilon\in(0,1)$, the minimal achievable resolution $\delta(n,k,d,\varepsilon)$ satisfies
\begin{align}
-\log \delta(n,k,d,\varepsilon) &\geq \max_{p\in(0,1)} \frac{nC(p,k)+\sqrt{nV(p,k)}\Phi^{-1}(\varepsilon^{\prime})-4dk\log n-nv_+}{2dk} - \log2 \label{secorder-rlt1}\\
&= \max_{p\in(0,1)} \frac{1}{2dk} \Big(nC(p,k)+\sqrt{nV(p,k)}\Phi^{-1}(\varepsilon) - 4dk\log n-nv_+ + O(1)\Big) - \log2 \label{secorder-rlt2}\\
&= \frac{nC(k)+\sqrt{nV(k,\varepsilon)}\Phi^{-1}(\varepsilon)-4dk\log n-nv_+ + O(1)}{2dk} , \label{secorder-rlt3}
\end{align}
where \eqref{secorder-rlt2} follows from the Taylor expansion of $\Phi^{-1}(\cdot)$ and \eqref{secorder-rlt3} follows from the definition of $C(k)$ in \eqref{ck} and $V(k,\varepsilon)$ in \eqref{vk}, respectively.

The achievability proof of Theorem \ref{th4} is now completed.

\subsection{Converse} \label{Second-Order Asymptotic Converse Bound}

For any queries $\calA^n\in([0,1]^d)^n$, define the following first-, second- and third-order moments of the information density $\imath_{\calA_i}(Z_i;Y_i)$:
\begin{align}
C_{\calA^n}&:=\frac{1}{n}\sum_{i\in[n]} \mathbb{E}\big[\imath_{\calA_i}(Z_i;Y_i)\big] , \\
V_{\calA^n}&:=\frac{1}{n}\sum_{i\in[n]} \mathrm{Var}\big[\imath_{\calA_i}(Z_i;Y_i)\big] , \\
T_{\calA^n}&:=\frac{1}{n}\sum_{i\in[n]} \mathbb{E}\Big[\big|\imath_{\calA_i}(Z_i;Y_i)-\mathbb{E}\big[\imath_{\calA_i}(Z_i;Y_i)\big]\big|^3\Big].
\end{align}
Since we consider finite input and output alphabets, the moment $C_{\calA^n},V_{\calA^n}$ and $T_{\calA^n}$ are all finite.

Consider those $\calA^n$ such that $V_{\calA^n}$ is strictly positive, i.e., there exists $V_{\_} > 0$ satisfying $V_{\_} \leq V_{\calA^n}$. For any $r\in\bbR_+$, using the Berry-Esseen theorem, we have that
\begin{align}
\mathrm{Pr}\bigg\{\sum_{i\in[n]}\imath_{\calA_i}(Z_i;Y_i) \leq r\bigg\} &\geq \Phi\Big(\frac{r-nC_{\calA^n}}{\sqrt{nV_{\calA^n}}}\Big)-\frac{6T_{\calA^n}}{\sqrt{n}\big(\sqrt{V_{\calA^n}}\big)^3} \\*
&\geq \Phi\Big(\frac{r-nC_{\calA^n}}{\sqrt{nV_{\calA^n}}}\Big)-\frac{6T_{\calA^n}}{\sqrt{n}\big(\sqrt{V_{\_}}\big)^3}.
\end{align}
Thus, for any $\varepsilon\in(0,1)$,
\begin{align}
\sup\bigg\{r\in\bbR_+:\mathrm{Pr}\bigg\{\sum_{i\in[n]}\imath_{\calA_i}(Z_i;Y_i) \leq r\bigg\} \leq \varepsilon\bigg\} &\leq \sup\bigg\{r\in\bbR_+:\Phi\Big(\frac{r-nC_{\calA^n}}{\sqrt{nV_{\calA^n}}}\Big)-\frac{6T_{\calA^n}}{\sqrt{n}\big(\sqrt{V_{\_}}\big)^3} \leq \varepsilon\bigg\} \\
&\leq nC_{\calA^n}+\sqrt{nV_{\calA^n}}\Phi^{-1}\bigg(\varepsilon+\frac{6T_{\calA^n}}{\sqrt{n}\big(\sqrt{V_{\_}}\big)^3}\bigg).
\end{align}

Therefore, choosing $\beta=\frac{1}{\sqrt{n}}$ and $\kappa=\frac{1}{\sqrt{n}}$, we have
\begin{align}
\nn&\sup_{\calA^n}\sup\bigg\{r\in\bbR_+|\mathrm{Pr}\bigg\{\sum_{i\in[n]}\imath_{\calA_i}(Z_i;Y_i) \leq r\bigg\} \leq \varepsilon+2(1+4v_+)k^2d\beta+\kappa\bigg\} \\*
&\leq \sup_{\calA^n}\bigg\{ nC_{\calA^n}+\sqrt{nV_{\calA^n}}\Phi^{-1}\bigg(\varepsilon+2(1+4v_+)k^2d\beta + \kappa+\frac{6T_{\calA^n}}{\sqrt{n}(\sqrt{V_{\_}})^3}\bigg) \bigg\} \\*
&= \sup_{\calA^n:|\calA_i|\in\calP_{k},i\in[n]}\bigg\{ nC_{\calA^n}+\sqrt{nV_{\calA^n}}\Phi^{-1} \times \bigg(\varepsilon+2(1+4v_+)k^2d\beta+\kappa+\frac{6T_{\calA^n}}{\sqrt{n}(\sqrt{V_{\_}})^3}\bigg) \bigg\} + O(1) , \label{supsup-rlt}
\end{align}
where \eqref{supsup-rlt} follows from~\cite[Lemma 49]{polyanskiy2010channel}.

Recall the definitions of $C(p,t)$ in \eqref{cempty}, $V(p,t)$ in \eqref{vempty} and $C(k)$ in \eqref{ck}, respectively. Recall the capacity achieving optimizer $p^*$. It follows that
\begin{align}
\sup_{\calA^n\in([0,1]^d)^n} C_{\calA^n}&=\frac{1}{n}\sum_{i\in[n]} \mathbb{E}\big[\imath_{\calA_i}(Z_i;Y_i)\big] \\
&= \frac{1}{n}\sum_{i\in[n]} \mathbb{E}\Big[\imath^{|\calA_i|}\big(X_{[k]};Y\big)\Big] \\
&= \frac{1}{n}\sum_{i\in[n]} C(|\calA_i|,k) \\
&\leq \sup_{\calA\in[0,1]^d} C(|\calA_i|,k) \\
&= \max_{p\in(0,1)} C(p,k) \\
&= C(k) , \label{sup-ca}
\end{align}
and
\begin{align}
\sup_{\calA^n\in([0,1]^d)^n} V_{\calA^n}&=\frac{1}{n}\sum_{i\in[n]} \mathrm{Var}\big[\imath_{\calA_i}(Z_i;Y_i)\big] \\
&= \frac{1}{n}\sum_{i\in[n]} \mathrm{Var}\Big[\imath^{|\calA_i|}(X_{[k]};Y)\Big] \\
&= \frac{1}{n}\sum_{i\in[n]} V(|\calA_i|,k) \\
&= V(p^*,k). \label{sup-va}
\end{align}

Therefore, we have
\begin{align}
&-2dk\log\delta \nn\\*
&\leq \sup_{\calA^n}\bigg\{\sup\bigg\{r\in\bbR_+|\mathrm{Pr}\bigg\{\sum_{i\in[n]}\imath_{\calA_i}(Z_i;Y_i) \leq r\bigg\} \leq \varepsilon+2(1+4v_+)k^2d\beta+\kappa\bigg\}-dk\log\big(2nv_+\beta^2\big)-\log\kappa\bigg\} \label{2nd-converse-1}\\*
&\leq \sup_{\calA^n:|\calA_i|\in\calP_{k},i\in[n]}\bigg\{ nC_{\calA^n}+\sqrt{nV_{\calA^n}}\Phi^{-1}\bigg(\varepsilon+2(1+4v_+)k^2d\beta+\kappa+\frac{6T_{\calA^n}}{\sqrt{n}(\sqrt{V_{\_}})^3}\bigg)-dk\log\big(2nv_+\beta^2\big)-\log\kappa \bigg\} + O(1) \label{2nd-converse-2}\\
&\leq \sup_{\calA^n:|\calA_i|\in\calP_{k},i\in[n]}\bigg\{ nC_{\calA^n}+\sqrt{nV_{\calA^n}}\Phi^{-1}\bigg(\varepsilon+\frac{2(1+4v_+)k^2d+1}{\sqrt{n}}+\frac{6T_{\calA^n}}{\sqrt{n}(\sqrt{V_{\_}})^3}\bigg)+\frac{1}{2}\log n - dk\log(2v_+) \bigg\} + O(1) \label{2nd-converse-3}\\
&\leq \sup_{\calA^n:|\calA_i|\in\calP_{k},i\in[n]}\bigg\{ nC_{\calA^n}+\sqrt{nV_{\calA^n}}\Phi^{-1}(\varepsilon)+O(1)+\frac{1}{2}\log n - dk\log(2v_+) \bigg\} + O(1) \label{2nd-converse-4}\\
&\leq nC(k)+\sqrt{nV(k,\varepsilon)}\Phi^{-1}(\varepsilon)+\frac{1}{2}\log n - dk\log(2v_+)+O(1) \label{2nd-converse-5}\\
&= nC(k)+\sqrt{nV(k,\varepsilon)}\Phi^{-1}(\varepsilon)+O(\log n) , \label{2nd-converse-rlt}
\end{align}
where \eqref{2nd-converse-2} follows from \eqref{supsup-rlt}, \eqref{2nd-converse-4} follows from the Taylor expansion for $\Phi^{-1}(\cdot)$ at around $\varepsilon$ and \eqref{2nd-converse-5} follows from the results in \eqref{sup-ca} and \eqref{sup-va}.

The converse proof of Theorem \ref{th4} is now completed.

\section{Conclusion}

We have proposed a non-adaptive query procedure for tracking multiple moving targets and showed that this procedure theoretically achieves second-order asymptotic bounds on the achievable resolution. Our proposed query procedure uses the single threshold rule on the mutual information density, which reduces the computational complexity without compromise in performance, as compared with the algorithm proposed for multiple stationary targets~\cite{zhou2023resolution}. Furthermore, we discussed two special cases with either the initial location or the moving velocity is known. We also generalized our results to account for a piecewise constant velocity model~\cite{zhou2023resolution}. Finally, we demonstrated how the theoretical results of twenty questions estimation can be used for the practical problem of beam tracking and provided a numerical example for the application of beam tracking in 5G wireless communication. Numerical experiments in this paper have shown that when the number of queries is too small, the $O(\log n)$ approximation given in \eqref{th4-rlt} and \eqref{th4-rlt-2} of Theorem \ref{th4} is no longer valid as a lower bound, as the proposed beam tracking algorithm achieves resolution that violate the second-order asymptotic bound (blue curve in Fig. \ref{fig56}). This suggests that refining the analysis of this paper to include a third-order term in \eqref{th4-rlt} and \eqref{th4-rlt-2} would be worthwhile in the regime of a small number of queries.

There are several other avenues for future directions. Firstly, we used random coding and information spectrum decoding in the non-adaptive query procedure, which has very high complexity. In future, one can propose practical query procedures~\cite{chen2005reduced,afisiadis2014low} that achieve our derived theoretical benchmarks. Secondly, compared to the non-adaptive procedure, the performance advantage of the adaptive algorithm is superior~\cite{sun2023achievable}. It would be of interest to generalize our analyses to adaptive query procedures~\cite{chiu2019noisy} and explore the benefit of adaptivity. It is also worthwhile to design and analyze adaptive beam tracking algorithms for multiple moving targets by draw inspiration from~\cite{ronquillo2023integrated}. Finally, one can  explore other applications of twenty questions estimation, such as reconfigurable intelligent surface defective detection~\cite{zhang2024target} and delayed Doppler domain channel estimation for integrated sensing and communication~\cite{yuan2024otfs}.

\appendix 

\subsection{Proof Sketch of Prior Information on Target Velocities (Theorem \ref{caseA-2order})} 
\label{Specialization Bound 1}

\begin{algorithm}[ht]
\caption{Non-Adaptive Query Procedure for the Initial Location Search of Multiple Moving Targets with Known Velocities}
\label{alg:2} 
\begin{algorithmic}[0]  
\REQUIRE
The number of queries $n\in\bbN$, dimension $d\in\bbN$, the vector of targets' velocities $(\bv_1,\ldots,\bv_k)\in([-v_+,v_+]^d)^k$ and three design parameters $(M,p,\gamma)\in\bbN \times (0,1) \times \bbR_+$
\ENSURE
A set of estimated targets' initial locations $\hat{\calS}^k\in([0,1]^d)^k$
~\\
\hrule 
~\\
\STATE Divide the $d$-dimensional unit cube $[0,1]^d$ into $M^d$ equally sized non-overlapping sub-cubes $(\calC_1,\ldots,\calC_{M^d})$.
\STATE \textbf{\emph{Query generation and response collection:}}
\STATE Generate $n$ binary vectors $(x^n(1),\ldots,x^n(M^d))$ with length-$M^d$, where each vector is generated i.i.d. from Bern$(p)$.
\FOR{$i\in[n]$}
\STATE Form a query region $\calA_i$ as
\begin{align}
\calA_i:=\bigcup_{j\in[M^d]: x_{i,j}=1} \calC_j \nn.
\end{align}
\STATE Pose the $i$-th query about whether there are targets in the region $\calA_i$ and obtain a noisy response $y_i$ from the oracle.
\ENDFOR
\STATE Collect noisy responses $y^n=(y_1,\ldots,y_n)$.
\STATE \textbf{\emph{Decoding:}}
\IF{there exists a set of trajectories $\{\hat{\bw}_1^n,\ldots,\hat{\bw}_k^n\}\in\calB_{n,M}(\bv_1) \times \ldots \times \calB_{n,M}(\bv_k)$ such that $$\imath^{h(p)}\big(x^n(\Gamma(\hat{\bw}_1^n)),\ldots,x^n(\Gamma(\hat{\bw}_k^n)); y^n\big) \geq \gamma,$$}
\RETURN estimates $\{\hatbs_1,\ldots,\hatbs_k\}$ such that for any $t\in[k]$, $$w(\hatbs_t,\bv_t,[n])=\hat{\bw}_t^n.$$
\ELSE
\RETURN estimates $\{ \hatbs_1,\ldots,\hatbs_k \}$ such that $\hatbs_t = \mathbf{0.5}_{d \times 1}$ for each $t\in[k]$.
\ENDIF
\end{algorithmic}
\end{algorithm}

We still consider the same three excess-resolution events $\calE_1,\calE_2$ and $\calE_3$ mentioned in Section \ref{Achievable Non-Asymptotic Bound DMC}. Recall the definitions of events $\calG_1(\cdot),\calG_2(\cdot)$ in \eqref{calG1}, \eqref{calG2} and functions $\bbH_1(\cdot),\bbH_2(\cdot)$ in \eqref{calH1}, \eqref{calH2}. Given any $k$ targets' initial locations $\bs^k\in([0, 1]^d)^k$, for any query matrix $\mathbf{X}\in(\{0,1\}^n)^{M^d}$ and noisy responses $Y^n\in\calY^n$, when targets’ velocities $\bv^k\in([-v_+, v_+]^d)^k$ are known, the respective probabilities are as follows:
\begin{align}
\underset{P_{\mathbf{X}Y^n}^{\mathrm{alt}}}{\mathrm{Pr}}\big\{\calE_1\big(\bs^k,\bv^k,\mathbf{X},Y^n\big)\big\} &= \underset{P_{\mathbf{X}Y^n}^{\mathrm{alt}}}{\mathrm{Pr}}\big\{ \calG_1(n,\gamma) \big\} , \\
\underset{P_{\mathbf{X}Y^n}^{\mathrm{alt}}}{\mathrm{Pr}}\big\{\calE_2\big(\bs^k,\bv^k,\mathbf{X},Y^n\big)\big\} &\leq \bbH_2\big(M^d-1,k,\gamma\big) , \\
\underset{P_{\mathbf{X}Y^n}^{\mathrm{alt}}}{\mathrm{Pr}}\big\{\calE_3\big(\bs^k,\bv^k,\mathbf{X},Y^n\big)\big\} &\leq
\sum_{\calJ\in\Lambda([k])} \bigg(\underset{P_{\mathbf{X}Y^n}^{\mathrm{alt}}}{\mathrm{Pr}}\big\{\calG_2(n,\calJ,\lambda_{\calJ})\big\} + \bbH_2\big(M^d-1,k-|\calJ|,\gamma-nC(p,|\calJ|)-n\lambda_\calJ\big)\bigg).
\end{align}

Therefore, given $(n, k, d)\in\bbN^3$, fixed constants $(\gamma,\lambda_{\calJ})\in\bbR_+^2$ for $\calJ\in\Lambda([k])$, for any $(M, p, \eta)\in\bbN \times (0,1) \times \bbR_+$, we have the achievable non-asymptotic bound for an $\big(n,k,d,\frac{1}{M},\varepsilon\big)$-non-adaptive query procedure with known velocities under any query-dependent DMC satisfying \eqref{dmc} such that
\begin{align}
\varepsilon &\leq 4n\exp\big(-2M^d\eta^2\big) + \exp\big(n \eta Lc(h(p))\big) \times \bigg( \underset{P_{\mathbf{X}Y^n}^{\mathrm{alt}}}{\mathrm{Pr}}\big\{ \calG_1(n,\gamma) \big\} + \bbH_2\big(M^d-1,k,\gamma\big) \bigg. \nn\\ 
&\qquad\qquad\qquad\qquad \bigg.+ \sum_{\calJ\in\Lambda([k])} \bigg( \underset{P_{\mathbf{X}Y^n}^{\mathrm{alt}}}{\mathrm{Pr}}\big\{\calG_2\big(n,\calJ,\lambda_{\calJ}\big)\big\}+\bbH_2\big(M^d-1,k-|\calJ|,\gamma-nC(p,|\calJ|)-n\lambda_\calJ\big) \bigg) \bigg) ,
\end{align}
which shows that the excess-resolution probability is not affected by the maximum speed when targets' velocities are known. In other words, the multiple moving target case is equivalent to the multiple stationary targets case when targets' velocities are known because of the strong correspondence between velocity and trajectory.

Then, recall the definitions of $C(p,k)$ in \eqref{cempty} and $V(p,k)$ in \eqref{vempty}. We choose $M$ and $\gamma$ such that for some $\varepsilon^{\prime}\in(0,1)$,
\begin{align}
\log M &= \frac{1}{dk}\Big(nC(p,k)+\sqrt{nV(p,k)}\Phi^{-1}(\varepsilon^{\prime})-\frac{1}{2}\log n\Big) , \\
\gamma &= dk\log M + \frac{1}{2}\log n ,
\end{align}
and combine the same idea of data transmission over a point-to-point channel used in~\cite[Section III-C]{zhou2022resolution}. For any $(n, k, d, \varepsilon)\in\bbN^3 \times (0,1)$, we can obtain the second-order asymptotic bound as follows:
\begin{align}
-\log \delta^*(n,k,d,\varepsilon) = \frac{nC(k)+\sqrt{nV(k,\varepsilon)}\Phi^{-1}(\varepsilon)+\Theta(\log n)}{dk}.
\end{align}

\subsection{Proof Sketch for Prior Information on Target Initial Locations (Theorem \ref{caseB-2order})} 
\label{Specialization Bound 2}

\begin{algorithm}[ht]
\caption{Non-Adaptive Query Procedure for the Velocity Search of Multiple Moving Targets with Known Initial Locations}
\label{alg:3} 
\begin{algorithmic}[0]  
\REQUIRE
The number of queries $n\in\bbN$, dimension $d\in\bbN$, the vector of targets' initial locations $(\bs_1,\ldots,\bs_k)\in([0,1]^d)^k$ and three design parameters $(M,p,\gamma)\in\bbN \times (0,1) \times \bbR_+$
\ENSURE
A set of estimated targets' velocities $\hat{\calV}^k\in([-v_+,v_+]^d)^k$
~\\
\hrule 
~\\
\STATE The procedure is precisely identical to that of Algorithm \ref{alg:1}, except for the decoding phase as follows.
\STATE \textbf{\emph{Decoding:}}
\IF{there exists a set of trajectories $\{\hat{\bw}_1^n,\ldots,\hat{\bw}_k^n\}\in\calB_{n,M}(\bs_1) \times \ldots \times \calB_{n,M}(\bs_k)$ such that $$\imath^{h(p)}\big(x^n(\Gamma(\hat{\bw}_1^n)),\ldots,x^n(\Gamma(\hat{\bw}_k^n)); y^n\big) \geq \gamma,$$}
\RETURN estimates $\{\hatbv_1,\ldots,\hatbv_k\}$ such that for any $t\in[k]$, $$w(\bs_t,\hatbv_t,[n])=\hat{\bw}_t^n.$$
\ELSE
\RETURN estimates $\{ \hatbv_1,\ldots,\hatbv_k \}$ such that $\hatbv_t = \mathbf{0}_{d \times 1}$ for each $t\in[k]$.
\ENDIF
\end{algorithmic}
\end{algorithm}

Theorem \ref{caseB-2order} can also be easily proven based on Theorem \ref{caseA-2order}, except that the number of all possible trajectories has changed from $M^d$ to $((2nv_++3)n^3M)^d$ and letting
\begin{align}
\log M &= \frac{nC(p,k)+\sqrt{nV(p,k)}\Phi^{-1}(\varepsilon^{\prime})-3dk\log n -nv_+}{dk} , \\
\gamma &= dk\log n^3M + nv_+.
\end{align}

\subsection{Proof Sketch for Piecewise Constant Velocity Model (Theorem \ref{caseC-2order})} 
\label{Generalization Bound}

\begin{algorithm}[ht]
\caption{Non-Adaptive Query Procedure for Multiple Moving Target Search under a Piecewise Constant Velocity Model}
\label{alg:4}
\begin{algorithmic}[0]
\REQUIRE
Velocity change time points $\mathbf{n}=(n_1,\ldots,n_B)\in\bbN^B$, dimension $d\in\bbN$, and $B+2$ design parameters $(M,p,\gamma^B)\in\bbN \times (0,1) \times \bbR_+^B$
\ENSURE
A set of estimated trajectories $\{\hat{\bw}_1^n,\ldots,\hat{\bw}_k^n\}$ of $k$ targets with unknown initial locations and piecewise constant velocities\\
~\\
\hrule 
~\\
\STATE Call Algorithm \ref{alg:1} with parameters $(N_1,d,M,p,\gamma_1)$.
\FOR{$j\in[2:B]$}
\STATE Call Algorithm \ref{alg:3} with parameters $(N_j,d,M,p,\gamma_j)$.
\ENDFOR
\end{algorithmic}
\end{algorithm}

Consider the achievable non-asymptotic bounds for the common non-adaptive query procedure proposed in Section \ref{pf Non-Adaptive Query Procedure} and the second special non-adaptive query procedure proposed in Section \ref{sg Known Initial Locations and Unknown Velocities Case}, and recall the definitions of events $\calG_1(\cdot),\calG_2(\cdot)$ in \eqref{calG1}, \eqref{calG2} and functions $\bbH_1(\cdot),\bbH_2(\cdot)$ in \eqref{calH1}, \eqref{calH2}. Given positive integers $(\mathbf{n}, k, d)\in\bbN^{B+2}$, fixed constants $(\gamma^B,\lambda_{\calJ}^B)\in\bbR_+^{2B}$ for $\calJ\in\Lambda([k])$, for any $(M, p, v_+, \eta)\in\bbN \times (0,1) \times (0,0.5) \times \bbR_+$, there exists an $\big(\mathbf{n},k,d,\frac{B+1}{M},\varepsilon\big)$-non-adaptive query procedure under a piecewise constant velocity model and any query-dependent DMC such that
\begin{align}
\varepsilon &\leq 4N_1\exp\big(-2N_1^dM^d\eta^2\big) + \exp\big(N_1 \eta Lc(h(p))\big) \times \bigg( \underset{P_{\mathbf{X}Y^{N_1}}^{\mathrm{alt}}}{\mathrm{Pr}}\big\{ \calG_1\big(N_1,\gamma_1\big) \big\} + \bbH_1\Big(\big((2N_1v_++3)N_1^4M^2\big)^d-k,k,\gamma_1\Big) \bigg. \nn\\*
& \bigg.+ \sum_{\calJ\in\Lambda([k])} \bigg( \underset{P_{\mathbf{X}Y^{N_1}}^{\mathrm{alt}}}{\mathrm{Pr}}\Big\{ \calG_2\big(N_1,\calJ,\lambda_{\calJ,1}\big) \Big\} + \bbH_1\Big(\big((2N_1v_++3)N_1^4M^2\big)^d-k,k-|\calJ|,\gamma_1-N_1C(p,|\calJ|)-N_1\lambda_{\calJ,1}\Big) \bigg) \bigg) \nn\\*
&+\sum_{j\in[2:B]} \bigg( 4N_j\exp\big(-2N_j^dM^d\eta^2\big) + \exp\big(N_j \eta Lc(h(p))\big) \bigg. \nn\\*
&\times \bigg( \underset{P_{\mathbf{X}Y^{N_j}}^{\mathrm{alt}}}{\mathrm{Pr}}\big\{ \calG_1\big(N_j,\gamma_j\big) \big\} + \bbH_2\Big(\big((2N_jv_++3)N_j^3M\big)^d-1,k,\gamma_j\Big) + \sum_{\calJ\in\Lambda([k])} \bigg( \underset{P_{\mathbf{X}Y^{N_j}}^{\mathrm{alt}}}{\mathrm{Pr}}\Big\{ \calG_2\big(N_j,\calJ,\lambda_{\calJ,j}\big) \Big\} \bigg.\bigg. \nn\\*
&\qquad\qquad\qquad\qquad\qquad\qquad\qquad\qquad+ \bigg.\bigg.\bigg. \bbH_2\Big(\big((2N_jv_++3)N_j^3M\big)^d-1,k,\gamma_j-N_jC(p,|\calJ|)-N_j\lambda_{\calJ,j}\Big) \bigg) \bigg) \bigg). \label{common-nonasy-achi}
\end{align}
Furthermore, under the query-dependent AWGN channel for any $\sigma\in\bbR_+$ and each $j\in[B]$, terms $\exp(N_j \eta Lc(h(p)))$ in the bound \eqref{common-nonasy-achi} will be replaced by $\exp(\Xi(N_j,p,\sigma,\eta))$, and an additional term $\sum_{j\in[B]}\exp\big(-\frac{N_j(1-\log 2)}{2}\big)$ will be added.

Recall the non-asymptotic converse bounds for a common non-adaptive query procedure and a second special non-adaptive query procedure. For any $(\mathbf{n}, k, d,\varepsilon)\in\bbN^{B+2} \times (0,1)$, given any $\beta\in\big(0,\frac{1-\varepsilon}{2}\big)$ and any $\kappa\in(0,1-\varepsilon-2(1+4Bv_+)k^2d\beta)$, any $\delta(\mathbf{n},k,d,\varepsilon)$-non-adaptive query procedure under a piecewise constant velocity model satisfies
\begin{align}
&-\log\delta \leq \sup_{\calA^{n_B}\in([0,1]^d)^{n_B}}\frac{1}{(B+1)dk}\bigg\{\sup\bigg\{r\in\bbR_+|\mathrm{Pr}\bigg\{\sum_{i\in[n_B]}\imath_{\calA_i}(Z_i;Y_i) \leq r\bigg\} \leq \varepsilon+2(1+4Bv_+)k^2d\beta+\kappa\bigg\} \bigg. \nn\\
&\qquad\qquad\qquad\qquad\qquad\qquad\qquad\qquad\qquad\bigg. -(B+1)dk\log\beta - dk\log\big(2N_1v_+\big) - dk\sum_{j\in[2:B]}\log\big(2N_jv_+\big)-\log\kappa\bigg\}.
\end{align}

Recall the proof of Theorem \ref{th4} in Section \ref{Achievable Second-Order Asymptotic Bound} and Section \ref{Second-Order Asymptotic Converse Bound}. By choosing $M\in\bbN$ and $\gamma^B\in\bbR_+^B$ for any $j\in[2:B]$ such that
\begin{align}
\log M_1 &= \frac{N_1C(p,k)+\sqrt{N_1V(p,k)}\Phi^{-1}(\varepsilon^{\prime})-4dk\log N_1 - N_1v_+}{2dk} , \\
\log M_j &= \frac{N_jC(p,k)+\sqrt{N_jV(p,k)}\Phi^{-1}(\varepsilon^{\prime})-3dk\log N_j - N_jv_+}{dk} , \\
dk\log M &= \min\Big\{dk\log M_1,\min_{j\in[2:B]}dk\log M_j\Big\} , \\
\gamma_1 &= dk\log N_1^4M_1^2 + N_1v_+ , \\
\gamma_j &= dk\log N_j^3M_j + N_jv_+ .
\end{align}
For any $(\mathbf{n}, k, d, \varepsilon)\in\bbN^{B+2} \times (0,1)$, we can obtain the achievable second-order asymptotic bound under a piecewise constant velocity model as follows:
\begin{align}
-dk\log \delta^*(\mathbf{n},k,d,\varepsilon) &\geq \max_{\substack{(\varepsilon_1,\ldots,\varepsilon_B) \\ \sum_{j\in[B]}\varepsilon_j \leq \varepsilon}} \min\bigg\{ \frac{N_1C(k)+\sqrt{N_1V(k,\varepsilon_1)}\Phi^{-1}(\varepsilon_1)-4dk\log N_1-N_1v_+ + O(1)}{2}, \bigg. \nn\\*
&\quad\bigg. \min_{j\in[2:B]} \bigg\{N_jC(k)+\sqrt{N_jV(k,\varepsilon_j)}\Phi^{-1}(\varepsilon_j)-3dk\log N_j - N_jv_+ + O(1)\bigg\} \bigg\} - dk\log(B+1).
\end{align}

Consider the second-order converse derivation, for any queries $\calA^{n_B}=(\calA_1,\ldots,\calA_{n_B})\in([0,1]^d)^{n_B}$, let
\begin{align}
C_{\calA^{n_B}}&:=\frac{1}{n_B}\sum_{i\in[n_B]} \mathbb{E}\big[\imath_{\calA_i}(Z_i;Y_i)\big] , \\
V_{\calA^{n_B}}&:=\frac{1}{n_B}\sum_{i\in[n_B]} \mathrm{Var}\big[\imath_{\calA_i}(Z_i;Y_i)\big] , \\
T_{\calA^{n_B}}&:=\frac{1}{n_B}\sum_{i\in[n_B]} \mathbb{E}\Big[\big|\imath_{\calA_i}(Z_i;Y_i)-\mathbb{E}[\imath_{\calA_i}(Z_i;Y_i)]\big|^3\Big] ,
\end{align}
and let $V_{\_} \in \bbR_+$ be chosen such that $V_{\_} \leq V_{\calA^{n_B}}$. Using the Berry-Esseen theorem, we have that
\begin{align}
\mathrm{Pr}\bigg\{\sum_{i\in[n_B]}\imath_{\calA_i}(Z_i;Y_i) \leq r\bigg\} \geq \Phi\Big(\frac{r-n_BC_{\calA^{n_B}}}{\sqrt{n_BV_{\calA^{n_B}}}}\Big)-\frac{6T_{\calA^{n_B}}}{\sqrt{n_B}\big(\sqrt{V_{\_}}\big)^3}.
\end{align}
Thus, for any $\varepsilon\in(0,1)$,
\begin{align}
\sup\bigg\{r\in\bbR_+:\mathrm{Pr}\bigg\{\sum_{i\in[n_B]}\imath_{\calA_i}(Z_i;Y_i) \leq r\bigg\} \leq \varepsilon\bigg\} \leq n_BC_{\calA^{n_B}}+\sqrt{n_BV_{\calA^{n_B}}}\Phi^{-1}\bigg(\varepsilon+\frac{6T_{\calA^{n_B}}}{\sqrt{n_B}\big(\sqrt{V_{\_}}\big)^3}\bigg).
\end{align}

Choosing $\beta=\frac{1}{\sqrt{n_B}}$ and $\kappa=\frac{1}{\sqrt{n_B}}$, we can obtain the second-order asymptotic converse bound under a piecewise constant velocity model as follows:
\begin{align}
-(B+1)dk\log \delta^*(\mathbf{n},k,d,\varepsilon) \leq n_BC(k)+\sqrt{n_BV(k,\varepsilon)}\Phi^{-1}(\varepsilon)+O(\log n_B).
\end{align}

\bibliographystyle{IEEEtran}
\bibliography{IEEEabrv,IEEE_cs}
\end{document}